\let\newfloat\newfloat@ltx
\DeclarePairedDelimiter\floor{\lfloor}{\rfloor}
\newcommand{\spanv}[1]{\text{Span}\lbkt{#1}}
\newcommand{\bigO}[1]{\mathcal{O}\sbkt{#1}}
\newcommand{\im}[1]{\mathrm{im}\ #1}
\newcommand{\mc}{\mathcal}
\newcommand{\mbb}{\mathbb}
\newcommand{\mr}{\mathrm}
\newcommand{\sbkt}[1]{\left(#1\right)}
\newcommand{\mbkt}[1]{\left[#1\right]}
\newcommand{\lbkt}[1]{\left\{#1\right\}}
\newtheorem{theorem}{Theorem}
\newtheorem{lemma}{Lemma}
\newtheorem{definition}{Definition}
\newtheorem{corollary}{Corollary}
\newtheorem*{theorem*}{Theorem}
\newtheorem*{corollary*}{Corollary}
\newtheorem{example}{\textbf{Example}}
\newtheorem{remark}{Remark}
\newcommand\nonpfrate[1]{\gamma_{X, Y}}
\newcommand*{\rom}[1]{\expandafter\@slowromancap\romannumeral #1@}
\let\oldproof\proof
\renewcommand{\proof}{\oldproof}
\def\algbackskip{\hskip-\ALG@thistlm}
\begin{document}

\preprint{APS/123-QED}

\title{High-Rate Surgery: towards constant-overhead logical operations}
\author{Guo Zheng} 
\email{guozheng@uchicago.edu}
\affiliation{Pritzker School of Molecular Engineering, The University of Chicago, Chicago 60637, USA}

\author{Liang Jiang}
\email{liangjiang@uchicago.edu}
\affiliation{Pritzker School of Molecular Engineering, The University of Chicago, Chicago 60637, USA}
\author{Qian Xu}
\email{qianxu@caltech.edu}
\affiliation{Institute for Quantum Information and Matter, Caltech, Pasadena, CA 91125, USA}
\affiliation{Walter Burke Institute for Theoretical Physics, Caltech, Pasadena, CA 91125, USA}

\date{\today}

\begin{abstract}
Scalable quantum computation requires not only quantum codes with low memory overhead but also encoded operations with low space–time overhead. High-rate quantum low-density parity-check (qLDPC) codes address the former by achieving a high information-encoding rate, yet existing methods for implementing logical operations often suffer from a low \emph{information-processing rate}, leading to substantial space–time costs.
Here, we introduce \emph{high-rate surgery}, a general scheme that can perform extensive, addressable logical Pauli-product measurements in parallel on arbitrary qLDPC codes using a shared ancilla system, attaining nearly constant space–time overhead. 
We develop both algebraic and randomized ancilla constructions and demonstrate, using the $[[144, 12, 12]]$ Gross code and new instances of qLDPC codes (e.g., $[[1125, 245, \leq 10]]$) with encoding rate up to $25\%$, that up to \emph{hundreds} of randomly sampled logical measurements can be executed simultaneously with a total space-time overhead within a factor of two of that of memory experiments.
Our results address a major bottleneck for performing complex, addressable logical operations on qLDPC codes in practice, advancing the prospect of scalable, constant-overhead fault-tolerant quantum computation.

\end{abstract}

\maketitle


\section{Introduction}

\begin{figure*}
    \centering
    \includegraphics[width = 0.9 \textwidth]{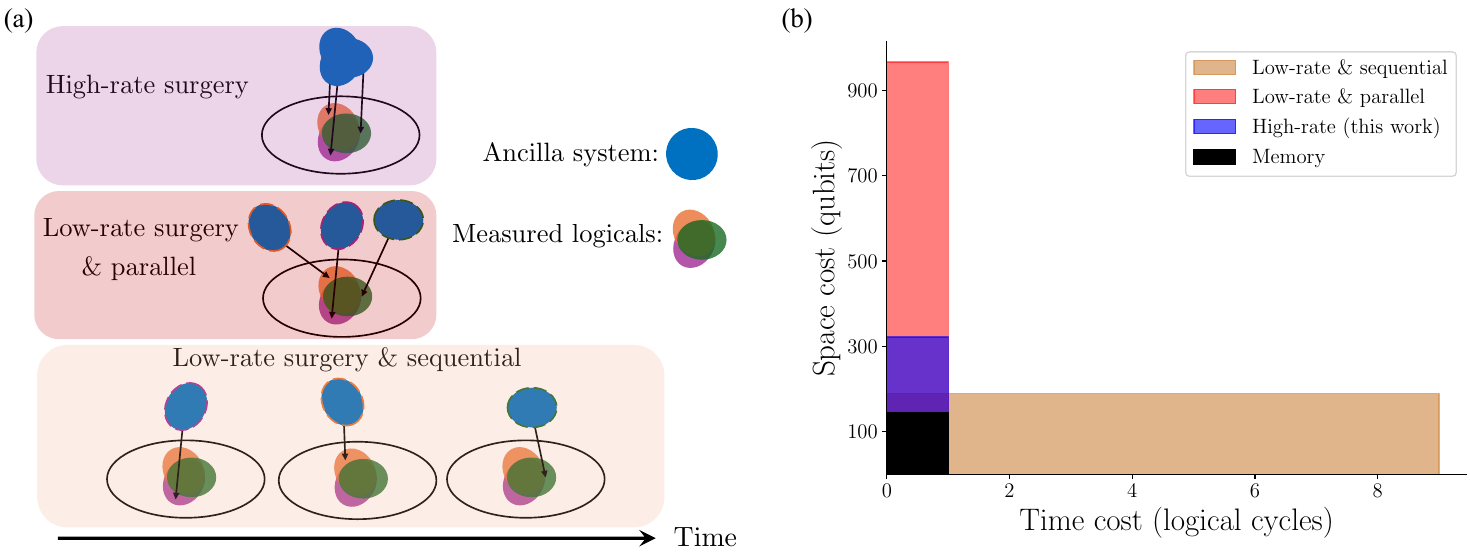}
     \caption{\textbf{Illustration of high-rate versus low-rate surgery schemes}. (a) Comparison of low-rate (sequential~\cite{williamson2024lowoverheadfaulttolerantquantumcomputation,Cohen2022LDPCSurgery} and parallel~\cite{Cowtan2025ParallelSurgery,Guo2025TimeEfficient}) and high-rate surgery scheme (this work). The horizontal axis represent time axis. 
     Blue circles denote ancilla systems that couple to multiple target logical operators, whose supports are shown by the green, orange, and pink circles, possibly with overlaps.
     (b) The space (vertical axis) and time (horizontal) costs of surgery schemes on Gross code~\cite{Bravyi2024Nature}, $[[144, 12, 12]]$, for measuring nine randomly chosen logicals, compared against memory. Space costs are measured in unit of physical qubits. Time costs are in units of logical cycles.
     }
    \label{fig:illustrate}
\end{figure*}

Quantum error correction (QEC) is a cornerstone of large-scale, fault-tolerant quantum computation~\cite{Shor1995QEC,Gottesman1997Stabilizer,Terhal2015QECReview,Kitaev2003Topological}.
Much of the literature has focused on reducing the space-time overhead of quantum memories—that is, the cost of storing quantum information reliably over time.
However, beyond protecting idle qubits, practical quantum computers should also support logical operations with comparably low overhead~\cite{Litinski2019Game}.

Quantum low-density parity-check (qLDPC) codes~\cite{Breuckmann2021QLDPCPerspective} are quantum error-correcting codes with sparse parity-check matrices.
Though nonlocal, they are ideal for low-overhead quantum memories and form the basis of many recent experiments and proposals~\cite{wang2025demonstrationlowoverheadquantumerror, yoder2025tourgrossmodularquantum,Bravyi2024Nature,Xu_2024_NaturePhysics, andersen2025smallquantumlowparity, Pecorari_2025}.
Nevertheless, it remains open whether they retain this overhead advantage for processing quantum information.
Well-known approaches to fault-tolerant logical operations include transversal (and related) gates~\cite{Breuckmann2021QLDPCPerspective} and automorphism gates~\cite{Breuckmann2024FoldTransversal, Sayginel2025Automorphisms, eberhardt2024logicaloperatorsfoldtransversalgates, Quintavalle_2023}, but these are often tailored to specific code families and lack full addressability.
Originating in lattice surgery for topological codes~\cite{Cohen2022LDPCSurgery, Horsman_2012, Fowler_2012, landahl2014quantumcomputingcolorcodelattice, Litinski2019Game}, code surgery~\cite{Cross2024ImprovedSurgery, Guo2025TimeEfficient, cowtan2025homologyhopfalgebrasquantum, Cowtan2025ParallelSurgery, he2025extractorsqldpcarchitecturesefficient, cowtan2024ssipautomatedsurgeryquantum, Cowtan_2024, cowtan2022quditlatticesurgery, williamson2024lowoverheadfaulttolerantquantumcomputation, Ide_2025,swaroop2025universaladaptersquantumldpc} is one of the few schemes available for arbitrary qLDPC codes.
It enables universal computation via Pauli-product measurements (PPMs)~\cite{Litinski2019Game, PhysRevX.6.021043} combined with magic-state injection~\cite{PhysRevA.71.022316}.

Inspired by techniques from quantum weight reduction~\cite{hastings2016weightreductionquantumcodes, hastings2023quantumweightreduction, Sabo_2024}, there has been tremendous progress in reducing the overhead of code surgery.
Nevertheless, most work has focused on measuring a single logical operator.
Such an assumption fundamentally limits the scalability of surgery schemes for dense computation.
From a space-time perspective, these protocols adopt ancilla systems that are inherently low rate, which limits the information-processing density—how many logical operations are performed per physical operation.
Asymptotically, this can negate the benefits of high-rate quantum memory.

In this work, we propose a framework for high-rate surgery that significantly lowers overhead; see Fig.~\ref{fig:illustrate}(a) for a comparative illustration.
The key idea is to employ high-rate ancilla patches for information processing.
We formalize the logical action of such surgeries in a homological framework~\cite{Ide_2025, cowtan2025homologyhopfalgebrasquantum, Xu2025Homomorphic,Huang2023Homomorphic} and prove sufficient conditions for fault tolerance.
We further give explicit constructions of high-rate ancillas that are distance preserving.
When the memory possesses suitable structure, our ancilla construction achieves amortized constant overhead for up to $\bigO{k}$ logical operations.
Hypergraph-product (HGP) codes~\cite{Tillich2014HGP} are examples of such structured codes.

In the absence of code structure, we demonstrate the power of high-rate surgery with a randomized ancilla-construction algorithm that is high rate and introduces only minimal increases in connectivity.
We benchmark its performance on Gross codes~\cite{yoder2025tourgrossmodularquantum,Bravyi2024Nature} and spatially coupled (SC) codes~\cite{Yang2025SCQLDPC}, using a randomly sampled logical basis and measurements.
For the latter, we present new code families with encoding rates up to $25\%$ and block lengths up to $1125$ qubits.
On these practical instances, high-rate surgery achieves nearly constant space-time overhead across a range of code distances and sizes.
The constant factor is modest—within a factor of two of the memory baseline (the space-time cost of one syndrome-extraction cycle on an idling system).
Figure~\ref{fig:illustrate}(b) compares space-time costs across schemes.
In contrast to low-rate approaches, high-rate surgery has spatial and temporal footprints comparable to memory.
Our results suggest that logical operations can be made nearly as cheap as memory, marking a step toward constant-overhead, fault-tolerant quantum computation.

\section{Results}

The space overhead of a quantum memory with code parameters $[[n,k,d]]$ is captured by its \emph{encoding rate}, $r := k/n$.
High-rate qLDPC codes cap this overhead by achieving a \emph{constant encoding rate}, $r=\Theta(1)$, whereas in low-rate codes $r \to 0$ as $n \to \infty$.

When computing on encoded logical qubits, one must also account for space–time trade-offs. Thus, the total cost of computation is best quantified by the \emph{space--time overhead}.
For a circuit $C$ with width $W(C)$ and depth $D(C)$, define its space--time cost as $S(C) := W(C) \times D(C)$.
An encoded circuit $\overline{C}$ implementing the same logical operations has a corresponding cost $S(\overline{C})$.
The \emph{space--time overhead}, $S(\overline{C})/S(C)$, quantifies the additional resources—extra physical qubits for encoding redundancy and additional circuit depth for fault-tolerant logical operations—needed to perform the computation reliably in the presence of noise.

In typical fault-tolerant protocols, each elementary logical operation consists of a \emph{logical cycle}, comprising multiple rounds of syndrome extraction to reliably infer syndromes and remove entropy from the system.
Although complementary works explore ways to shorten the logical cycle~\cite{Campbell_2019, Quintavalle_2023, zhou2025lowoverheadtransversalfaulttolerance, PhysRevLett.133.240602, baspin2025fastsurgeryquantumldpc, xu2025batchedhighratelogicaloperations}, we assume that for a distance-$d$ code each logical cycle takes time $t_L=\Theta(d)$ unless stated otherwise.
We then define the \emph{normalized space--time overhead} as
\begin{equation}\label{eq:spacetime_overhead}
    \alpha := \frac{1}{t_L}\,\frac{S(\overline{C})}{S(C)},
\end{equation}
which measures overhead in units of (physical qubits) $\times$ (logical cycles).

As a baseline, an idling memory operation—one code block undergoing a single logical cycle—has normalized overhead equal to the inverse of the encoding rate, $\alpha=\Theta(1/r)$.
For constant-rate qLDPC codes, this baseline overhead is $\Theta\sbkt{1}$.
Our goal is to design logical operations whose overhead approaches this memory limit.\\

\begin{figure}[t!]
    \centering
    \includegraphics[width = 0.5 \textwidth]{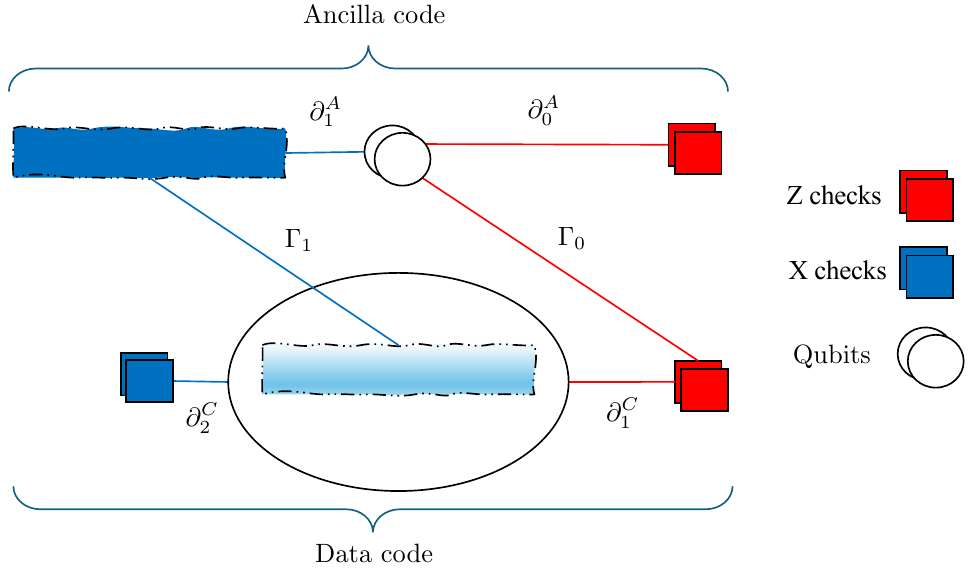}
     \caption{\textbf{Tanner graph illustration of a general surgery scheme}. Checks (resp. qubits) are represented by squares (resp. circles)~\cite{williamson2024lowoverheadfaulttolerantquantumcomputation}. The shown figure presents the ancilla system merged with the data system to measure X-logical operators, according to Eq.~\eqref{eq:cone_code}. 
     At the end, the ancilla qubits are transversally measured in Z-basis. The example illustrates a transversal-type $\Gamma_1$ such that ancilla X checks and the measured operators' support (indicated by the box with a color gradient) are of similar sizes.
     }
    \label{fig:illustrate_merge}
\end{figure}

\begin{table*}
    \centering
    \begin{tabular}{c c c c }
        \hline \hline
        Operation & Scheme & Space overhead & Time overhead\\
         \hline 
         Logical measurement & Low-rate surgery, sequantial & $\Theta\sbkt{1}$ & $\Theta\sbkt{k}$ \\
        Logical measurement & Low-rate surgery, parallel & $\tilde{\Theta}\sbkt{d}^\dagger$ & $\Theta\sbkt{1}$\\
        Logical measurement &High-rate surgery & $\Theta\sbkt{1}^\ast$& $\Theta\sbkt{1}$\\
        Memory & Syndrome extraction & $\Theta\sbkt{1}$& $\Theta\sbkt{1}$\\
         \hline \hline 
    \end{tabular}
    \caption{\textbf{Summary of asymptotical spacetime overheads of various surgery schemes}, compared against memory, for a $[[n, k, d]]$ qLDPC code. Space overhead is normalized by unencoded circuit cost, and time overhead is normalized by the length of a logical cycle, $\Theta(d)$. For simplicity, assume $\Theta\sbkt{k}$ logicals are measured, and the memory is constant encoding rate. $\Theta\sbkt{1}^\ast$ assumes Theorem~\ref{theorem:algebraic_surgery}. $\tilde{\Theta}\sbkt{d}^\dagger$ assumes minimum-weight basis is available.}
    \label{tab:scheme_overview}
\end{table*}

\noindent
\textbf{Code surgery framework.}
In what follows, we consider logical circuits built from logical Pauli-product measurements (PPMs), which, together with magic-state injection, realize universal Pauli-based computation~\cite{Litinski2019Game, PhysRevA.71.022316}. For concreteness, we focus on $X$-type PPMs for a Calderbank–Shor–Steane (CSS) code~\cite{PhysRevA.54.1098, PhysRevLett.77.793}; measurements of $Z$-type logicals follow symmetrically. Simultaneous measurement of mixed-type Paulis is discussed in Methods.

CSS codes are quantum codes whose parity-check matrices admit a generating basis split into $X$- and $Z$-type checks, $H_X$ and $H_Z$, respectively. Their commutation relations can be expressed with a $2$-chain complex $\mc{C}$:
\begin{equation}
        \mc{C}: \begin{tikzcd}[column sep=large, row sep=large]
        {C_2} & {C_1} & {C_0}
        \arrow["{\partial^C_2}", from=1-1, to=1-2]
        \arrow["{\partial^C_1}", from=1-2, to=1-3]
\end{tikzcd},
\end{equation}
with boundary maps $\partial_2^C = H_X^T$ and $\partial_1^C = H_Z$ satisfying $\partial^C_1 \partial^C_2 = 0 \bmod 2$. The spaces $C_l$, for $l=2,1,0$, are $\mbb{F}_2$-vector spaces whose bases are associated with $X$-checks, qubits, and $Z$-checks, respectively.

To perform a surgery logical operation on a data code $\mc{C}$, we attach an ancilla system
\begin{equation}
    \mc{A}: \begin{tikzcd}[column sep=large, row sep=large]
        {A_1} & {A_0} & {A_{-1}}
        \arrow["{\partial^A_1}", from=1-1, to=1-2]
        \arrow["{\partial^A_0}", from=1-2, to=1-3]
\end{tikzcd},
\end{equation}
where $A_l$ with $l=1,0,-1$ are associated with the $X$ checks, qubits, and $Z$ checks of $\mc{A}$, respectively.
We attach $\mc{A}$ to $\mc{C}$ and aim to extract certain logical information of $\mc{C}$ through measurements of the $X$ checks $A_1$ on $\mc{A}$.
The “attaching” procedure, and the resulting joint data-ancilla code, are described by the following complex~\cite{Ide_2025}:
\begin{definition}[Code surgery]
\label{def:code_surgery}
For a data code $\mc{C}$ and an ancilla system $\mc{A}$, code surgery is uniquely defined by the complex
    \begin{eqnarray}\label{eq:cone_code}
        \begin{tikzcd}[column sep=large, row sep=large]
        {A_1} & {A_0} & {A_{-1}} \\
        {C_2} & {C_1} & {C_0}
        \arrow["{\partial^A_1}", from=1-1, to=1-2]
        \arrow["{\partial^A_0}", from=1-2, to=1-3]
        \arrow["{\partial^C_2}", from=2-1, to=2-2]
        \arrow["{\partial^C_1}", from=2-2, to=2-3]
        \arrow["{\Gamma_1}", from=1-1, to=2-2]
        \arrow["{\Gamma_0}", from=1-2, to=2-3]
\end{tikzcd}
\end{eqnarray}
where the diagram commutes, i.e., $\Gamma_0 \partial^A_1 = \partial^C_1 \Gamma_1$.
Here $\Gamma_1$ (respectively $\Gamma_0$) describes how the $X$ checks $A_1$ of $\mc{A}$ (respectively the $Z$ checks $C_0$ of $\mc{C}$) are connected to the qubits $C_1$ of $\mc{C}$ (respectively $A_0$ of $\mc{A}$).
\end{definition}
See also Figure~\ref{fig:illustrate_merge} for a Tanner-graph illustration of the merged code.
After one logical cycle on the merged code in Eq.~\eqref{eq:cone_code}, during which the logical information of $\mc{C}$ is reliably extracted, we split the two codes and return to the codespace of $\mc{C}$ by transversally measuring out the ancilla $\mc{A}$.
While we omit the details of the merge-and-split procedure here, it is crucial to initialize and measure $\mc{A}$ in the correct bases to ensure fault tolerance.
The required procedure has been established in prior works~\cite{Horsman_2012,Vuillot_2019,Cohen2022LDPCSurgery,williamson2024lowoverheadfaulttolerantquantumcomputation}, and a complete description is provided in Methods.
We also refer interested readers to Ref.~\cite{he2025extractorsqldpcarchitecturesefficient} for a review of the history and development of surgery techniques.

The essential idea of the surgery procedure is weight reduction: global logical $X$ operators in the data code are decomposed into products of local stabilizers from the $X$ checks $A_1$ of the merged code, whose values are reliably inferred during a logical cycle.
This merge-and-split weight-reduction process is also known in the condensed-matter community as gauging and ungauging symmetries~\cite{PhysRevB.94.155128, kubica2018ungaugingquantumerrorcorrectingcodes}.
More concretely, the logical information extracted by surgery is given by the product of stabilizers in $A_1$ with no support on the ancilla qubits $A_0$~\cite{Ide_2025}:
\begin{lemma}[Information extracted through code surgery]\label{lem:log_action}
    For a surgery procedure in Def.~\ref{def:code_surgery}, the space of all measured operators of $\mc{C}$ is
    \begin{eqnarray}
            M = \Gamma_1\sbkt{\ker{\partial_1^A}} \subseteq S_X^C \oplus L_X^C, \label{eq:log_measurement} 
    \end{eqnarray}
    where $S_X^C$ and $L_X^C$ denote the spaces of the $X$ checks and the $X$ logical operators of $\mc{C}$, respectively.
\end{lemma}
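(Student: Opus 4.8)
The plan is to unfold Definition~\ref{def:code_surgery} into an explicit chain complex and then read off which $X$-stabilizers of the merged code survive the split onto the data qubits. Let $\mc{D}$ denote the joint data--ancilla code of Eq.~\eqref{eq:cone_code}: its degree-$2$, $1$, $0$ spaces are $D_2 = A_1 \oplus C_2$ ($X$ checks), $D_1 = A_0 \oplus C_1$ (qubits), and $D_0 = A_{-1} \oplus C_0$ ($Z$ checks), with block boundary maps
\begin{equation*}
\partial^D_2 = \begin{pmatrix}\partial^A_1 & 0 \\ \Gamma_1 & \partial^C_2\end{pmatrix}, \quad \partial^D_1 = \begin{pmatrix}\partial^A_0 & 0 \\ \Gamma_0 & \partial^C_1\end{pmatrix}.
\end{equation*}
The diagonal blocks of $\partial^D_1 \partial^D_2$ are $\partial^A_0\partial^A_1$ and $\partial^C_1\partial^C_2$, which vanish because $\mc{A}$ and $\mc{C}$ are complexes, and the lower-left block is $\Gamma_0\partial^A_1 + \partial^C_1\Gamma_1$, which vanishes $\bmod 2$ by the commutativity condition of Def.~\ref{def:code_surgery}; hence $\mc{D}$ is a valid CSS code.

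Next I would invoke the merge-and-split bookkeeping (established in Refs.~\cite{Horsman_2012,Vuillot_2019,Cohen2022LDPCSurgery,williamson2024lowoverheadfaulttolerantquantumcomputation,Ide_2025} and recalled in Methods): after initializing $\mc{A}$ in the appropriate bases, running a logical cycle on $\mc{D}$, and splitting by a transversal $Z$-measurement of the ancilla qubits $A_0$, the surgery fixes the eigenvalues of exactly those $X$-type data operators obtained by restricting to $C_1$ the merged-code $X$-stabilizers that have no support on $A_0$. Concretely, for $(a,c) \in A_1 \oplus C_2$ one computes $\partial^D_2\sbkt{a,c} = \sbkt{\partial^A_1 a,\ \Gamma_1 a + \partial^C_2 c}$, whose $A_0$-component vanishes iff $a \in \ker \partial^A_1$, in which case its $C_1$-component is $\Gamma_1 a + \partial^C_2 c$. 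Letting $(a,c)$ range over all such pairs, the measured space is $\Gamma_1\sbkt{\ker \partial^A_1} + \im{\partial^C_2}$. Since $\im{\partial^C_2} = S_X^C$ is the data $X$-stabilizer group, whose elements are measured deterministically and carry no logical information, the nontrivial content is $M = \Gamma_1\sbkt{\ker \partial^A_1}$, which is the content of Eq.~\eqref{eq:log_measurement}.

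It remains to verify the stated inclusion. For $a \in \ker \partial^A_1$, the commutativity relation gives $\partial^C_1 \Gamma_1 a = \Gamma_0 \partial^A_1 a = 0$, so $\Gamma_1 a \in \ker \partial^C_1$, i.e.\ it is a valid $X$-operator lying in the normalizer of $\mc{C}$. That space decomposes (non-canonically) as the $X$-stabilizer space $S_X^C$ plus a complement $L_X^C$ of logical representatives, giving $M \subseteq S_X^C \oplus L_X^C$ as asserted.

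I expect the main obstacle to be the second step: justifying that the merge--measure--split sequence fixes precisely the eigenvalues of the ancilla-free merged $X$-stabilizers---no more, no less---and does so fault-tolerantly. This requires tracking the Heisenberg evolution of the data stabilizers and logicals through the merge (which introduces the new checks from $A_1$ together with the $\Gamma$-couplings), through the logical cycle, and through the transversal split, and checking that the initialization and measurement bases are chosen correctly; this is the genuinely operational part, and it is why the paper defers it to Methods and prior work. By contrast, Step~1 and the $\mbb{F}_2$-linear algebra of the last two paragraphs are routine. One subtlety worth flagging in the write-up is that the equality in Eq.~\eqref{eq:log_measurement} should be read modulo the ever-present data stabilizers $S_X^C$ (equivalently, at the level of logical classes in $\ker\partial^C_1 / S_X^C$), since $S_X^C \subseteq \Gamma_1\sbkt{\ker\partial^A_1}$ need not hold on the nose.
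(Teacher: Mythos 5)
Your proposal is correct and follows essentially the same route as the paper: both identify the measured operators as the products of ancilla $X$-checks with vanishing support on $A_0$ (since the ancilla qubits are prepared and read out in the $Z$ basis), i.e.\ the image under $\Gamma_1$ of $\ker\partial_1^A$, and both use the commutativity $\Gamma_0\partial_1^A=\partial_1^C\Gamma_1$ to place this image inside $\ker\partial_1^C = S_X^C\oplus L_X^C$, deferring the operational merge--split bookkeeping to the Methods and prior work. The only cosmetic difference is that the paper's appendix repackages $\ker\partial_1^A$ as $\im\partial_2^A$ via an exact metacheck layer $A_2$ and a lifted commuting diagram, whereas you work directly with the kernel and the block boundary maps of the merged complex; your closing caveat that the equality should be read modulo $S_X^C$ is a fair and worthwhile clarification.
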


As implied by Lemma~\ref{lem:log_action}, the number of bits of logical information extracted is $\dim M$, which is closely tied to the kernel dimension of $\partial^A_1$.
Formally, the information-extraction efficiency is measured by the information-extraction rate:
\begin{definition}[Information extraction rate (IER)]
Consider a surgery procedure with space of measured operators $M$ in Lemma~\ref{lem:log_action} using an ancilla system $\mc{A}$. The information-extraction rate is
    \begin{equation}
        r_M := \dim\sbkt{M}/\abs{\mc{A}},
    \end{equation}
    where $\abs{\mc{A}} = \sum_{i=-1}^1 \dim A_i$ is the total size of the ancilla system (including qubits and checks).
\end{definition}

We now show that, just as the encoding rate $r$ of a code governs the space overhead of a quantum memory, the IER $r_M$ controls the space–time overhead of a surgery-based quantum processor.
Consider a shallow yet dense circuit executing $t=\Theta\sbkt{k}$ PPMs on $k$ qubits, where each qubit participates in $\mc{O}(1)$ PPMs. This implies each PPM involves only $\mc{O}(1)$ qubits.
Such a pattern corresponds to parallel computation on all $\Theta\sbkt{k}$ logical qubits, for example injecting magic states on all logical qubits in parallel, which in an unencoded setting can be executed in depth $\mc{O}(1)$.
The spacetime overhead associated with surgery measurements using an ancilla system $\mc{A}$ with fixed IER $r_M$ is~\footnote{The spacetime overhead of the full scheme is $
    \alpha = \frac{|\mathcal{C}| + |\mathcal{A}|}{k + t}\times \frac{t}{r_M |\mathcal{A}|}= \bigO{1/r_M}$.}
\begin{equation}
    \alpha
    = \underbrace{\frac{|\mc{A}|}{k}}_{\text{space}}
      \times
      \underbrace{\frac{t}{r_M\,|\mc{A}|}}_{\text{time}}
    = \Theta\!\left(\frac{1}{r_M}\right),
    \label{eq:surgery_IER}
\end{equation}
where the space factor $\tfrac{|\mc{A}|}{k}$ compares the total ancilla size to the space cost of the unencoded circuit, and the time factor reflects measuring $t$ PPMs in $\tfrac{t}{r_M|\mc{A}|}$ logical cycles, each cycle completing $r_M|\mc{A}|$ PPMs using $\mc{A}$ once.

The above analysis admits space–time trade-offs.
In a \emph{sequential} surgery scheme with $r_M|\mc{A}|\ll t$, the space cost can be small because one can take $|\mc{A}|=\mc{O}(k)$, but the time cost is large due to sequential execution (see the yellow box in Fig.~\ref{fig:illustrate}(a)).
In a \emph{parallel} surgery scheme with $r_M|\mc{A}|=\Theta(t)$, the PPMs complete in parallel with small time cost, yet the space cost can become large with $|\mc{A}|=\Theta(t)/r_M$ when $r_M$ is low (see the pink box in Fig.~\ref{fig:illustrate}(a)).
In all cases, the overall space–time overhead is characterized by the IER $r_M$.

Most prior surgery schemes are low rate: they operate under the restriction that the ancilla system consists of one or more disjoint subsystems each with $\dim\ker\partial_1^A=1$ and $|\mc{A}|=\tilde{\Omega}\sbkt{d}$,
which yields IER $r_M=\tilde{\mc{O}}(1/d)$.
Consequently, they incur large overhead in either space or time, depending on whether the implementation is parallel or sequential (see Fig.~\ref{fig:illustrate}(b)).


Therefore, in this work we introduce \emph{high-rate} surgery schemes.
In particular, we design ancilla systems that extract $\dim(M)=\Theta(k)$ bits of information collectively, using a joint ancilla of bounded size $|\mc{A}|=\mc{O}(n)$ (see the purple box in Fig.~\ref{fig:illustrate}), which for constant-rate codes implies a constant IER $r_M=\Theta(1)$ and thus low space-time overhead by the analysis above (see Fig.~\ref{fig:illustrate}(b)). We note that in the numerical analysis of the space-time overhead in Fig.~\ref{fig:illustrate}(b) and Fig.~\ref{fig:random_algo_results}, we also include the memory overhead in Eq.~\eqref{eq:surgery_IER}, which significantly increases the overall cost of sequential schemes but only marginally impact the parallel schemes.

A code-surgery procedure is fault-tolerant if (i) the boundary and chain maps in Eq.~\eqref{eq:cone_code} are LDPC and (ii) the phenomenological distance of the procedure is $\Omega(d)$.
The first condition guarantees low-degree circuit connectivity, limiting error propagation under circuit-level noise; the second ensures sufficient error protection throughout surgery.
Together these conditions imply the existence of thresholds for surgery-based computation~\cite{gottesman2013fault}.

Past work shows that the phenomenological distance is preserved if and only if the merged-code distance is preserved~\cite{Cross2024ImprovedSurgery,williamson2024lowoverheadfaulttolerantquantumcomputation}.
Although the merged code in Eq.~\eqref{eq:cone_code} is a stabilizer code with commuting checks, we treat it as a subsystem code by identifying the new logical operators in the ancilla—logical $Z$ and $X$ associated with $H_0(\mc{A})$ and $H^0(\mc{A})$—as gauge operators, since these degrees of freedom do not carry logical information.
We refer to the dressed distance of this subsystem code as the merged-code distance $d_{\mr{merge}}$ for brevity.

The low-rate surgery schemes satisfy these two fault-tolerance conditions by forcing $\mc{A}$ to be a graph and leveraging graph-theoretic tools~\cite{Freedman_2021}.
However, high-rate schemes inevitably require $\mc{A}$ to be a hypergraph, which creates new challenges for fault tolerance.
For example, random-expander arguments no longer apply, and the gauge logicals of $\mc{A}$ cannot be promoted to stabilizers.
We therefore introduce the following lemma as a sufficient condition for preserving the merged-code distance, using the “soundness” of a linear map~\cite{Campbell_2019}.
We say a boundary $\partial$ has soundness $\rho$ up to $t$ if
$\rho \le \frac{\abs{\partial x}}{d\sbkt{x,\ker\partial}}$ for all $x$ with $\abs{\partial x}\le t$,
where the reduced distance is $d(x,V):=\min_{u\in V}\abs{x+u}$.
See the appendix for a more detailed introduction to expansion-related properties.

\begin{lemma}[Sufficient condition for preserving merged-code distance]\label{lem:merge_dist}
If $\partial_1^{A}$ has soundness $\rho$ up to $t\ge d$, and the ancilla code $\mc{A}$ has $X$-distance $d_A^X\ge d$, then the merged-code distance satisfies
\begin{eqnarray}
    d_{\mr{merge}} \ge \min\sbkt{1,\ \frac{\rho}{\abs{\Gamma_1}}}\, d,
\end{eqnarray}
where $\abs{\Gamma_1}$ denotes the maximum degree of $\Gamma_1$.
\end{lemma}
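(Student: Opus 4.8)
First I would bound the dressed $X$-distance and the dressed $Z$-distance of the merged (subsystem) code of Eq.~\eqref{eq:cone_code} separately, writing any logical operator in block form $\bar E=(e_{\mc{C}},e_{\mc{A}})$ according to its support on the data qubits $C_1$ and the ancilla qubits $A_0$, and taking $\bar E$ to be a minimum-weight representative of a nontrivial dressed logical class. The $Z$-direction should require no soundness: commuting with the merged $X$-checks forces $e_{\mc{C}}$ to commute with every $X$-check of $\mc{C}$, so either $e_{\mc{C}}$ is a nontrivial $Z$-logical of $\mc{C}$, whence $\abs{\bar E}\ge\abs{e_{\mc{C}}}\ge d$ at once, or $e_{\mc{C}}$ is a $Z$-stabilizer of $\mc{C}$, in which case subtracting merged $Z$-stabilizers moves $\bar E$ entirely onto the ancilla block, where it becomes a $Z$-operator of $\mc{A}$ commuting with all ancilla $X$-checks; such an operator represents a class of $H_0(\mc{A})$ or is an $\mc{A}$-stabilizer---that is, a gauge operator or a merged stabilizer---contradicting that $\bar E$ is a genuine logical. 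Hence $d_{\mr{merge}}^Z\ge d\ge\min(1,\rho/\abs{\Gamma_1})\,d$.

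The $X$-direction is where the hypotheses do the work. Let $\bar E=(e_{\mc{C}},e_{\mc{A}})$ be a minimum-weight dressed $X$-logical; if $\abs{e_{\mc{A}}}\ge d$ we are done, so assume $\abs{e_{\mc{A}}}<d\le\min(t,d_A^X)$. Commuting with the merged $Z$-checks gives $e_{\mc{A}}\in\ker\partial_0^A$, and since $\abs{e_{\mc{A}}}<d_A^X$ the ancilla $X$-distance forces $e_{\mc{A}}$ to be a coboundary, $e_{\mc{A}}=\partial_1^A a_1$ for some $a_1\in A_1$. Now I invoke soundness: because $\abs{\partial_1^A a_1}=\abs{e_{\mc{A}}}\le t$, there is $\tilde a_1\in A_1$ with $\partial_1^A\tilde a_1=e_{\mc{A}}$ and $\abs{\tilde a_1}=d(a_1,\ker\partial_1^A)\le\abs{e_{\mc{A}}}/\rho$. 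Subtracting from $\bar E$ the merged $X$-stabilizer that acts as $(\Gamma_1\tilde a_1,\partial_1^A\tilde a_1)=(\Gamma_1\tilde a_1,e_{\mc{A}})$ on $(C_1,A_0)$---the product of the merged $X$-checks indexed by $\tilde a_1$---yields an equivalent dressed logical $\bar E'=(e_{\mc{C}}+\Gamma_1\tilde a_1,\,0)$ supported purely on data qubits; the commuting relation $\Gamma_0\partial_1^A=\partial_1^C\Gamma_1$ is exactly what makes $\bar E'$ a valid operator, so $e_{\mc{C}}+\Gamma_1\tilde a_1$ commutes with every $Z$-check of $\mc{C}$. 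Since $\bar E'$ differs from $\bar E$ only by a stabilizer it is still a genuine logical, so $e_{\mc{C}}+\Gamma_1\tilde a_1$ cannot reduce to an $X$-stabilizer of $\mc{C}$ (nor to an element of the absorbed space $M=\Gamma_1(\ker\partial_1^A)$ of Lemma~\ref{lem:log_action}), and is therefore a nontrivial $X$-logical of $\mc{C}$, so $\abs{e_{\mc{C}}+\Gamma_1\tilde a_1}\ge d$. Combining this with $\abs{\Gamma_1\tilde a_1}\le\abs{\Gamma_1}\,\abs{\tilde a_1}\le(\abs{\Gamma_1}/\rho)\,\abs{e_{\mc{A}}}$ and the triangle inequality,
\[
d\ \le\ \abs{e_{\mc{C}}}+\frac{\abs{\Gamma_1}}{\rho}\,\abs{e_{\mc{A}}}\ \le\ \max\!\Big(1,\frac{\abs{\Gamma_1}}{\rho}\Big)\bigl(\abs{e_{\mc{C}}}+\abs{e_{\mc{A}}}\bigr)\ =\ \max\!\Big(1,\frac{\abs{\Gamma_1}}{\rho}\Big)\,\abs{\bar E},
\]
which rearranges to $\abs{\bar E}\ge\min(1,\rho/\abs{\Gamma_1})\,d$, and together with the $Z$-bound yields $d_{\mr{merge}}\ge\min(1,\rho/\abs{\Gamma_1})\,d$.

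The main obstacle I anticipate is not the weight estimate but the subsystem bookkeeping around it: one must make airtight that $\bar E'$---and, in the $Z$-direction, the operator obtained after pushing the support onto the ancilla---is still a \emph{genuine} logical, neither a stabilizer nor a gauge operator inherited from $H_0(\mc{A})$ or $H^0(\mc{A})$. This forces careful tracking of which operators of $\mc{C}$ enter the merged stabilizer group through $\Gamma_1(\ker\partial_1^A)$ (Lemma~\ref{lem:log_action}) and which ancilla operators are gauge; that classification, rather than the inequality, is the delicate part. A second, minor point is that soundness is guaranteed only below the threshold $t$, which is precisely why the $X$-argument first disposes of the case $\abs{e_{\mc{A}}}\ge d$ before applying soundness with $\abs{e_{\mc{A}}}<d\le t$.
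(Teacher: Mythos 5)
Your proof is correct and takes essentially the same route as the paper's (the general X-distance theorem and its corollary in the appendix): split a minimum-weight dressed logical by its support on $C_1$ versus $A_0$, use $d_A^X\ge d$ to rule out a nontrivial gauge (ancilla-logical) component, use the soundness of $\partial_1^A$ to find a low-weight stabilizer that cleans the ancilla support at data-side cost $\le(\abs{\Gamma_1}/\rho)\abs{e_{\mc{A}}}$, and reduce to the data-code distance. Your single triangle-inequality chain replaces the paper's two-case analysis but yields the identical bound, and your "subsystem bookkeeping" concern is handled exactly as you anticipate, via Lemma~\ref{lem:log_action}.
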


We remark that this condition is sufficient but not necessary~\cite{swaroop2025universaladaptersquantumldpc}, especially when additional structure is present in the data code or the chain maps. A more detailed discussion appears in the appendix, and we note where this condition can be relaxed when appropriate.

Next, we present high-rate, fault-tolerant surgery schemes, using algebraic constructions for structured codes and random constructions for unstructured codes. \\

\begin{figure}[t!]
    \centering
    \includegraphics[width = 0.5 \textwidth]{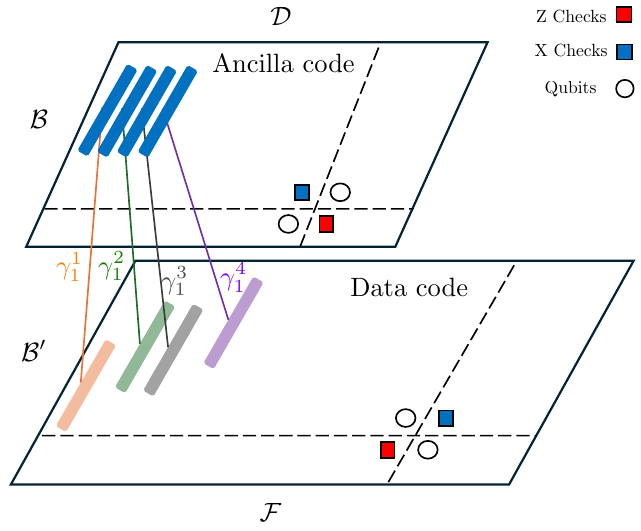}
     \caption{\textbf{Illustrative of high-rate surgery on a HGP data code using another HGP ancilla code}. The logicals to measure on the data code are supported on different columns and are indicated the bars in varying colors. The blue bars in the ancilla code represent a set of ancilla checks, whose stabilizer values are combined to infer the corresponding logical values in the data code. 
     According to Eq.~\eqref{eq:code_homomorphism}, we can distince, column-wise connectivites $\{\gamma_1^i\}$, resulting in a large pattern of logical measurements on the data code.
     }
    \label{fig:hgp_illustrate}
\end{figure}

\noindent
\textbf{Algebraic construction for structured codes.}
A core challenge in designing high-rate surgery schemes is to design ancilla codes and chain maps that preserve the merged-code distance, since the soundness condition in Lemma~\ref{lem:merge_dist} can be hard to satisfy.
To begin, we provide an ancilla construction that provably preserves distance; we later analyze which logical measurements it can perform.
The ancilla is a hypergraph-product (HGP) code, denoted $\mc{A}=\mathrm{HGP}\sbkt{\mc{B},\mc{D}}$, formed as a tensor product of two base classical codes $\mc{B}\!:\,B_1 \xrightarrow{\partial_B} B_0$ and $\mc{D}\!:\,D_1 \xrightarrow{\partial_D} D_0$, where $\partial_B$ and $\partial_D$ are the classical check matrices.
Let $\mc{B}$ and $\mc{D}$ have parameters $[n_B,k_B,d_B]$ and $[n_D,k_D,d_D]$, respectively.
See Methods for a detailed definition of the HGP code.
Under this construction, the transpose of its $X$-check, $\partial_1^A$, is known to satisfy a soundness property~\cite{Campbell_2019} that is weaker than the required linear soundness in Lemma~\ref{lem:merge_dist}.
We close this gap by taking one base code, e.g., $\mc{D}$, to be an expander code.

\begin{theorem}[HGP ancilla with one expander base code is distance-preserving]\label{thm:HGP_exp_dist_preserve}
    Consider a surgery procedure equipped with a LDPC commuting diagram and a data code, $\mathcal{C}$, with distance $d$. Let the ancilla code $\mathcal{A} = \text{HGP}\sbkt{\mathcal{B}, \mathcal{D}}$. If $\min\sbkt{d_B, d_D}\geq d$ and $\mathcal{D}$ is an expanding code up to weight $t =\Theta\sbkt{d}$, the merge code distance is preserved as $d_{\text{merge}} = \Theta\sbkt{d}$.
\end{theorem}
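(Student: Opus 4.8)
The plan is to verify the two hypotheses of Lemma~\ref{lem:merge_dist} for $\mathcal{A} = \mathrm{HGP}(\mathcal{B},\mathcal{D})$ and then apply it. Since the commuting diagram is LDPC, $|\Gamma_1| = \mathcal{O}(1)$ and both $H_B, H_D$ have bounded row and column weights, so what remains is to show that (i) the $X$-distance of $\mathcal{A}$ satisfies $d_A^X \ge d$, and (ii) $\partial_1^A$ has constant soundness up to some $t = \Theta(d)$ with $t \ge d$. For (i) we invoke the standard distance bound for hypergraph-product codes~\cite{Tillich2014HGP}: taking (as we may) $H_B$ and $H_D$ to have full row rank — otherwise one additionally needs the transpose-code distances to be $\ge d$, which holds for the code families we use — the $X$- and $Z$-distances of $\mathrm{HGP}(\mathcal{B},\mathcal{D})$ are both at least $\min(d_B,d_D) \ge d$.

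Claim (ii) is the crux, and is where the expander hypothesis on $\mathcal{D}$ is needed: a generic HGP code only endows $\partial_1^A$ with sub-linear soundness~\cite{Campbell_2019}, too weak to feed into our sufficient condition. Identifying $A_1 = B_1\otimes D_1$ with the space of $n_B\times n_D$ binary matrices, $\partial_1^A$ maps $x$ to $\big((I\otimes H_D)x,\ (H_B\otimes I)x\big)$ — it records the $\mathcal{D}$-syndrome of every row of $x$ and the $\mathcal{B}$-syndrome of every column — so $\ker\partial_1^A = \mathcal{B}\otimes\mathcal{D}$ (rows in $\mathcal{D}$, columns in $\mathcal{B}$). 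Given $x$ with $\sigma := |\partial_1^A x| \le t$, we correct $x$ towards $\ker\partial_1^A$ in two steps. First, each row $x_i$ of $x$ has $\mathcal{D}$-syndrome weight at most $\sigma\le t$; choosing $t$ no larger than the expansion window $\Theta(d)$ of $\mathcal{D}$, the linear soundness (rate $\rho_D=\Theta(1)$) of $\mathcal{D}$ supplies codewords $v_i\in\mathcal{D}$ with $\sum_i|x_i+v_i| \le \rho_D^{-1}\sum_i|H_D x_i| \le \rho_D^{-1}\sigma$; assembling the $v_i$ into a matrix $V$ whose rows lie in $\mathcal{D}$, we get $|x+V|\le\rho_D^{-1}\sigma$. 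Second, since $H_B$ has $\mathcal{O}(1)$ column weight, $|(H_B\otimes I)V| \le |(H_B\otimes I)x| + |(H_B\otimes I)(x+V)| = \mathcal{O}(\sigma)$. The key observation is that, writing $V=\sum_l u_l\otimes g_l$ in a basis $\{g_l\}$ of $\mathcal{D}$ (possible since the rows of $V$ lie in $\mathcal{D}$), every row of $(H_B\otimes I)V = \sum_l(H_B u_l)\otimes g_l$ is itself a $\mathcal{D}$-codeword, hence either zero or of weight $\ge d_D$; so $|(H_B\otimes I)V| < d_D$ forces all these rows to vanish, i.e. $H_B u_l = 0$ for all $l$, i.e. $V \in \mathcal{B}\otimes\mathcal{D} = \ker\partial_1^A$ exactly. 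Choosing $\mathcal{D}$ with $d_D$ (and its expansion window) a large enough constant multiple of $d$ — allowed since the hypothesis only asks $d_D\ge d$ — makes $\mathcal{O}(\sigma)<d_D$ for all $\sigma\le t$ with $t=\Theta(d)$, $t\ge d$. Hence $d(x,\ker\partial_1^A)\le|x+V|\le\rho_D^{-1}\sigma$, i.e. $\partial_1^A$ has soundness $\rho_D=\Theta(1)$ up to $t\ge d$, proving (ii).

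Plugging (i) and (ii) into Lemma~\ref{lem:merge_dist} yields $d_{\mathrm{merge}} \ge \min(1,\ \rho_D/|\Gamma_1|)\,d = \Omega(d)$; with the trivial bound $d_{\mathrm{merge}} \le d$ (a logical of $\mathcal{C}$ uninvolved in the measurement survives as a dressed logical of the merged subsystem code, and if all logicals are measured the statement is just the lower bound) this gives $d_{\mathrm{merge}} = \Theta(d)$.

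The main obstacle is Claim (ii): one must see precisely why upgrading a single base code to an expander suffices, given that robust testability of tensor-product codes is in general delicate. The resolution is the asymmetry of the two correction steps — $\mathcal{D}$'s expansion corrects the rows with a constant rate, after which the residual column syndrome has weight $\mathcal{O}(\sigma)$ and is forced to zero purely by the \emph{minimum distance} of $\mathcal{D}$ (not by any expansion of $\mathcal{B}$), because its rows are themselves $\mathcal{D}$-codewords. The rest is bookkeeping: choosing constants (equivalently $d_D$, hence $n_D$, a sufficiently large multiple of $d$) so the soundness window meets the $t\ge d$ requirement of Lemma~\ref{lem:merge_dist}, and confirming the full-rank / transpose-distance conditions behind Claim (i).
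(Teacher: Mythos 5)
Your proposal is correct and its outer skeleton matches the paper's: both reduce the theorem to Lemma~\ref{lem:merge_dist} (in the appendix, Corollary~4) by checking that (a) the ancilla $X$-distance is $\min(d_B,d_D)\ge d$ via the K\"unneth/Tillich--Z\'emor parameters, and (b) $\partial_1^A$ --- the check matrix of the tensor code $\mathrm{Tensor}(\mathcal{B},\mathcal{D})$ --- has linear soundness up to $t=\Theta(d)$ once one base code is an expander. Where you genuinely diverge is in the proof of (b). The paper's Lemma~6 argues by a case analysis on the error matrix $e$: whether its rows/columns lie in $\ker\partial^D$/$\ker\partial^B$, with a reduction step that adds rank-one kernel elements $r_i\otimes c_j$ to collapse the mixed case, and then applies unique-neighbor expansion row by row. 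Your argument is instead a two-step decoding: first reduce every row to its nearest $\mathcal{D}$-codeword using the expansion of $\mathcal{D}$ (cost $\rho_D^{-1}\sigma$), then observe that the rows of $(H_B\otimes I)V$ are themselves $\mathcal{D}$-codewords, so a residual column syndrome of weight below $d_D$ forces $V$ into $\ker H_B\otimes\ker H_D$ exactly. This is arguably cleaner --- it makes explicit that the second stage uses only the \emph{minimum distance} of $\mathcal{D}$, not any expansion of $\mathcal{B}$, and it avoids the paper's somewhat delicate coset-representative bookkeeping --- at the price of a visible constant: you need $(1+|H_B|\,\rho_D^{-1})\,t<d_D$, so $d_D$ must exceed $t\ge d$ by a constant factor rather than merely satisfying $d_D\ge d$. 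You flag this and absorb it into the choice of $\mathcal{D}$; this is consistent with the paper's own level of precision (its bound likewise carries the window $\gamma n_{D_1}$ and the term $\gamma n_B$ inside the min, which must be $\Omega(d)$ for the conclusion), so it is a constant-bookkeeping caveat rather than a gap. Two minor remarks: the transpose-distance/full-rank caveat in your claim (i) is unnecessary, since only $d_A^X=\min(d_B,d_D)$ is needed by the lemma, not $d_A^Z$; and your observation that $d_{\mathrm{merge}}\le d$ (needed to upgrade $\Omega(d)$ to $\Theta(d)$) is a point the paper leaves implicit.
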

With the fault tolerance guarantee, we now analyze HGP ancilla's IER. First, its kernel is
\begin{equation}
    \ker \partial^A_1 = \ker \partial_B \otimes \ker{\partial_D},
\end{equation}
which has $\dim \ker \partial^A_1 = k_B k_D$. In addition, the ancilla size $|\mc{A}| = \Theta\sbkt{n_B n_D}$. Assuming the existence of the commuting diagram, we arrive at
\begin{equation}
    r_M = \frac{\dim \Gamma_1 (\ker^A_1)}{|\mc{A}|} \sim \Theta\sbkt{\frac{k_B k_D}{n_B n_D}}.
\end{equation}
Thus, using high-rate classical codes $\mc{B}$ and $\mc{D}$ leads to high IER of the surgery scheme, which is formalized as a theorem below.
\begin{theorem}[High-rate surgery with HGP ancilla achieves constant overhead]
\label{theorem:algebraic_surgery}
    For qLDPC code $\mc{C}$, a surgery scheme with HGP ancilla $\mc{A} = \mr{HGP}(\mc{B}, \mc{D})$ achieves constant IER if (1) the base codes $\mc{B}$ and $\mc{D}$ have constant encoding rate and (2) there exist a commuting diagram with chain maps $\Gamma_1$ and $\Gamma_0$, where $\Gamma_1$ does not significantly reduce the kernel dimension $\partial^A_1$, i.e. $\dim \Gamma_1(\partial^A_1) = \Theta(\dim \partial^A_1)$. The scheme is distance-preserving if it additionally satisfies Theorem~\ref{thm:HGP_exp_dist_preserve}.
\end{theorem}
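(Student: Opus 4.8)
The plan is to reduce the claim to a ratio of two quantities that are each $\Theta(n_B n_D)$, by substituting the structure of an HGP ancilla into the definition of the IER. Since Lemma~\ref{lem:log_action} already identifies the measured space as $M = \Gamma_1(\ker\partial_1^A)$, the only ingredients needed are the dimension of $\ker\partial_1^A$, the size $|\mc{A}|$, and the two hypotheses. First I would record the kernel: viewing $\mc{A} = \mr{HGP}(\mc{B},\mc{D})$ as the total complex of the tensor product of $\mc{B}\colon B_1\xrightarrow{\partial_B}B_0$ and $\mc{D}\colon D_1\xrightarrow{\partial_D}D_0$, the top boundary acts on $A_1 = B_1\otimes D_1$ as $\partial_1^A = (I\otimes\partial_D,\ \partial_B\otimes I)$, so a chain lies in the kernel iff it is annihilated by both components, i.e.\ it lies in $(B_1\otimes\ker\partial_D)\cap(\ker\partial_B\otimes D_1) = \ker\partial_B\otimes\ker\partial_D$. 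This gives $\dim\ker\partial_1^A = k_B k_D$, the identity quoted just before the theorem; I would include this short verification only for completeness.

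Next I would estimate the ancilla size and assemble the bound. Since $|\mc{A}| = (\dim B_1 + \dim B_0)(\dim D_1 + \dim D_0)$ and the base codes are LDPC, the number of checks is $\Theta$ of the block length (from the edge-count identity with bounded check/variable weights, together with $\mathrm{rank}\,\partial_B = (1-r_B)n_B$), so $\dim B_0 = \Theta(n_B)$, $\dim D_0 = \Theta(n_D)$, and hence $|\mc{A}| = \Theta(n_B n_D)$. Now combine: hypothesis (2) gives $\dim M = \dim\Gamma_1(\ker\partial_1^A) = \Theta(\dim\ker\partial_1^A) = \Theta(k_B k_D)$, and hypothesis (1) gives $k_B = \Theta(n_B)$, $k_D = \Theta(n_D)$, so $\dim M = \Theta(n_B n_D)$. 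Therefore $r_M = \dim M / |\mc{A}| = \Theta(1)$, the claimed constant IER. The distance-preserving addendum is then immediate: under the extra hypotheses $\min(d_B,d_D)\ge d$ and $\mc{D}$ expanding up to weight $\Theta(d)$, Theorem~\ref{thm:HGP_exp_dist_preserve} gives $d_{\mr{merge}} = \Theta(d)$, which together with the assumed LDPC commuting diagram is exactly the pair of fault-tolerance conditions.

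So the derivation itself is essentially bookkeeping; the real content packed into the statement is hypothesis (2) --- the existence of a \emph{commuting diagram} (a valid pair $\Gamma_1,\Gamma_0$ with $\Gamma_0\partial_1^A = \partial_1^C\Gamma_1$) whose $\Gamma_1$ does not collapse $\ker\partial_1^A$. That is the step I expect to be the true obstacle, and it is what must be supplied separately: algebraically, by matching the column structure of $\mc{C}$ when $\mc{C}$ is itself an HGP code so that $\ker\partial_1^A$ is routed onto the intended logical supports without large cancellations, and, for unstructured $\mc{C}$, by the randomized ancilla-construction algorithm. I would also flag, but not belabor, that a constant $r_M$ only certifies that $\dim M$ is large; to make the scheme genuinely useful one additionally wants the logical (rather than stabilizer) part of $M$ to scale with $k$, which is again controlled by the same $\Gamma_1$ and is where addressability enters.
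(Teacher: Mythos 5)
Your proposal is correct and follows essentially the same route as the paper, which proves the theorem by the identical bookkeeping: $\ker\partial_1^A = \ker\partial_B\otimes\ker\partial_D$ so $\dim\ker\partial_1^A = k_Bk_D$, $|\mc{A}| = \Theta(n_Bn_D)$, and hypotheses (1) and (2) then give $r_M = \Theta(k_Bk_D/(n_Bn_D)) = \Theta(1)$, with the distance claim delegated to Theorem~\ref{thm:HGP_exp_dist_preserve}. Your closing remarks correctly identify hypothesis (2) as the substantive assumption, which the paper likewise defers to its separate discussion of classical code homomorphisms and the randomized construction.
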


Our ancilla construction can be interpreted as a high-rate ``HGP extractor", as opposed to the low-rate graph-based extractors~\cite{he2025extractorsqldpcarchitecturesefficient}. Here, the philosophy of an extractor is that given a fixed ancilla with guaranteed fault-tolerance properties, one design its connectivity, $\Gamma_1$ and $\Gamma_0$, depending on the structures in target quantum codes and logical measurements.

To characterize the logicals that can be measured in parallel requires specification of the data code and a logical basis. While we defer a detail discussion of the general symmetries required to the appendix, the discussion is straightforward when we consider the HGP code as the data code, which admits a canonical logical basis. For example, consider data code $\mathcal{C} = \text{HGP}\sbkt{\mathcal{B}^\prime, \mathcal{F}}$. Due to the product structure of $\mc{C}$, any X-logical in the canonical basis is supported in one of the columns and taking the form of a codeword of $\mc{B}^\prime$. As a result, the challenge of finding a quantum code homomorphism in Theorem~\ref{theorem:algebraic_surgery} reduces to finding classical code homomorphisms between the classical base codes $\mc{B}$ and $\mc{B}^{\prime}$,
\begin{eqnarray}\label{eq:code_homomorphism}
        \begin{tikzcd}[column sep=large, row sep=large]
        {B_1} & {B_0} \\
         {B^\prime_1} & {B^\prime_0}
        \arrow["{\partial_{B}}", from=1-1, to=1-2]
        \arrow["{\partial_{B^\prime}}", from=2-1, to=2-2]
        \arrow["{\gamma^i_1}", from=1-1, to=2-1]
        \arrow["{\gamma^i_0}", from=1-2, to=2-2]
\end{tikzcd},
\end{eqnarray}
where $\gamma_1^i$ connects the $i$-th column of $\mc{A}$ to a distinct column of $\mc{C}$, and we allow these column-wise connections $\{\gamma_1^i\}$ to be different. See Figure~\ref{fig:hgp_illustrate} for an illustration.


Several known approaches to construct classical code homomorphisms include puncturing and augmentation~\cite{Xu2025Homomorphic} and code automorphisms~\cite{malcolm2025computingefficientlyqldpccodes,berthusen2025automorphismgadgetshomologicalproduct}. However, prior works utilize symmetries mostly through a direct lift from classical to quantum product codes. As a result, the available induced logical operations are limited by the strict product structure. For example, when $\mc{B}$ has an automorphism group $\text{Aut}\sbkt{\mathcal{B}}$, directly lifting it to quantum-code automorphism~\cite{malcolm2025computingefficientlyqldpccodes, berthusen2025automorphism} induces an automorphism group of order $\bigO{\abs{\text{Aut}\sbkt{\mathcal{B}}}k_D !}$. 
In surgery, we can instead have the column-wise addressing in Eq.~\eqref{eq:code_homomorphism} by e.g., using distince automorphisms of $\mc{B}$ as $\{\gamma_1^i$\}, resulting in an exponential pattern of logical measurements $\bigO{\abs{\text{Aut}\sbkt{\mathcal{B}}}^{k_D}}$ and high logical addressibility. 
Such an addressibility is beyond the capabilities of the above-mentioned automorphism gates and the recently developed homomorphic measurements~\cite{Xu2025Homomorphic} for HGP codes.

We emphasize that algebraic scheme in Theorem~\ref{theorem:algebraic_surgery} is general and applicable to other codes. 
It is particularly straightforward to extend to codes with a generalized product structure, such as LP code~\cite{Panteleev_2022}, fiber bundle code~\cite{Hastings_2021}, and balanced product codes~\cite{Breuckmann_2021}. We leave the exploration of HGP ancilla tailored to these codes and their logical basis to future work.\\

\begin{figure*}
    \centering
    \includegraphics[width = 0.7 \textwidth]{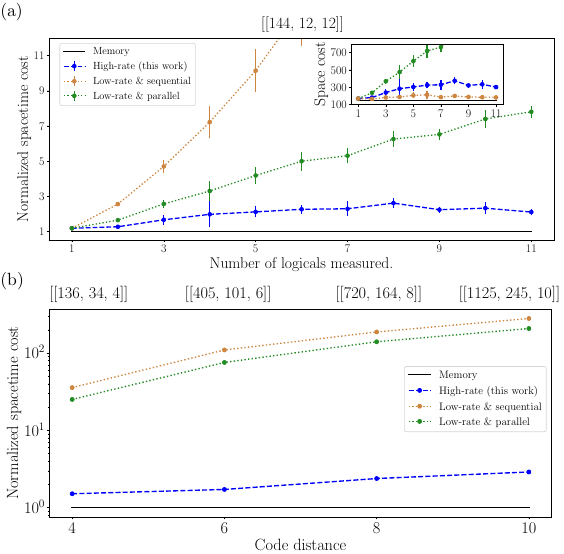}
     \caption{\textbf{Numerical analysis of the space-time costs of surgery schemes}. The spacetime costs are normalized by memory spacetime cost. The logical basis and logicals to measure are randomly sampled. The merge code degrees are limited to the data code degree plus two. Each data point is obtained through 6 sets of samples. (a) Cost for measuring varying number of logicals on Gross code. (Inset) the maximum space cost (with only data qubits). (b) Cost for measuring logicals in a set of high-rate spatially-coupled codes $\{[[n_i, k_i, d_i]]\}$ with increasing $k_i$ and $d_i$ (the code parameters are labeled on top). For each code instance, a random sample of $\floor{90 \%\times k_i}$ number of logicals are measured. }
    \label{fig:random_algo_results}
\end{figure*}

\noindent
\textbf{Randomized construction for generic codes.}
In the last section we presented an algebraic ancilla construction with sufficient soundness, enabling application to structured quantum codes with a provable distance guarantee.
In this section we go further: without any assumption on code structure, we can still perform high-rate surgeries in practice using a randomized construction for the ancilla system $\mc{A}$.
The heart of the approach is a randomized algorithm with optimized heuristics, motivated by prior randomized constructions of low-rate, graph-based ancillas~\cite{williamson2024lowoverheadfaulttolerantquantumcomputation,Cross2024ImprovedSurgery}.
At a high level, the construction proceeds as follows:
\begin{enumerate}
    \item \textbf{Initialization:} Construct ancilla $X$-checks $A_1$ that transversally connect to the union of all generating logicals. Add low-degree qubits $A_0$ and low-degree $Z$-checks $A_{-1}$ so as to measure the target logical space defined in Lemma~\ref{lem:log_action}.
    \item \textbf{Boost expansion:} At each step, sample qubits connected to the $X$-checks under the constraint that the induced logical measurements are unchanged. This typically boosts soundness.
    \item \textbf{Reduce check degrees:} Add either $Z$- or $X$-checks depending on the current sparsity of the parity checks.
    \item \textbf{Iterate:} Repeat steps 2 and 3 until the distance is sufficient.
\end{enumerate}

We remark that previous randomized graph algorithms do not extend straightforwardly to high-rate surgery.
Key reasons include the lack of hypergraph analogs of decongestion and cellulation~\cite{Freedman_2021}.
We design an algorithm that boosts the dressed distance with convergence guarantees: it adaptively alternates between boosting expansion and reducing average degree.
Another crucial ingredient is a local greedy selection rule to evaluate and choose randomly sampled qubits so as to maximize the expansion gain.
For a full discussion, see the Appendix.

In practice, finite increases in connectivity matter. Surgery-based protocols typically incur a minimum increase of two in the check degree. In our simulations, we constrain the algorithm to the same benchmark. As a result, we restrict to transversally connected chain maps $\Gamma_0$ and $\Gamma_1$.

We next apply this randomized construction to the $[[144,12,12]]$ Gross code~\cite{Bravyi2024Nature} and to new instances of spatially coupled (SC) codes~\cite{Yang2025SCQLDPC}.
Both codes possess strong symmetries that could be leveraged to build structured ancillas.
Nevertheless, to demonstrate generality, we do not exploit any symmetry: we measure a randomly sampled set of logical operators with a randomly generated logical basis.

As shown in Fig.~\ref{fig:random_algo_results}(a), using the Gross code as an example, low-rate surgery schemes—implemented sequentially~\cite{yoder2025tourgrossmodularquantum} with a graph-based ancilla of rate $r_M=\tilde{\mc{O}}(1/d)$ over $t$ logical cycles, or in parallel with $t$ disjoint ancillas~\cite{Cowtan2025ParallelSurgery}—exhibit rapidly increasing space-time overhead as the number of logical measurements $t$ grows.
The total overhead can exceed $\approx 10\times$ the memory cost when $t$ approaches the number of logical qubits $k=12$.
Moreover, in both sequential and parallel implementations, either the spatial or the temporal overhead becomes prohibitive.
In contrast, the high-rate scheme with a randomized hypergraph ancilla performs all $t$ measurements within a single logical cycle, using an ancilla no larger than the memory code (i.e., $|\mc{A}|\le|\mc{C}|$).
Consequently, the resulting space-time overhead remains within a factor of two of the memory cost, even for large $t$.


\begin{table}
    \centering
    \begin{tabular}{c|c|c}
        \hline \hline
        Parameters & Rate & Degrees\\
         \hline 
         $[[136, 34, 4]]$ & $25\%$ & $(5, 8)$\\
         $[[405, 101,6]]$ & $24\%$ & $(6, 9)$ \\
         $[[720, 164, 8]]$ & $22\%$ & $(4, 6)$\\
         $[[1125, 245, 10]]$ & $21\%$ & $(6, 9)$ \\
         \hline \hline 
    \end{tabular}
    \caption{\textbf{Summary of spatially-coupled code parameters}. The check degrees shown are the qubit degrees and check weights respectively. Distances are estimated through GAP~\cite{Pryadko2022} and should be taken as upper bounds.}
    \label{tab:SC_code_params}
\end{table}


\begin{figure}[t!]
    \centering
    \includegraphics[width = 0.5 \textwidth]{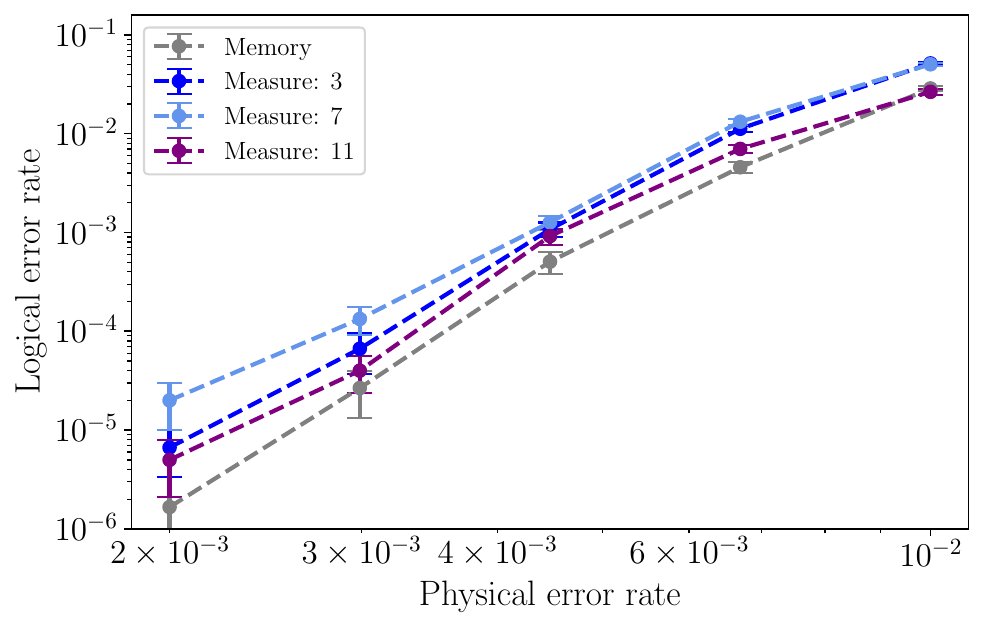}
     \caption{\textbf{Memory logical error rate of the $[[144, 12, 12]]$ Gross code and its merge codes during surgery}. Simulations are performeed under standard circuit-level noise, excluding idling errors. The merge codes shown are constructed through the randomized, high-rate surgery scheme for measuring 3, 7, and 11 of its logicals under a randomly chosen logical basis. The logical error rates reported are averaged per logical qubit over $3$ code cycles. BP-LSD decoder~\cite{hillmann2024localized} with the same decoder parameter is used to decode all cases. 
     }
    \label{fig:gross_LER}
\end{figure}

Next, to demonstrate constant overhead using high-rate surgeries with constant IER $r_M$, we consider a set of spatially coupled codes with similar encoding rates (between $20\%$ and $25\%$) and increasing distances (from $4$ to $10$).
See Tab.~\ref{tab:SC_code_params} for a summary of their code parameters and Appendix~\ref{sec:app_SC_codes} for their construction.
As shown in Fig.~\ref{fig:random_algo_results}(b), for a $[[n,k,d]]$ SC code the space-time cost relative to the memory overhead for measuring a randomly sampled set of $\floor{90\% \times k}$ logical operators increases very slowly and remains within a factor of $2$ as $d$ increases.
In contrast, low-rate schemes have overhead that grows quickly with $d$, consistent with their decreasing IER.

We highlight that for the largest SC code with $245$ logical qubits, our scheme performs $\approx 200$ randomly sampled logical measurements in parallel with roughly twice the memory overhead.
Here, the memory overhead is already extremely low due to the high encoding rate.
This demonstrates the potential of these codes, combined with our technique, for executing complex, addressable, large-scale logical computations with low overhead.
Although these SC codes have extremely high rate, their logical operations and code structures remain largely unexplored.
These results therefore highlight the power of our randomized construction: addressable, low-cost logical operations are generated automatically with no prior assumptions on the code.

We conjecture that the slight increase in overhead observed in Fig.~\ref{fig:random_algo_results}(b) reflects limitations of our current numerical algorithms, which could be further improved to approach genuinely constant scaling; we leave this optimization to future work.

Finally, we simulate the logical error rate for the Gross code to show that the high-rate surgery scheme does not significantly impact data-code performance.
In particular, we simulate the memory performance of the merged code and compare it with that of the data code, decoding both with the BP-LSD decoder~\cite{hillmann2024localized} for consistency.
Similar to prior work~\cite{yoder2025tourgrossmodularquantum}, we observe an increase in logical error rate, but the magnitude is acceptable.
Moreover, the merged code often has significantly increased $Z$-distance, leading to orders-of-magnitude bias between $Z$ and $X$ errors.
This presents opportunities for further optimization, such as a randomized distance-balancing algorithm.

\section{Discussion and outlook}

In this paper, we provide a new framework and concrete constructions for high-rate code surgeries.
By formalizing the logical action in a homological framework and giving a sufficient condition for distance preservation, we establish a general recipe for constructing ancilla patches that exploit the dense logical encoding of quantum LDPC codes.
The framework motivates explicit ancilla constructions that achieve provably constant overhead when suitable structure is present in the data code.
We further present randomized algorithms for finite-size codes that are agnostic to code structure.
Remarkably, the computational cost is typically within a factor of two of the memory cost.
Taken together, these results demonstrate the potential of high-rate surgery by showing its low space-time cost both asymptotically and for practical code instances.

Several promising research directions follow.
As noted, surgery schemes are essentially a weight-reduction method in which one or more logical operators are the targets.
Analogous to low-rate surgery, prior approaches to weight-reducing parity checks~\cite{hastings2016weightreductionquantumcodes, hastings2023quantumweightreduction, Sabo_2024} apply the same procedure iteratively to all checks above a weight limit.
A natural extension of our work is a parallel weight reduction of stabilizers, which we expect to yield resource savings similar to those observed here.

Another direction is to extend the framework to single-shot quantum error correction and related work~\cite{Campbell_2019, Quintavalle_2023, zhou2025lowoverheadtransversalfaulttolerance, PhysRevLett.133.240602, baspin2025fastsurgeryquantumldpc, xu2025batchedhighratelogicaloperations}, potentially further reducing temporal overhead.
Exploring optimized ancilla designs based on other good code families—including bivariate-bicycle codes and quantum expander codes—may lower constant factors and relax the need for product-structure symmetries.
Finally, our randomized algorithm is a proof-of-principle demonstration; improved heuristics or combinatorial optimization should further reduce resource costs, especially for large codes.

\section*{Acknowledgments}
We thank Gideon Lee, Han Zheng, Pei Zeng for their helpful discussions. G.Z. and L.J. acknowledge support from the ARO(W911NF-23-1-0077), ARO MURI (W911NF-21-1-0325), AFOSR MURI (FA9550-21-1-0209, FA9550-23-1-0338), DARPA (HR0011-24-9-0359, HR0011-24-9-0361), NSF (ERC-1941583, OMA-2137642, OSI-2326767, CCF-2312755, OSI-2426975), Packard Foundation (2020-71479), and the Marshall and Arlene Bennett Family Research Program. This material is based upon work supported by the U.S. Department of Energy, Office of Science, National Quantum Information Science Research Centers and Advanced Scientific Computing Research (ASCR) program under contract number DE-AC02-06CH11357 as part of the InterQnet quantum networking project. Q.X. is funded in part by the Walter Burke Institute for Theoretical Physics at Caltech.


\clearpage
\newpage
\section{Methods}

\textbf{Canonical basis.} Here we give an overview of the canonical basis for completeness. We define a basis for a classical code codeword, $[n, k]$, if it possess the following form
\begin{eqnarray}
        G &=& \begin{pmatrix}
            I_k & Q
        \end{pmatrix}\\
        H &=& \begin{pmatrix}
            Q^T & I_{n-k}
        \end{pmatrix}
\end{eqnarray}
up to column permutations. Here $I_k$ is a $k\times k$ dimensional matrix. We say the bit support of the $I_k$ submatrix are the information bits, whose union is the information set. Worth noting that the information bits are not unique. Similarly we can introduce a canonical basis form~\cite{Gottesman1997Stabilizer} for quantum codes 
\begin{eqnarray}
        G_X &=& \begin{pmatrix}
            I_k & Q_X & 0
        \end{pmatrix}\\
        G_Z &=& \begin{pmatrix}
            I_k & 0 & Q_Z
        \end{pmatrix}
\end{eqnarray}
and we can similarly define the information (qu)bits. In general, the canonical basis is not a minimum weight basis.

\textbf{Classical code homomorphisms} Consider a classical homomorphism as the following commuting diagram,
\begin{eqnarray}\label{eq:log_check_mod}
        \begin{tikzcd}[column sep=large, row sep=large]
        {B_1} & {B_0} \\
         {B^\prime_1} & {B^\prime_0}
        \arrow["\partial^B", from=1-1, to=1-2]
        \arrow["\partial^{B^\prime}", from=2-1, to=2-2]
        \arrow["{\gamma_1}", from=1-1, to=2-1]
        \arrow["{\gamma_0}", from=1-2, to=2-2]
\end{tikzcd},
\end{eqnarray}
The commuting diagram implies $\gamma_1  \partial^{B^\prime} =\partial^B\gamma_1$, which induces a logical action, $w$, given by
\begin{eqnarray}
        G_{B} \gamma_1^T = w G_{B^\prime}
\end{eqnarray}
$w$ depends on the logical basis chosen. Here we given two specific examples.
\begin{enumerate}
    \item Code modification: the modification technique here is referred to as logical checks in our work. It is similar to the puncture and augmentation techniques in classical coding theory. Consider the choice of $\mathcal{B}$ as
    \begin{eqnarray}
        \partial^B = \begin{pmatrix} \partial^{B^\prime}\\ H_L \end{pmatrix}
\end{eqnarray}
and chain maps 
\begin{eqnarray}
        \gamma_1 = I_1, \gamma_0 = \begin{pmatrix} I_0 & 0\end{pmatrix}
\end{eqnarray}
where $I_{0,1}$ and $0$ should be interpreted as identity and zero block matrices with matching dimensions. The kernel space follows as $\ker{\partial^{B}} = \ker{\partial^{B^\prime}} \cap \ker{H_L}$. Suppose $\mathcal{B}$ and $\mathcal{B}^\prime$ are the ancilla and data codes respectively, $H_L$ reduce the ancilla code kernel space to a subspace of the data code. Thus, it is effectively a logical check that filters unwanted logicals. For clarity, consider an exaple where weadopt the canonical basis as $G_{B^\prime} = \begin{pmatrix} I_{k} & Q \end{pmatrix}$ and assume the logical checks only have support on the information set as $H_L = \begin{pmatrix}
    H_I & 0
\end{pmatrix}$. The kernel space follows as 
\begin{eqnarray}
        G_{B} = \ker{H_I}\begin{pmatrix} I_{k} & Q \end{pmatrix}.
\end{eqnarray}
Thus, we tailor the kernel space by designing the kernel space of $H_I$. In the context of surgery, $H_I$ correspond to the unencoded circuit's sparsity. When we assume it is a shallow circuit, $H_I$ is necessarily LDPC. As a simple example, consider 
\begin{eqnarray}
    G_{B^\prime} = \begin{pmatrix} 1 & 0 & 0 & a_1^T\\
    0 & 1 & 0 & a_2^T\\
    0 & 0 & 1 & a_3^T\end{pmatrix}, H_I = \begin{pmatrix}
        1 & 0 & 0\\
        0 & 1 & 1
    \end{pmatrix}
\end{eqnarray}
where one can understand the first and second lines of $H_I$ realizes puncturing the first logical and coupling the second and third logicals respectively. As a result, we realize 
\begin{eqnarray}
    G_{B} = \begin{pmatrix} 0 & 1 & 1 & a_2^T + a_3^T
    \end{pmatrix}
\end{eqnarray}
\item Code automorphisms: automorphism refer to the special case where $\mathcal{B}^\prime = \mathcal{B}$ and is often a result of the code's symmetry. The automorphism group, $\text{Aut}\sbkt{\mathcal{B}^\prime}$, consists of $\gamma_1$ such that there exists $\gamma_0$ where the diagram commutes. The automorphism group can be generated by permutations and the induced logical actions are affine, $w\in \text{GL}_{k_B}\sbkt{\mathbb{F}_2}$. The restriction to affine actions makes it not universal for classical codes.
\end{enumerate}
We highlight that the commuting diagram can be superposed when $\mathcal{B}, \mathcal{B}^\prime$ are the same through $\gamma_1 = \gamma_1^a + \gamma_1^b, \gamma_0 = \gamma_0^a + \gamma_0^b$, which is different from group compositions. This can, e.g., generate logical actions beyond affine ones with superposed automorphism gadgets. In addition, any code homomophism $\mathcal{B}\to\mathcal{B}^\prime$ can be composed with any automorphisms $\mathcal{B}^\prime\to\mathcal{B}^\prime$. For example, one can combine logical checks with translational automorphisms, as we have demonstrated in the main text.

\textbf{Hypergraph product code (HGP)} For completeness, we provide the definition of $\text{HGP}\sbkt{\mathcal{B}, \mathcal{D}}$. As length-2 chain complex, they can be written as
\begin{eqnarray}
        \begin{tikzcd}[column sep=large, row sep=large]
        {} & {B_1\otimes D_0} & {} \\
        {B_1\otimes D_1} &  & {B_0\otimes D_0}  \\
        {} & {B_0\otimes D_1} & 
        \arrow["I_{B_1}\otimes \partial^D", from=2-1, to=1-2]
        \arrow["\partial^B \otimes I_{D_1}", from=2-1, to=3-2]
        \arrow["I_{B_0}\otimes \partial^D", from=3-2, to=2-3]
        \arrow["\partial^B \otimes I_{D_0}", from=1-2, to=2-3]
\end{tikzcd}
\end{eqnarray}
Through Kunneth theorem, it has parameters
\begin{eqnarray}
                n &=& n_{B_{1}}n_{D_0}+ n_{B_{0}}n_{D_1}\\
                k &=& k_B^T k_D + k_B k_D^T\\
                d_X &=& \min\sbkt{d_B, d_D}\\
                d_Z &=& \min\sbkt{d_B^T, d_D^T}.
\end{eqnarray}
Note that our definition is different from some definitions in literature by a code transpose.

\textbf{Surgery procedure} Consider a general surgery scheme with the following diagram 
\begin{eqnarray}
        \begin{tikzcd}[column sep=large, row sep=large]
        {A_1} & {A_0} & {A_{-1}} \\
        {C_2} & {C_1} & {C_0}
        \arrow["{\partial^A_1}", from=1-1, to=1-2]
        \arrow["{\partial^A_0}", from=1-2, to=1-3]
        \arrow["{\partial^C_2}", from=2-1, to=2-2]
        \arrow["{\partial^C_1}", from=2-2, to=2-3]
        \arrow["{\Gamma_1}", from=1-1, to=2-2]
        \arrow["{\Gamma_0}", from=1-2, to=2-3]
\end{tikzcd}
\end{eqnarray}
One begins with a data code $\mathcal{C}$, and we follow the convention of X-Q-Z vector spaces in the main text. The conventional surgery procedure consists of the follow steps:
\begin{enumerate}
    \item Initialize qubits in $A_0$ in $\ket{0}^{\otimes \abs{A_0}}$.
    \item Measure X-checks in $A_1, C_2$ and Z-checks in $C_0, A_{-1}$. Repeat for $d$ rounds.
    \item Transversally measure qubits in $A_0$ in Z-basis. Apply Pauli frame correction to track the Z-stabilizer values in $C_0$.
\end{enumerate}
The logical information is measured and can be inferred through step 2. Since the qubits are initialized in the Z-basis, X-stabilizers in $A_1$ are not random if and only if they do not have support on $A_0$. For the details on the Pauli correction see Ref.~\cite{williamson2024lowoverheadfaulttolerantquantumcomputation}.

\textbf{Generalizations from graph to hypergraph subroutines} 
When the ancilla system is formed through disjoint patches of $\dim{\ker{\partial_1^A}} = 1$, each patch can be understood as a graph. Naturally, the generalization to high-rate ancilla translates to generalization from graphs to hypergraphs. Such a translation has two main difficulties: $Z$ logicals of the ancilla system, $G_Z^A$, might not admit a low-weight generating basis, and the construction of LDPC commuting diagram such that the correct logicals are measured, $\Gamma_1\sbkt{\ker{\partial_1^{A}}}= \spanv{L_M}$, where $L_M$ is a generating basis for the logicals to measure. In the following we address each of these problems. 

In hypergraphs, many of the graph subroutines are no longer valid. The most notable ones being cellulation and decongestion, both of which are key components in weight-reducing $G_Z^A$, which is the graph's cycle basis. One observation is that when data code has Z-metachecks, they often serve as a reverse-directioned surgery that effectively measures out (part of) $G_Z^A$. While this does not provide useful information, it reduces the number of gauge logicals in the merge code. A similar observation was initially made in Ref.~\cite{Cross2024ImprovedSurgery} for HGP codes. In rare occasions where some large $G_Z^A$ remain unmeasured, we accept them as gauge logicals instead of attempting to cellulate and decongest. Our choice is motivated by heuristics: while cellulation and decongestion algorithms can be naively generalized to hypergraphs, they no longer have a guarantee on the resulting code's degree. In fact, we find it challenging to suppress both X and Z check degrees for decently large codes, especially when we want to limit the increase in code degree to 2. In cases where the merge code is a subsystem code, we observe a modest increase in the overhead. 

Another challenge is posed by measuring the right logicals, i.e. the measured operator space $\spanv{\Gamma_1\sbkt{\ker{\partial_1^A}}}$ can contain unwanted logicals. For low-rate surgery this is initially dealt with by assuming the logical to be irreducible~\cite{Cohen2022LDPCSurgery,Cross2024ImprovedSurgery}. In later works, the ancilla system can be constructed by randomly adding edges to the graph, which effectively eliminates all other kernel dimensions except the target logical. However, these techniques and assumptions are only valid for low-rate surgery. There is no universal gaurantee that for arbitrary logical basis, one can measure exactly an arbitrary logical set through a LDPC choice of $\Gamma_1$ and $\ker{\partial_1^A}$. 

For simplicitly, we consider $\Gamma_{0,1}$ to be of transversal-type, but our conclusion is valid for any connections that are LDPC. Following the discussion on logical checks, a reasonable assumption is that the logicals to measure admits a LDPC logical check for the unencoded circuit to be of depth-$\bigO{1}$. Depending on the choice of logical basis and construction of commuting diagram, we discuss case-by-case as following:
\begin{itemize}
    \item The basis is canonical, and the classical code homomorphism is constructed by puncturing. Here puncturing refers to taking the largest union necessary. In general codes this means $B_1^\prime = B_1$. In specific codes such as HGP code $B_1^\prime$ can only be the size of a column because of the product structure. In this case we are gauranteed to have both a LDPC homomorphism that is measures correctly and a stabilizer code with no gauge logicals.
    \item The basis is canonical, but the classical code homomorphism is not punctured. For example, the homomorphism can be obtained through taking the union. In this case we still have LDPC gaurantee on measuring the correct logicals, but no longer the stabilizer code gaurantee.
    \item The basis is not canonical. We don't have a gaurantee in this scenario. However, in practice we see that for decent-sized codes and sampled joint logicals to measure, one finds logical checks that are low in degree in most cases. This is only violated in rare instances. We want to point out that in simulations we never encounter large-degree logical checks if a random basis is chosen. This can be understood intuitively since these basis have relatively large weights compared to $d$, which means that they have more qubits per logical that one can place logical checks on such that the overall weight is reduced.
\end{itemize}

\textbf{Randomized algorithm for ancilla construction} In the last section we discussed from a high-level why a randomized algorithm can still retain degree for high-rate surgery. Here, we give a more detailed procedure description. For the full protocol, see appendix. For a general quantum code and a randomized set of logical operators $L_M$, we perform the following steps
\begin{enumerate}
    \item Take the union of the support of $L_M$, i.e. choose a set of qubits $Q\subseteq C_1$ such that for every logical operator $l_i\in L_M$ we have $l_i\subseteq Q$. Moreover, for every qubit in $Q$, it is in the support of at least one logical operator in $L_M$. 
    \item Construct Z-checks in $B_1$ that are transversally connected to $Q$. Construct qubits in $B_0$ that are transversally connected to $\text{Supp}\lbkt{\sbkt{H_Z\vert_Q}^T}$, i.e. we first restrict the Z-checks to $Q$ then delete all checks that have empty rows.
    \item Optimize for low-weighted logical checks, which corresponds to adding qubits in $B_0$ that are connected to $B_1$. The optimization is done through a distance estimation algorithm in GAP~\cite{Pryadko2022}. The check after this step is denoted as $\partial_B^{\prime \prime}$
    \item Verify if the Z-logicals of $\mathcal{A}$ are in the stabilizer group. For those that are not, implement if one can find a low-weight basis.
    \item Sample low-weight (dressed) logicals of the merged code that are below the target distance, $d$. Sample valid ancilla qubits that do not modify $\ker{\partial_1^A}$. A qubit is valid if its connections are low-weight. Assign a score to each qubit based on how many low-weight X-logicals they anticommute with. Implement the one with the largest score.
    \item If there is no valid qubits with non-zero score, extend the ancilla by one layer by adding a full copy of the checks in $B_1$ and connect them to the previous layer through transversal qubits.
    \item Repeat the above procedure until target distance achieved.
\end{enumerate}
We present a more detailed analysis of the algorithm and its convergence in the Appendix.

\textbf{Measurement of mixed-type Paulis} For simplicity, we have restricted to measurement of logicals of single Pauli type in a CSS code. Now, we comment on how it generalizes to general logical measurements. Suppose we consider CSS codes and measurement of mixed types of logicals. One can always write the logical measurement set as a commuting set of logical Paulis. However, it is not generally guaranteed that all composite X-logicals and Z-logical commute with each other. This poses a challenge if we construct the X and Z ancilla system separately since we would be measuring operators that anticommute with the stabilizer group, which goes beyond our framework. A remedy is to employ the canonical logical basis~\cite{gottesman1997stabilizercodesquantumerror, Cowtan2025ParallelSurgery} and generalize the framework in Lemma~\ref{lem:log_action} to mixed-type Pauli connectivities. Another remedy is to consider twist-free surgery and Y resource states, similar to Refs.~\cite{Cowtan2025ParallelSurgery,Guo2025TimeEfficient}. See Appendix for an extended discussion.


%

\clearpage

\pagebreak
\appendix 

\widetext
\begin{center}
\textbf{\large Supplemental Material}
\end{center}

\section{Notation and background \label{app:notation}}
We start by introducing some notations. Throughout this work, $[n]$ is used to denote the set of integers $\set{1,\dots, n}$. Without specification, we assume vectors are column vectors. For any vector $v\in \mathbb{F}_2^n$, its support is defined as $\text{supp}\sbkt{v} = \set{i \vert v_i= 1}$, where $v_i$ denotes its entry at index $i$. The supports of a matrix, $A$, are defined correspondly as 
\begin{eqnarray}
                \text{rowsupp}\sbkt{A} &=& \bigcup_i \text{supp}\sbkt{A_{:, i}}\\
                \text{colsupp}\sbkt{A} &=& \bigcup_i \text{supp}\sbkt{A_{i, :}}
\end{eqnarray}
i.e. column support represent all the columns that any rows of $A$ has nontrivial support. We follow the convention that $\spanv{A}$ denotes the vector space spanned by row vectors of $A$. With a slight abuse of notation, we write for a given vector space $V$ and linear map $\gamma$ that $\gamma \sbkt{V} := \set{ \gamma v\vert v\in V}$.

A chain complex, $\mathcal{C}$, is a sequence of vector spaces $\set{C_i}$ with linear maps known as boundary operators, $\partial_{i}: C_i\to C_{i-1}$. A chain complex composed of $n+1$ vector spaces is called a length-$n$ chain complex. The boundary maps satisfy the boundary condition $\forall i\in [n]$,
\begin{eqnarray}
                \partial_{i-1} \partial_{i} = 0.
\end{eqnarray}
The $i$th homology group is defined as $H_i\sbkt{\mathcal{C}}:= \ker{\partial_{i}}/ \im{\partial_{i+1}}$. One can define the cochain complex and cohomology groups analogously. A chain complex is exact at $C_i$ if $\dim{H_i\sbkt{\mathcal{C}}} = 0$ or equivalently $\im{\partial_{i+1}} = \ker{\partial_i}$. 

Since we mostly work in $\mathbb{F}_2$, the notation $\abs{\cdot}$ is used to denote the Hamming weight of a vector. When acted on a matrix, it denote the maximum Hamming weight of all row and column vectors, i.e. the maximum matrix degree. 

A classical code is defined through a parity check matrix, $H$, and can be written as a length-1 chain complex as 
\begin{eqnarray}
        \begin{tikzcd}[column sep=large, row sep=large]
       {C_1} & {C_0}
        \arrow["{\partial}", from=1-1, to=1-2]
\end{tikzcd}
\end{eqnarray}
with $\partial = H$. The logical space is $\ker{H}$.

A quantum CSS stabilizer code~\cite{PhysRevA.54.1098, PhysRevLett.77.793} is a quantum code where the parity check matrices admit a generating basis that can be divided into X- and Z-type checks, $H_{X,Z}$. Since they follow the commutation relation, it can be described through a length-2 chain complex as 
\begin{eqnarray}
        &\begin{tikzcd}[column sep=large, row sep=large]
        {C_2} & {C_1} & {C_0}
        \arrow["{\partial_2}", from=1-1, to=1-2]
        \arrow["{\partial_1}", from=1-2, to=1-3]
\end{tikzcd}&
\end{eqnarray}
Since a chain complex and a CSS code are in one-to-one correspondance, we refer to them interchangeably depending on the context. In this work we always fix
\begin{eqnarray}
        \partial_2 = H_X^T, \partial_1 = H_Z
\end{eqnarray}
so that the vector spaces $C_l$ for $l = 0, 1, 2,$ denotes the $Z$ checks, the qubits, and the $X$ checks, respectively. 
The stabilizer groups of a code is defined as $S_X := \spanv{H_X}$ and $S_Z := \spanv{H_Z}$ . The X-logicals are in $L_X = H_1\sbkt{\mathcal{C}}$. The logical dimension of the code is $k:= \dim{H_1\sbkt{\mathcal{C}}}$, with the X-distance being
\begin{eqnarray}
                d_X = \min_{v\in H_1\sbkt{\mathcal{C}}} \abs{v}
\end{eqnarray}
and $d_Z$ is defined analogously for logical $Z$ operators in the first co-homology group. These definitions correspond to the (co)systoic distances of a chain complex. The code distance is $d = \min\sbkt{d_X, d_Z}$. When $k=0$, we take $d\to\infty$ by convention.

A quantum CSS subsystem code~\cite{PhysRevLett.95.230504} is defined through a non-abelian gauge group, and the stabilizer group is the center of the gauge group. We can write a generating set for the stabilizer group similarly as $H_X$ and $H_Z$, and the gauge logicals are generated by $G_X$ and $G_Z$. Together, they generate the full gauge group. Without loss of generality, we let
\begin{eqnarray}
                G_{X, Z} H_{Z,X}^T &=& \mathbf{0}\\
                G_{X} G_{Z}^T &=& I_{g}
\end{eqnarray}
where $I_{g}$ has dimension equal to the number of gauge logicals. The dressed logicals, $L_{X,Z}$ are defined as the centralizer of $S_{X,Z}$ that are not in the gauge group. The dressed distance is the minimum weight of all dressed logicals.


One recurring component in our work is commuting diagrams, which describes relations between chain complexes. Consider the following diagram
\begin{eqnarray}
        \begin{tikzcd}[column sep=large, row sep=large]
        &{C_i} & {C_{i-1}} &\\
         & {C^\prime_i} & {C^\prime_{i-1}}&
        \arrow["{\partial_{i}}", from=1-2, to=1-3]
        \arrow["{\partial^\prime_i}", from=2-2, to=2-3]
        \arrow["", from=1-1, to=1-2]
        \arrow["", from=1-3, to=1-4]
        \arrow["", from=2-1, to=2-2]
        \arrow["", from=2-3, to=2-4]
        \arrow["{\gamma_i}", from=1-2, to=2-2]
        \arrow["{\gamma_{i-1}}", from=1-3, to=2-3]
\end{tikzcd}
\end{eqnarray}
and we say it is commuting if and only if 
\begin{eqnarray}\label{eq:comm_def}
                \gamma_{i-1} \partial_i = \partial_i^\prime\gamma_i.
\end{eqnarray}
While being a simple result, we can conclude the condition for the existence of a commuting diagram in the following lemma.
\begin{lemma}[Condition for commuting diagrams]\label{lem:existence_com_diag}
        Given boundary operators $\partial_i$ and $\partial_i^\prime$, a commuting diagram exists if and only if either of the following holds,
        \begin{eqnarray}
                        \gamma_i\sbkt{\ker{\partial_i}}&\subseteq& \ker{\partial_i^\prime}\\
                        \gamma_{i-1}\sbkt{\im{\partial_i}}&\subseteq& \im{\partial_i^\prime}.
        \end{eqnarray}
\end{lemma}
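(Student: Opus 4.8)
The plan is to split the biconditional into its two natural halves, according to which of the two vertical arrows is regarded as prescribed: the first condition governs completing the square when $\gamma_i$ is given and $\gamma_{i-1}$ is to be found, and the second condition governs the reverse situation. In each half the ``only if'' direction is a one-line consequence of the commuting relation Eq.~\eqref{eq:comm_def}, and the ``if'' direction is a standard extension/lifting argument for linear maps over $\mbb{F}_2$.

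For the first half I would argue as follows. If a $\gamma_{i-1}$ making the square commute exists, then for every $x\in\ker{\partial_i}$ we get $\partial_i^\prime\gamma_i x = \gamma_{i-1}\partial_i x = 0$, so $\gamma_i(\ker{\partial_i})\subseteq\ker{\partial_i^\prime}$. Conversely, given $\gamma_i(\ker{\partial_i})\subseteq\ker{\partial_i^\prime}$, the composite $\partial_i^\prime\gamma_i\colon C_i\to C_{i-1}^\prime$ annihilates $\ker{\partial_i}$, hence factors through $\partial_i$; concretely, one sets $\gamma_{i-1}(\partial_i x) := \partial_i^\prime\gamma_i x$ on $\im{\partial_i}$ (well defined precisely by the containment) and extends by zero on a chosen complement of $\im{\partial_i}$ in $C_{i-1}$, which exists since these are finite-dimensional $\mbb{F}_2$-spaces. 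This $\gamma_{i-1}$ satisfies Eq.~\eqref{eq:comm_def} by construction.

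For the second half, if a commuting $\gamma_i$ exists then every $y=\partial_i x\in\im{\partial_i}$ has $\gamma_{i-1}y=\partial_i^\prime\gamma_i x\in\im{\partial_i^\prime}$, giving $\gamma_{i-1}(\im{\partial_i})\subseteq\im{\partial_i^\prime}$. Conversely, assuming that containment, I would fix a basis $\{e_j\}$ of $C_i$, pick for each $j$ some $f_j\in C_i^\prime$ with $\partial_i^\prime f_j=\gamma_{i-1}\partial_i e_j$ (possible exactly because $\gamma_{i-1}\partial_i e_j\in\im{\partial_i^\prime}$), and set $\gamma_i e_j:=f_j$; linearity then propagates $\partial_i^\prime\gamma_i=\gamma_{i-1}\partial_i$ from the basis to all of $C_i$.

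I do not expect a genuine obstacle here: the statement is purely formal linear algebra, and the only care required is (a) to state clearly that each of the two ``either'' alternatives corresponds to fixing the opposite vertical map, and (b) to invoke the existence of complements and bases to perform the extension and the lifting, which is automatic over a field in finite dimension. If one prefers a single uniform treatment, both halves are subsumed by the observation that a chain map completing the square is precisely a solution of the linear system $\partial_i^\prime X = Y\partial_i$ in the unknown pair $(X,Y)=(\gamma_i,\gamma_{i-1})$ with one coordinate prescribed, whose solvability is exactly the displayed condition.
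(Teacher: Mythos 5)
Your proposal is correct and follows essentially the same route as the paper: necessity in each direction by a one-line computation from the commuting relation, and sufficiency by defining $\gamma_{i-1}$ on $\im{\partial_i}$ via $\gamma_{i-1}(\partial_i x):=\partial_i^\prime\gamma_i x$ (extended arbitrarily off the image), with the second alternative handled by the symmetric lifting argument that the paper merely declares analogous. Your explicit remarks on well-definedness and on working the lifting out on a basis are slightly more careful than the paper's write-up but do not constitute a different method.
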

\begin{proof}
        We focus on the first condition, and the proof for the second proceeds analagously. We prove $\gamma_i\sbkt{\ker{\partial_i}}\subseteq \ker{\partial_i^\prime}$ is a necessary and sufficient condition for the existence of a $\gamma_{i-1}$ that satisfies Eq.~\eqref{eq:comm_def}. The necessity is straightforward as for all $x\in \ker{\partial_i}$ we have 
        \begin{eqnarray}
                        \partial_i^\prime \gamma_i x = \gamma_{i-1} \partial_i x = 0,
        \end{eqnarray}
        so $\gamma_i x \in \ker{\partial_i^\prime}$. The converse is shown constructively. Consider $v_{i-1}\in \im{\partial_i}$ such that $v_{i-1} = \partial_i v_{i}$. The map is defined as $\gamma_{i-1}\sbkt{v_{i-1}}:= \partial_i^\prime \gamma_1 v_i$. For $u_{i-1}\in C_{i-1}\setminus \im{\partial_i}$, $\gamma_{i-1}$ can be assigned arbitrarily. Such a $\gamma_{i-1}$ is unique from the assumption, and it satisfies the commutation relations.
\end{proof}
With a commuting diagram, the maps $\gamma_i$ form a chain map. 

\section{Expansion properties}

In this work, the expansion properties of the parity check matrices are a key to gaurantee its distance properties. There are many notions of expansions, and the one most relevant to our purpose is soundness defined as below. We start with a slightly generalized notion of reduced distance that will be useful later.
\begin{definition}[Reduced distance]
     The reduced distance of a vector $x \in U$ with respect to a subspace $V\subseteq U$ and a linear map $\gamma: U \rightarrow U^{\prime}$ is defined as
    \begin{eqnarray}
            d_\gamma(x, V) := \min_{v\in V} \abs{\gamma \sbkt{x + v}}
    \end{eqnarray}
    where $\abs{\cdot}$ denotes Hamming weight, and $\gamma$ should be taken as the identity map when omitted.
\end{definition}
A simple inequality follows as
\begin{eqnarray}\label{eq:dist_ineq}
        d_\gamma(x, V) \leq \abs{\gamma} \cdot d(x, V)
\end{eqnarray}
where $\abs{\gamma}$ represents the weight of the matrix representation of $\gamma$.

\begin{definition}[Soundness; \cite{Campbell_2019}] \label{def:campbell_soundness}
        A map, $H: V\to V^\prime$, is $(t,f)$-sound for some integer $t$ and a function $f: \mathbb{Z} \rightarrow \mathbb{R}$ if for all $x\in V$ and $\abs{Hx}<t$, we have 
        \begin{eqnarray}
                        d\sbkt{x, \ker{H}} \leq f\sbkt{\abs{Hx}}
        \end{eqnarray}
\end{definition}

In our interest, $H$ is often the parity check matrix of a classical or quantum code. In this context, soundness have been shown to be sufficient conditions 
for single-shot error correction if $t$ grows with $n$ and $f$ is a monotonically increasing function independent of $n$~
\cite{Campbell_2019}. 
Here $n$ denotes the system size.

The definition of soundness is later relaxed to confinement~\cite{PRXQuantum.2.020340}, where the weight constraint is put on the error weight $\abs{x}<t$ instead of syndrome weight. Confinement successfully explained single-shot properties of codes like quantum expander code~\cite{Leverrier_2015}, which are confined but not sound. Soundness is also closely related to (co-)boundary expansion notions of a chain complex. The major difference is in whether a limit is put on $\abs{Hx}$ and if one restricts to only considering linear functions for $f$. While there is a deep literature on expansions properties, we focus on the relevant definitions only.

In general, for surgery schemes to fully preserve distance, it requires the strong condition of soundness, which is when $f$ is a linear function. Moreover, while the above definition of soundness targets at a code family, we extend it to the linear soundness for a single code or linear map.
\begin{definition}[Linear soundness]
    A map $H: V\to V^\prime$ has soundness $\rho$ up to $t$ if for all $x $ with $\abs{H x}\leq t$, we have
    \begin{eqnarray}
            \rho \leq \frac{\abs{H x}}{d\sbkt{x, \ker{H}}}.
    \end{eqnarray}
\end{definition}

\section{General framework for code surgeries \label{app:general_framework}}

In this section and the following few, we derive main theorems and lemmas for surgeries focusing on CSS data code and measurements of $X$-type logicals. While such scenarios are already of interest for many practical codes and circuits, we remark on how our results generalize to more general measurements in Appendix~\ref{app:mixed_meas}. 

From here on, we fix the key notation of data (memory) code and ancilla code as $\mathcal{C}$ and $\mathcal{A}$ for clarity. Other curly math symbols are used for classical or quantum codes depending on the context. We assume the data code has code parameters $[[n_C, k_C, d_C = d]]$.

\subsection{General surgery procedure}

A surgery procedure typically describes the process of attaching additional qubits and checks to an initial data code $\mathcal{C}: \begin{tikzcd}[column sep=large, row sep=large]
        {C_2} & {C_1} & {C_0}
        \arrow["{\partial^C_2}", from=1-1, to=1-2]
        \arrow["{\partial^C_1}", from=1-2, to=1-3]
\end{tikzcd}$. This can be used for logical measurements or simply modifying/expanding codes [CITE]. In our work, we focus on the case of measurements, naming the initial system as data system and the additioanal as the ancilla system. Since the goal is only to perform measurements, the outcome of the surgery procedure consists of classical bits of logical information and the data code. Therefore, the procedure should be code-preserving and consists of the following three steps:
\begin{enumerate}
    \item \textbf{Ancilla code initialization: } 
    an ancilla system is described by the following chain complex:
\begin{equation}
    \begin{tikzcd}[column sep=large, row sep=large]
            {A_1} & A_0 & {A_{-1}} 
        \arrow["{\partial^A_1}", from=1-1, to=1-2]
        \arrow["{\partial^A_0}", from=1-2, to=1-3]
\end{tikzcd},
\end{equation}
where $A_{l}$ for $l = -1, 0, 1, $ denotes the $Z$ checks, the qubits, and the $X$ checks of the ancilla system, respectively.
We initialize the ancilla qubits $A_0$ in $\ket{0}$ states.
    \item \textbf{Merge: } 
    measure joint $Z$ checks $A_{-1} \cup C_0$ and $X$ checks $A_1 \cup C_2$ on the ancilla and data qubits $A_0 \cup C_1$ according to the following commuting diagram: 
    \begin{eqnarray} \label{eq:commute_diagram_app}
        \begin{tikzcd}[column sep=large, row sep=large]
        {A_1} & {A_0} & {A_{-1}} \\
        {C_2} & {C_1} & {C_0}
        \arrow["{\partial^A_1}", from=1-1, to=1-2]
        \arrow["{\partial^A_0}", from=1-2, to=1-3]
        \arrow["{\partial^C_2}", from=2-1, to=2-2]
        \arrow["{\partial^C_1}", from=2-2, to=2-3]
        \arrow["{\Gamma_1}", from=1-1, to=2-2]
        \arrow["{\Gamma_0}", from=1-2, to=2-3]
\end{tikzcd},
\label{eq:merged_code_complex}
\end{eqnarray}
where the ancilla $X$ checks $A_1$ are also connected to the data qubits via $\Gamma_1$ and the original $Z$ checks $C_0$ on the data code are extended to the ancilla qubits via $\Gamma_0$. The $X$ and $Z$ checks commute if and only if the above diagram commutes. We refer to the CSS code defined in Eq.~\eqref{eq:merged_code_complex} (see Ref.~\cite{Ide_2025} for more details) as the merged code. 
\item \textbf{Split:} transversally measure the ancilla qubits in $A_0$ in $Z$-basis, followed by a Pauli $X$ correction on the data code to fix the $Z$ checks $C_0$ to $+1$~\cite{williamson2024lowoverheadfaulttolerantquantumcomputation}.
\end{enumerate}
At the heart of the construction is the commmuting diagram described in Eq.~\eqref{eq:commute_diagram_app}, or equivalently
\begin{eqnarray}
        \Gamma_0 \partial_1^A = \partial_1^C \Gamma_1.
\end{eqnarray}
The choices of initializing and measuring ancilla qubits in Z-basis is made to preserve detectors from Z-checks of $C_0$ throughout the procedure. Under such choices, one can simplify the problem of finding fault-tolerant surgery protocols to the search of merge codes with sufficient distance.
\begin{lemma}[Phenomenological distance is gauranteed by merge code distance; \cite{williamson2024lowoverheadfaulttolerantquantumcomputation}] \label{lem:merge_dist_imply_ph}
    Following the described surgery protocol, a sufficient merge code distance, $d_{\text{merge}} \geq d$, is sufficient and necessary condition for 
    \begin{eqnarray}
            d_{\text{ph}}\geq d
    \end{eqnarray}
    where $d_{\text{ph}}$ is the phenomenological distance.
\end{lemma}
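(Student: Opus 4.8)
The statement is attributed to prior work, so the plan is to reproduce the standard argument that in the described $Z$-initialize/$Z$-measure surgery protocol the phenomenological distance equals the merged-code (dressed) distance, hence $d_{\mathrm{ph}}\ge d \iff d_{\mathrm{merge}}\ge d$. The phenomenological noise model places $X$- and $Z$-type Pauli faults on qubits between rounds and measurement faults on each check outcome, over the $d$ rounds of the merge step plus the transversal readout. A logical failure is an undetectable fault configuration that implements a nontrivial logical action on the data code $\mc{C}$ (i.e.\ changes the inferred measurement outcome of the target logical, or corrupts the post-split data state). The goal is to show the minimum weight of such a configuration is exactly $d_{\mathrm{merge}}$.

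First I would set up the space--time (3D) chain complex for the protocol: $d$ copies of the merged-code complex in Eq.~\eqref{eq:merged_code_complex} stitched together by identity "time-like" maps, with boundary conditions reflecting that the ancilla qubits $A_0$ start in $\ket{0}$ and end with a transversal $Z$ measurement, while the data qubits $C_1$ are open at both temporal ends. The detectors are the differences of consecutive check outcomes (for $A_1\cup C_2$ and $A_{-1}\cup C_0$), together with the special detectors formed by comparing the final transversal $Z$-readout of $A_0$ against the last round of $Z$-checks. Second, I would invoke the standard lattice-surgery "distance unfolding" argument (as in Refs.~\cite{Horsman_2012,Vuillot_2019,Cohen2022LDPCSurgery,williamson2024lowoverheadfaulttolerantquantumcomputation}): because the temporal direction is repeated $d\ge d_{\mathrm{merge}}$ times and the boundary conditions above make time-like error strings detectable unless they stretch across all $d$ rounds, any undetectable space--time fault of weight $<d_{\mathrm{merge}}$ must be "time-like-trivial" — deformable to a single-time-slice fault supported on the merged code. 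Such a fault is then a dressed logical of the merged code (dressed because the $A_0$-associated gauge degrees of freedom $H_0(\mc{A})$, $H^0(\mc{A})$ are measured/gauged away by the $Z$ initialization and readout), so its weight is at least $d_{\mathrm{merge}}$ — contradiction. This gives $d_{\mathrm{ph}}\ge \min(d, d_{\mathrm{merge}})=d_{\mathrm{merge}}$ when $d_{\mathrm{merge}}\le d$, and in particular $d_{\mathrm{merge}}\ge d \Rightarrow d_{\mathrm{ph}}\ge d$.

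For the converse direction, $d_{\mathrm{ph}}\ge d \Rightarrow d_{\mathrm{merge}}\ge d$, I would simply observe that a weight-$w$ dressed logical of the merged code, inserted as a data fault on a single time slice (padded by measurement faults on the at most $w$ checks whose values it flips, across one round), yields an undetectable space--time fault of weight $O(w)$; more carefully, one takes the logical supported on $C_1\cup A_0$ and uses that, since it commutes with all stabilizers of the merged code, the only detectors it triggers are those sensing the gauge operators, which are not detectors of the protocol — so it is genuinely undetectable and has space--time weight $w$. Hence $d_{\mathrm{ph}}\le d_{\mathrm{merge}}$, and combined with the forward direction, $d_{\mathrm{ph}}=d_{\mathrm{merge}}$ (in the regime $d_{\mathrm{merge}}\le d$), giving the claimed equivalence.

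The main obstacle is the forward "no bad space--time fault below $d_{\mathrm{merge}}$" step: one must carefully track the boundary conditions at the two temporal ends — in particular that the transversal $Z$-measurement of $A_0$ at the end together with the $\ket{0}$ initialization pins down the ancilla $Z$-checks, so that purely time-like $Z$-strings on $A_0$ are either detected or absorbed into the gauge, and that the data qubits' open temporal boundaries do not create spurious low-weight undetectable strings (they would, but only of weight $\ge d_C=d$, which is why the hypothesis $d\le$ everything relevant, or rather $d_{\mathrm{merge}}\le d$ in the lemma's regime, is needed). Making the "deform to a single time slice" step rigorous in the qLDPC (non-topological) setting is the technical heart; I would cite the detailed treatment in Ref.~\cite{williamson2024lowoverheadfaulttolerantquantumcomputation} rather than rederive it, noting only that the repetition over $d$ rounds is what supplies the time-like distance, exactly matching the space-like distance $d_{\mathrm{merge}}$ of the merged code.
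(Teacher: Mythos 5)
The paper does not prove this lemma itself—it defers entirely to Ref.~\cite{williamson2024lowoverheadfaulttolerantquantumcomputation}, remarking only that the argument rests on the spacetime subsystem code of Ref.~\cite{Vuillot_2019} and on the careful preservation of detectors guaranteed by the $Z$-basis initialization and transversal $Z$-readout of $A_0$. Your plan sketches exactly that standard argument (spacetime complex, detector boundary conditions, deformation to a single time slice, and the converse via inserting a dressed logical) and defers the same technical core to the same reference, so it is consistent with and essentially identical to the paper's approach.
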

We refer interested readers to Ref.~\cite{williamson2024lowoverheadfaulttolerantquantumcomputation} for a complete proof, but we remark that this is nontrivial since in general the merge code distance is only a necessary condition for the merged code distance. To gaurantee the phenomenological distance, one needs to consider the subsystem spacetime code distance~\cite{Vuillot_2019}. The proof of the dressed distance comes from the careful preservation of detectors following the designed procedures outlined above. With Lemma~\ref{lem:merge_dist_imply_ph}, we implicitly assume the above procedure is followed in later discussions and focus on the merged code distance.

\begin{remark}
        To gaurantee fault-tolerance, the merge code should be LDPC as well~\cite{williamson2024lowoverheadfaulttolerantquantumcomputation}. Thus, the additional boundaries and chain maps, including $\partial_{1,2}^{A,C}$ and $\Gamma_{0,1}$, should be sparse. This is in general not gauranteed and where the challenge comes from. In Appendix~\ref{app:structered_ancilla}, we present constructive approaches to construct these sparse connections. In other sections, we assume this desiderata is satisfied.
\end{remark}

\subsection{Measurements through metachecks}

While the ancilla chain complex never exists by itself during surgery, its property is indicative and sufficient for characterizing the effect of the surgery. To start with, we add an additional vector space, $A_2$, to $\mathcal{A}$ known as the metacheck~\cite{Campbell_2019} such that $\mathcal{A}: A_2 \to A_1\to A_0\to A_{-1}$. The X-metachecks characterizes the deterministic outcomes in combination of X-checks, i.e. the redundancies in the X-checks. While there is a degree of freedom to choose $A_2$ as long as the boundary condition is satisfied, we will choose it such that $\mathcal{A}$ is exact at $A_1$ by default.

The criticality of metachecks in $A_2$ rises from that we follow the surgery protocol where $A_0$ are initialized in Z-basis. As a result, by the stabilizer update rule, any X-stabilizer that has support in $A_0$ has random outcome at the initial merge step, providing no information on data code by itself. $\im{\partial^A_2}$ exactly represents the combination of X-checks that have no support in $A_0$. To understand what information one is measuring, we can draw the following commuting diagram,
    \begin{eqnarray}\label{eq:long_chain_diag}
        \begin{tikzcd}[column sep=large, row sep=large]
        \mathcal{A}\text{:} & {A_2} & {A_1} & {A_0} & {A_{-1}}\\
         & {} & {C_2} & & \\
        \mathcal{C}^\prime\text{:} & {}& {\oplus }& {C_1} & {C_0} \\
        & & {C_2^L} & & 
        \arrow["{\partial^A_2}", from=1-2, to=1-3]
        \arrow["{\partial^A_1}", from=1-3, to=1-4]
        \arrow["{\partial^A_0}", from=1-4, to=1-5]
        \arrow["{\partial^C_2}", from=2-3, to=3-4]
        \arrow["{\partial^C_1}", from=3-4, to=3-5]
        \arrow["{\Gamma^S_2}", from=1-2, to=2-3]
        \arrow["{\Gamma^L_2}", from=1-2, to=4-3]
        \arrow["{\partial_2^{C, L}}", from=4-3, to=3-4]
        \arrow["{\Gamma_1}", from=1-3, to=3-4]
        \arrow["{\Gamma_0}", from=1-4, to=3-5]
\end{tikzcd}
\end{eqnarray}
where $\partial_2^{C,L} = \sbkt{L_X^C}^T$. We have added $C_2^L$ to represent the space of X-logicals so that the metacheck mapping relation is clear. The addition of $C_2^L$ turn $\mathcal{C}^\prime$ exact at $C_1$. The existence of $\Gamma_2^{S, L}$ are then gauranteed by the exactness of both chain complex and Lemma~\ref{lem:existence_com_diag}. 

It is now straightforward to understand why such metachecks enable the measurement of large-weighted X-logicals in $C_2^L$: instead of using the metachecks for correcting measurement errors in single-shot error correction~\cite{Campbell_2019}, they are used to infer the values of $C_2^L$ through $\Gamma_2^L$ and measured parity checks of $A_1, C_2$. More concreteley, the measured operators are characterized through the following lemma, which is a generalization of Lemma 1 in Ref.~\cite{Ide_2025}. 
\begin{lemma}[Surgery through homological measurements]\label{lem:surgery_measurements_framework}
    Consider a surgery procedure, the space of all measured operators is given by
    \begin{eqnarray}
            M = \Gamma_1\sbkt{\ker{\partial_1^A}} \subseteq S_X^C \oplus L_X^C
    \end{eqnarray}
    and the measured logicals is given by $M_L := M\cap L_X^C$. 
\end{lemma}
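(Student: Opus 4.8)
The statement to prove is Lemma~\ref{lem:surgery_measurements_framework}: that the space of measured operators in a surgery procedure is $M = \Gamma_1(\ker\partial_1^A) \subseteq S_X^C \oplus L_X^C$, with the measured logicals being $M_L = M \cap L_X^C$. My plan is to argue physically from the stabilizer update rules and then translate into the homological language of the diagram in Eq.~\eqref{eq:long_chain_diag}. The key physical fact, already noted in the text, is that the ancilla qubits $A_0$ are initialized in the $Z$-basis. Therefore, immediately after the merge, an $X$-type stabilizer generator of the merged code produces a deterministic (non-random) outcome if and only if it has no support on $A_0$ — otherwise it anticommutes with some single-qubit $Z$ initialization and its value is uniformly random. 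So the ``measured information'' is exactly carried by the subspace of $X$-checks of the merged code supported entirely on the data qubits $C_1$.

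**Identifying the deterministic $X$-checks.** The $X$-checks of the merged code are spanned by the rows of the combined map $(\partial_2^C \mid \Gamma_1^T)$ acting from $C_2 \oplus A_1$ into the data-and-ancilla qubit space $C_1 \oplus A_0$; concretely a generator indexed by $(c, a) \in C_2 \oplus A_1$ acts on $C_1$ as $\partial_2^C c + \Gamma_1 a$ (abusing notation for the transpose conventions) and on $A_0$ as $\partial_1^A a$. Such a generator is supported only on $C_1$ precisely when $\partial_1^A a = 0$, i.e. $a \in \ker\partial_1^A$. Its restriction to the data qubits is then $\partial_2^C c + \Gamma_1 a$; modulo the data-code stabilizer group $S_X^C = \mathrm{rowspan}(\partial_2^C)^T$, the operator measured is $\Gamma_1 a$ for $a \in \ker\partial_1^A$. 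This is exactly the claim $M = \Gamma_1(\ker\partial_1^A)$. I would present this as the core computation, with the commuting diagram Eq.~\eqref{eq:long_chain_diag} providing the bookkeeping: $\Gamma_2^S$ and $\Gamma_2^L$ decompose the image $\partial_2^A(A_2) \subseteq \ker\partial_1^A$ into its stabilizer part (landing in $C_2$) and its logical part (landing in $C_2^L$), so that $\Gamma_1(\ker\partial_1^A)$ indeed lies in $S_X^C \oplus L_X^C$ and never has a component anticommuting with the data stabilizers.

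**The containment and the logical part.** To establish $M \subseteq S_X^C \oplus L_X^C$ cleanly, I would use the commutation relation $\Gamma_0 \partial_1^A = \partial_1^C \Gamma_1$ from Def.~\ref{def:code_surgery}: for $a \in \ker\partial_1^A$ we get $\partial_1^C \Gamma_1 a = \Gamma_0 \partial_1^A a = 0$, so $\Gamma_1 a \in \ker\partial_1^C$, which is by definition the span of $X$-stabilizers and $X$-logicals of $\mathcal{C}$, i.e. $S_X^C \oplus L_X^C$ (after choosing representatives). The identification of $M_L = M \cap L_X^C$ is then immediate: among the measured operators, those that are genuine logical measurements — as opposed to gauge/stabilizer measurements carrying no logical content — are exactly the ones whose class in $H_1(\mathcal{C}) = L_X^C$ is nontrivial, which is the intersection with the logical subspace $L_X^C$. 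I should be slightly careful that $M$ is being regarded as a subspace of $S_X^C \oplus L_X^C$ with a fixed splitting (choice of logical representatives), so that ``$M \cap L_X^C$'' is well-defined; this is the same convention used in Lemma~\ref{lem:log_action}, so I would just invoke it.

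**Main obstacle.** The delicate point is not the linear algebra but justifying rigorously that the outcomes of the deterministic $X$-checks on $C_1$ are, in fact, recoverable from the measurement record over the $d$ rounds — i.e. that the inference via metachecks in $A_2$ (the maps $\Gamma_2^{S,L}$) actually works under the measurement schedule, and that conversely no additional logical information leaks out. The first direction requires noting that $\ker\partial_1^A = \mathrm{im}\,\partial_2^A$ by the chosen exactness of $\mathcal{A}$ at $A_1$, so every element of $\ker\partial_1^A$ is a combination of $X$-checks dictated by a metacheck in $A_2$, whose parity is therefore a fixed linear function of the measured $X$-check values — this is what makes it deterministic and inferable. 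The second direction (no extra leakage) follows because any merged-code $X$-check with support on $A_0$ has random first-round outcome and thus reveals nothing about the pre-surgery data state. I would state both directions explicitly but lean on the established surgery procedure (the $d$-round schedule and Pauli-frame tracking from Ref.~\cite{williamson2024lowoverheadfaulttolerantquantumcomputation}) rather than re-deriving fault tolerance, since Lemma~\ref{lem:merge_dist_imply_ph} already handles that separately. This mirrors and generalizes Lemma~1 of Ref.~\cite{Ide_2025}, the only new content being the explicit role of the metacheck space $A_2$ and the logical block $C_2^L$ in routing information from $\ker\partial_1^A$ into $S_X^C \oplus L_X^C$.
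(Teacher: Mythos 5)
Your proposal is correct and follows essentially the same route the paper takes: the paper gives no formal proof of this lemma, but its surrounding discussion (the $Z$-basis initialization of $A_0$ rendering any $X$-check with ancilla support random, the identification of the deterministic checks with $\ker\partial_1^A = \mathrm{im}\,\partial_2^A$, and the commuting diagram routing $\Gamma_1(\ker\partial_1^A)$ into $S_X^C \oplus L_X^C$ via $\Gamma_0\partial_1^A = \partial_1^C\Gamma_1$) is exactly what you make explicit, with the result otherwise deferred to Lemma~1 of Ref.~\cite{Ide_2025}. Your added care about recoverability over the $d$ rounds and the absence of extra leakage goes slightly beyond what the paper spells out, but is consistent with its reliance on the established merge-and-split procedure.
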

Recall that the central idea of surgery schemes is to weight-reduce weight-$\Omega\sbkt{d}$ logicals into weight-$\bigO{1}$ stabilizers. Under the reasonable assumptions $\partial_2^C, \Gamma_1$ in Eq.~\eqref{eq:long_chain_diag} are LDPC and that elements of $\im{\partial_2^{C,L}}$ have weight $\Omega\sbkt{d}$, a necessary requirement is $\abs{\partial_2^A} = \Omega\sbkt{d}$, i.e. metachecks of $\mathcal{A}$ should be global. This rules out the potential of adopting codes whose metachecks are generated by local metachecks, such as the gauge color code~\cite{Bomb_n_2015, Brown_2016}.

The efficiency of a surgery scheme is determined by the spacetime cost scaling per logical measurement. Most of the existing works are low-rate in this regards as their ancilla are composed of one or more disjoint ancilla systems with $\dim\sbkt{\ker{\partial_1^A}} = 1$. This is a direct consequence of the kernel space of a connected graph being one-dimensional. A more formal definition of the efficiency through information extraction rate is given below.
\begin{definition}[Information extraction rate and efficiency of surgery] Consider an ancilla system $\mathcal{A}$ and measured operators, $M$, given by Lemma~\ref{lem:surgery_measurements_framework}, the information extraction rate is defined as
\begin{eqnarray*}
    r_M := \dim\sbkt{M}/\abs{\mathcal{A}}.
\end{eqnarray*}
where $\abs{\mathcal{A}} := \sum_{i}\abs{A_i}$. A surgery scheme is low-rate if $r_M = \bigO{1/d}$ under the assumption of a $\Theta\sbkt{d}$-weight basis, and it is defined as high-rate otherwise. A constant-rate surgery scheme is one with $r_M = \Theta\sbkt{1}$.
\end{definition}
The information extraction rate and the space overhead is directly connected by the number of logical measurements. A constant-rate surgery scheme leads to constant spacetime overhead, $\alpha = \Theta\sbkt{1}$, regardless of number of measurements. While in the definition we assumed a minimum-weight logical basis for simplicity, it can be naturally generalized by consider low-rate schemes to be $r_M = \bigO{1/w}$ with $w:= \min \abs{l}$ for $l\in M_L$.

\subsection{Merge code distance}

In addition to the logical action, it is crucial to ensure the protocol is distance-preserved, which is determined by the merge code distance. In a surgery scheme that measures X-logicals following Eq.~\eqref{eq:commute_diagram_app}, the Z-distance is preserved regardless. On a high-level, the reason is that all X-checks of the data code, $C_2$ remained intact. Therefore, any dressed Z-logical's support on $C_1$ must be a Z-logical of the data code, which has sufficient distance by definition. If a Z-logical do not have support on $C_1$, it cannot be a dressed logical.
\begin{lemma}[Z-distance preservation; \cite{Cohen2022LDPCSurgery,Cross2024ImprovedSurgery,Ide_2025,williamson2024lowoverheadfaulttolerantquantumcomputation}] 
    For a surgery scheme, the merge code Z-distance is always preserved, 
    \begin{eqnarray}
            d^Z_{\text{merge}} \geq d.
    \end{eqnarray}
\end{lemma}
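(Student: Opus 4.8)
The plan is to show that any nontrivial dressed logical $Z$-operator of the merged code, restricted to the data qubits $C_1$, gives rise to a nontrivial logical $Z$ of the original data code $\mathcal{C}$; since the latter has weight at least $d$, the merged-code $Z$-distance must also be at least $d$. The key structural fact to exploit is that the merge procedure in Eq.~\eqref{eq:merged_code_complex} leaves all $X$-checks $C_2$ of the data code intact: the $X$-check space of the merged code is $A_1 \oplus C_2$ with the data-side block acting only on $C_1$. Dually, the $Z$-type operators of the merged code are vectors $(z_A, z_C) \in A_0 \oplus C_1$, and such a vector commutes with all merged $X$-checks precisely when (i) it is in the kernel of the merged $X$-check matrix built from $\partial_2^A$, $\partial_2^C$, $\Gamma_1$, $\Gamma_0$. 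The first thing I would do is write out this commutation condition explicitly and read off that the $C_1$-component $z_C$ must satisfy $\partial_1^C z_C = 0$ up to contributions that can be absorbed—more precisely, that $z_C$ anticommutes with no data $X$-check, hence $z_C \in \ker\big((H_X^C)\big)$, i.e. $z_C$ is a $Z$-cycle of $\mathcal{C}$.

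Next I would handle the two cases for a dressed $Z$-logical $\bar Z = (z_A, z_C)$ of the merged subsystem code. \emph{Case 1: $z_C = 0$.} Then $\bar Z$ is supported entirely on the ancilla qubits $A_0$. I claim such an operator is either a merged stabilizer or a gauge operator, hence not a nontrivial dressed logical. This is exactly where the identification of the ancilla's $H_0(\mathcal{A})$ and $H^0(\mathcal{A})$ degrees of freedom as gauge operators (made in the main text, just before Lemma~\ref{lem:merge_dist}) is used: a purely-ancilla $Z$-operator that commutes with everything is, modulo the ancilla $Z$-checks $A_{-1}$ (which are part of the merged stabilizer group via $\partial_0^A$ and $\Gamma_0$), a representative of $H_0(\mathcal{A})$, and these are gauge. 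So Case 1 contributes no dressed logical. \emph{Case 2: $z_C \neq 0$.} By the previous paragraph $z_C$ is a $Z$-cycle of $\mathcal{C}$. If $z_C$ were a $Z$-boundary of $\mathcal{C}$, i.e. $z_C \in \mathrm{im}\,\partial_1^{C\,T} = S_Z^C$ restricted appropriately, then one could correct $\bar Z$ by a data-code $Z$-stabilizer to reduce to Case 1, contradiction with $\bar Z$ being a nontrivial dressed logical (note: one must check the correcting $Z$-stabilizer of $\mathcal{C}$ is still a stabilizer of the merged code, which holds because $C_0$-checks are extended to $A_0$ via $\Gamma_0$ but the data-side support is unchanged). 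Hence $z_C$ represents a nontrivial class in $H^0(\mathcal{C})$ (the data $Z$-logicals), so $|z_C| \ge d$, and therefore $|\bar Z| \ge |z_C| \ge d$.

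I would then assemble these observations: every nontrivial dressed $Z$-logical of the merged code has data-qubit support that is a nontrivial data $Z$-logical, so its weight is at least $d$, giving $d^Z_{\mathrm{merge}} \ge d$. The main obstacle I anticipate is the bookkeeping in Case 2 around which $Z$-operators of $\mathcal{C}$ remain valid stabilizers or gauge operators after the merge—one must be careful that extending $C_0$-checks along $\Gamma_0$ does not change whether a candidate correction operator lies in the merged stabilizer/gauge group, and that the "reduce to Case 1'' step does not secretly require adding ancilla-supported pieces that fail to commute with the merged $X$-checks $A_1$. This is really a diagram-chase using the commutation relation $\Gamma_0\partial_1^A = \partial_1^C\Gamma_1$ from Def.~\ref{def:code_surgery}, together with the exactness set-up of $\mathcal{A}$ at $A_1$ introduced for Lemma~\ref{lem:surgery_measurements_framework}; none of it is deep, but it is the step where a sloppy argument would go wrong. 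Everything else is immediate from the definitions of dressed distance and of the merged complex, and the result is already attributed to prior work, so a short self-contained argument along these lines suffices.
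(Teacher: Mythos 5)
Your argument is correct and takes essentially the same route as the paper's own (sketched) justification: since the data $X$-checks $C_2$ survive the merge untouched, the $C_1$-restriction of any dressed $Z$-logical lies in $\ker H_X^C$ and is either a nontrivial data $Z$-logical of weight at least $d$, or reducible (by multiplying with the $\Gamma_0$-extended $C_0$ stabilizers) to a purely-ancilla operator, which is stabilizer or gauge by the paper's convention on $H^0(\mc{A})$. The only blemish is the initially miswritten commutation condition $\partial_1^C z_C = 0$ — it should be $(\partial_2^C)^T z_C = H_X^C\, z_C = 0$, as you in fact state immediately afterwards — and otherwise the details you supply are exactly the ones the paper delegates to the cited prior work.
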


The X-distance is more challenging to prove. Before diving into the technical, it is important to interpret the logical space of $\mathcal{A}$. These logicals, if exist, are initialized deterministically in $\ket{0_L}^{\otimes k_A}$ and measured in Z-basis. Since they do not encode information, they are effectively gauge logicalsof the merge code. It is important to note that one generally cannot use these logicals to correct errors since they are of weight $\Omega\sbkt{d_A}$. These gauge logicals can reduce the code distance by forming dressed logicals as we defined in Appendix~\ref{app:notation}. Three remedies have been taken to treat these gauge logicals.
\begin{enumerate}
    \item Gauging~\cite{williamson2024lowoverheadfaulttolerantquantumcomputation, Ide_2025}: the gauge degrees of freedom are promoted into large-weighted stabilizers. These stabilizer are then treated with techniques from quantum weight reduction~\cite{hastings2016weightreductionquantumcodes, hastings2023quantumweightreduction, Sabo_2024}.
    \item Eliminating gauge logicals by construction: under certain assumptions on, e.g., the data code encoding and chain maps~\cite{Cross2024ImprovedSurgery}, one can provide explicit constructions of the ancilla system such that $k_A = 0$. In Methods, we outlined a more general approach where we do not place assumptions on the data code but only assume the logical basis takes on a canonical form. This approach also falls in this category.
        \item Subsystem code~\cite{Cohen2022LDPCSurgery,Guo2025TimeEfficient}: the merge code is accepted as a subsystem code where one focuses on boosting its dressed distance. A typical technique is to boost the initial ancilla system by thickening.
\end{enumerate}
The caveat of the last approach is that the gauge logicals grow as one thickens, so the required number of thickened layers cannot be bounded by the expansion of the initial ancilla system as one does when there are no gauge logicals~\cite{Cross2024ImprovedSurgery}. To measure a single logical, the first approach is asymptotically optimal up to polylogarithm factors, and it has shown decent performance in practical regimes as well~\cite{yoder2025tourgrossmodularquantum, Ide_2025}. However, it heavily relies on graph subroutines such as cellulation and decongestion~\cite{Freedman_2021}, which are not straightforwardly extendable to hypergraphs. The second approach is tightly connected to the structure of the data code and logical basis chosen. As an example, when $\mathcal{C}$ has Z-metachecks, it is possible that the logicals of $\mathcal{A}$ are in turn measured through these Z-metachecks. This is straightforward to see if one consider the cohomology picture of Eq.~\eqref{eq:commute_diagram_app}, i.e. reverting all arrows. In this dual picture, $\mathcal{C}$ is the surgery ancilla system for $\mathcal{C}$ for measuring Z-logicals or Z-stabilizers, following Lemma~\ref{lem:surgery_measurements_framework}. Therefore, some gauge logicals of $\mathcal{A}$ might be naturally elimiated. This is first proposed in Ref.~\cite{Cross2024ImprovedSurgery} for when $\mathcal{C}$ is a HGP codes. While this is not uncommon to occur, we wish to focus on general data codes in this section, so we do not make similar assumptions. Therefore, in this work we mainly consider the last approach and work with subsystem codes with the same gauge logical number as $k_A$.

A general sufficient condition for the X-distance is most conveniently expressed with the guaged ancilla code and commuting diagram as following
    \begin{eqnarray}\label{eq:gauge_commute_diag}
        \begin{tikzcd}[column sep=large, row sep=large]
        {A^g_1} & {A_0} & {A_{-1}} \\
        {C_2} & {C_1} & {C_0}
        \arrow["{\partial^{A, g}_1}", from=1-1, to=1-2]
        \arrow["{\partial^A_0}", from=1-2, to=1-3]
        \arrow["{\partial^C_2}", from=2-1, to=2-2]
        \arrow["{\partial^C_1}", from=2-2, to=2-3]
        \arrow["{\Gamma_{1}^g}", from=1-1, to=2-2]
        \arrow["{\Gamma_0}", from=1-2, to=2-3]
\end{tikzcd}
\end{eqnarray}
where $A^g_1 = A_1\oplus A_1^L$, and the new boundaries are 
\begin{eqnarray}
        \partial_1^{A,g} = \begin{pmatrix}
            \partial_1^A & \sbkt{G_A^X}^T
        \end{pmatrix}, 
        \Gamma_1^g = \begin{pmatrix}
            \Gamma_1 & \Gamma_g
        \end{pmatrix}.
\end{eqnarray}
Here, $G_A^X$ is a generating basis for the gauge X-logicals in $A_1^L$, and $\Gamma_g$ is an induced chain map. The existence of this commuting diagram is similarly gauranteed by the exactness of chain complex, Lemma~\ref{lem:existence_com_diag}, and the assumption that $\mathcal{C}$ has no Z-metachecks. Based on this gauged system, we arrive at a sufficient condition for general X-distance. To arrive at the concise lemmas, we relax the inequalities in multiple steps. We comment at the end of this section how these overlooked structures can be crucial and utilized. We also remark on how our result connects to prior works, which are based on (relative) Cheeger constants~\cite{Cross2024ImprovedSurgery, swaroop2025universaladaptersquantumldpc}.
\begin{theorem}[X-distance] \label{thm:general_X_dist}
    Consider the gauged boundary $\partial_1^{A,g}$ and the induced commuting diagram in Eq.~\eqref{eq:gauge_commute_diag}. The merge code X-distance satsify
    \begin{eqnarray}
            d^X_{\text{merge}} \geq \min\sbkt{1, \frac{\rho}{\abs{\Gamma_1^g}}} d.
    \end{eqnarray}
     if $\partial_1^{A,g}$ has soundness $\rho$ up to $t\geq d$,
\end{theorem}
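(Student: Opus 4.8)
The plan is to take an arbitrary nontrivial dressed logical $X$ operator of the merged code—i.e. a vector in the merged-code's $X$-space that is not a gauge or stabilizer element—and show that it must have weight at least $\min(1,\rho/|\Gamma_1^g|)\,d$. Write such a representative as a pair $(a, c)$, where $a \in A_0$ is its support on the ancilla qubits and $c \in C_1$ is its support on the data qubits. The key structural fact is that because the merged code's $Z$-checks include the joint checks $\partial_0^A$ (on $A_{-1}\cup$) and the extended data checks $(\partial_1^C\;\;\Gamma_0)$, commuting with all $Z$-checks forces $\partial_0^A a = 0$ and $\partial_1^C c = \Gamma_0 a$. I would first dispose of the case $a = 0$: then $c$ is a $Z$-cycle of $\mathcal{C}$, and since it is a nontrivial dressed logical of the merged code it must in particular be a nontrivial logical of $\mathcal{C}$, hence $|c| \ge d$ and we are done.

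The substantive case is $a \ne 0$. Here I would use the gauged commuting diagram~\eqref{eq:gauge_commute_diag}: since $\partial_0^A a = 0$, we have $a \in \ker \partial_0^A$, and one argues (via exactness of $\mathcal{A}$ at $A_0$ after gauging, or directly) that modulo the image of the merged-code $X$-stabilizers and gauge operators we may take $a \in \operatorname{im}\partial_1^{A,g}$, say $a = \partial_1^{A,g} x$ for some $x \in A_1^g$. Crucially, because we have quotiented out all the ancilla $X$-stabilizers and the gauge $X$-logicals (which is exactly what the gauged boundary $\partial_1^{A,g} = (\partial_1^A\;\;(G_A^X)^T)$ accomplishes), $x$ is not annihilated trivially; more precisely $\partial_1^{A,g} x = a \ne 0$ so $|\partial_1^{A,g} x| = |a| > 0$. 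Now invoke the soundness hypothesis: if $|a| = |\partial_1^{A,g} x| \le t$ (and if $|a| > t \ge d$ we are already done since then $|(a,c)| \ge |a| > d$), then $d(x, \ker\partial_1^{A,g}) \le |a|/\rho$. Pick $x'$ realizing this reduced distance, $x' = x + z$ with $z \in \ker\partial_1^{A,g}$, so $\partial_1^{A,g} x' = a$ still and $|x'| \le |a|/\rho$. Then apply $\Gamma_1^g$: commutativity of~\eqref{eq:gauge_commute_diag} gives $\partial_1^C(\Gamma_1^g x') = \Gamma_1^g \partial_1^{A,g} x' = \Gamma_0 a$, which matches $\partial_1^C c$, so $c + \Gamma_1^g x' \in \ker\partial_1^C$ is a $Z$-cycle of $\mathcal{C}$.

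The finish: the vector $(a,c)$ differs from $(a,\, \Gamma_1^g x' + (c + \Gamma_1^g x'))$, and one shows that after subtracting the merged-code $X$-stabilizer $\partial_1^{A,g} x'$-induced element $(\partial_1^{A,g} x', \Gamma_1^g x')$ (which is a genuine $X$-stabilizer of the merged code, being in the image of the merged-code $\partial_1$), the residual is $(0, c + \Gamma_1^g x')$, a pure data-qubit $Z$-cycle. Since $(a,c)$ was a nontrivial dressed logical and we only subtracted stabilizers/gauge elements, $c + \Gamma_1^g x'$ must be a nontrivial logical of $\mathcal{C}$, hence $|c + \Gamma_1^g x'| \ge d$. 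Therefore $|c| \ge |c + \Gamma_1^g x'| - |\Gamma_1^g x'| \ge d - |\Gamma_1^g|\cdot|x'| \ge d - |\Gamma_1^g|\cdot|a|/\rho$. Combining with the trivial bound $|(a,c)| \ge |a|$: if $|a| \ge (\rho/|\Gamma_1^g|)\, d$ we are done immediately; otherwise $|c| \ge d - |\Gamma_1^g|\cdot|a|/\rho \ge d - d = 0$ is too weak, so instead one adds the two estimates with appropriate weights, $|(a,c)| = |a| + |c| \ge |a| + d - |\Gamma_1^g||a|/\rho$, and checks that this is minimized at the boundary of the regime, yielding $|(a,c)| \ge \min(1, \rho/|\Gamma_1^g|)\,d$.

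The main obstacle I anticipate is the bookkeeping around \emph{which} elements count as stabilizers versus gauge operators versus genuine logicals of the \emph{subsystem} merged code, and making rigorous the reduction "modulo stabilizers and gauge operators we may assume $a \in \operatorname{im}\partial_1^{A,g}$." This relies on the exactness assumptions baked into the construction of the gauged diagram (the existence of $\Gamma_g$ via Lemma~\ref{lem:existence_com_diag} and the hypothesis that $\mathcal{C}$ has no $Z$-metachecks), and one must be careful that subtracting $(\partial_1^{A,g} x',\,\Gamma_1^g x')$ really does land in the stabilizer-plus-gauge group of the merged code rather than changing the logical class. A secondary subtlety is handling the soundness weight cutoff $t$: one needs $|a| \le t$ to apply soundness, and the case $|a| > t \ge d$ must be handled separately (it is easy, since then weight already exceeds $d$), but this split has to be threaded consistently through the final optimization.
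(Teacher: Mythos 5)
Your proposal is correct and follows essentially the same route as the paper's proof: both reduce an arbitrary dressed logical to the form (ancilla part $=\partial_1^{A,g}x$, data part $=l_C^X+\Gamma_1^g x$), use soundness plus the degree bound $\abs{\Gamma_1^g x'}\le\abs{\Gamma_1^g}\,d(x,\ker\partial_1^{A,g})$ and the triangle inequality, and finish with the same two-regime optimization; you merely derive the decomposition bottom-up from the commutation relations rather than parametrizing dressed logicals top-down. One small imprecision: $(\partial_1^{A,g}x',\Gamma_1^g x')$ is not a genuine stabilizer of the merged code when $x'$ has support on the gauge block $A_1^L$—it is a stabilizer-times-gauge element—but since the theorem concerns the dressed distance of the subsystem merged code, subtracting gauge operators is permitted and your conclusion stands.
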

\begin{proof}
    The distance proof is a generalization of past works but the underlying idea is very similar. An observation is that $\forall l_C^X \in L_C^X$, it stills commutes with all Z-checks in the merge code since the merge step does not apply additional Z-checks on $C_1$. Therefore, a bare logicals basis is given by $L_C^X\setminus M_L$, where $M_L$ are the measured logicals such that $M_L\in S^X_{\text{merge}}$.

    Given a bare logicals basis, the weight can be reduced by any combination of stabilizers and gauge logicals. Before merging, the gauge logicals have a generating basis of $G_A^X$ by definition. After merging, the gauge logicals' supports extends to $A_0\oplus C_1$ with a generating basis of $\begin{pmatrix}
        \sbkt{G_A^X}^T\\ \Gamma_g
    \end{pmatrix}$. One can straightforwardly see that such a basis is not in the stabilizer group, $S^X_{\text{merge}}$, and it commutes with all Z-checks since Eq.~\eqref{eq:gauge_commute_diag} forms a commuting diagram. The ancilla X-stabilizers are similarly generated by a basis of $\begin{pmatrix}
        \partial_1^A\\ \Gamma_1
    \end{pmatrix}$. 
    Therefore, dressed logicals in the merged code have the following form. $\forall l^X_{\text{merge}}\in L_{\text{merge}}^X$, $\exists l^X_C\in L_C^X, x_s, x_g$ such that
    \begin{eqnarray}
            l^X_{\text{merge}} = \begin{pmatrix}
                l_C^X\\ 0
            \end{pmatrix} + \begin{pmatrix}
        \partial_1^A\\ \Gamma_1
    \end{pmatrix} x_s + \begin{pmatrix}
        \sbkt{G_A^X}^T\\ \Gamma_g
    \end{pmatrix} x_g
    \end{eqnarray}
    Therefore, the dressed logical weight is 
        \begin{eqnarray}
             \abs{l^X_{\text{merge}}} &=& \underbrace{\abs{l_C^X + \Gamma_1 x_s + \Gamma_g x_g}}_{\text{Supp. on }C_1} + \underbrace{\abs{ \partial_1^A x_s + \sbkt{G_A^X}^T x_g }}_{\text{Supp. on }A_0}\\
            &=& \abs{l_C^X + \Gamma_1^g x} + \abs{\partial_1^{A, g} x}
    \end{eqnarray}
    where we define $x := \begin{pmatrix}
        x_s\\ x_g
    \end{pmatrix}$. Here we do not need to explicitly include stabilizers of $\mathcal{C}$ because we accept any logical representation of $l_C^X$. It is important to observe that this logical weight and the following derivation is equivalent with respect to any kernel vectors $u\in \ker{\partial_1^{A,g}}$: let $x = x^\prime + u$, we have 
            \begin{eqnarray}
             \abs{l^X_{\text{merge}}} &=& \abs{l_C^X + \Gamma_1^g x} + \abs{\partial_1^{A, g} x} = \abs{l_C^{X, \prime} + \Gamma_1^g x^\prime} + \abs{\partial_1^{A, g} x^\prime}
    \end{eqnarray}
    where $l_C^{X, \prime} = l_C^{X} + \Gamma_1^g u$. Note that the gauge logicals generators are by default an independent basis, and it is not in the image of $\partial_1^A$. Thus, $\ker{\partial_1^{A,g}} = \set{\begin{pmatrix}
        u_s \\ 0
    \end{pmatrix} \vert u_s \in \ker{\partial_1^A}}$. Through Lemma~\ref{lem:surgery_measurements_framework}, we know $\Gamma_1\sbkt{\ker{\partial_1^A}}\subseteq S_X^C \oplus L_X^C$. Thus, $l_C^{X, \prime} \in L_X^C$. Since $\abs{l_C^{X}}, \abs{l_C^{X, \prime}}\geq d$ by definition of distance, to prove $ \abs{l^X_{\text{merge}}}\geq d$ it is sufficient to discuss when $\abs{\Gamma_1^g x} = d_{\Gamma_1^g}\sbkt{x, \ker{\partial_1^{A,g}}}$. Then, we have
    \begin{eqnarray}
            \abs{l^X_{\text{merge}}} &=& \abs{l_C^X + \Gamma_1^g x} + \abs{\partial_1^{A, g} x} \\
            &\geq & \abs{\abs{l_C^X} - \abs{\Gamma_1^g x}} + \abs{\partial_1^{A, g} x}\\
            &=& \abs{\abs{l_C^X} - d_{\Gamma_1^g}\sbkt{x, \ker{\partial_1^{A,g}}}} + \abs{\partial_1^{A, g} x}
    \end{eqnarray}
    One have already, $\abs{l^X_{\text{merge}}}\geq \abs{\partial_1^{A, g} x}$, so we only require below the soundness condition to hold for $\abs{\partial_1^{A, g} x}< d$. Now we discuss case by case. 
    \begin{enumerate}
        \item When $\abs{l_C^X} < d_{\Gamma_1^g}\sbkt{x, \ker{\partial_1^{A,g}}}$, we have 
        \begin{eqnarray}
                \abs{l^X_{\text{merge}}} &> & \abs{\partial_1^{A, g} x} \geq \rho \cdot d\sbkt{x, \ker{\partial_1^{A,g}}}\geq \frac{\rho}{\abs{\Gamma_1^g}} d
        \end{eqnarray}
        where the inequalities follow from the definition of soundness and Eq.~\eqref{eq:dist_ineq}
        \item When $\abs{l_C^X} \geq d_{\Gamma_1^g}\sbkt{x, \ker{\partial_1^{A,g}}}$, we have 
        \begin{eqnarray}
                \abs{l^X_{\text{merge}}} &\geq& \abs{l_C^X} - d_{\Gamma_1^g}\sbkt{x, \ker{\partial_1^{A,g}}} + \abs{\partial_1^{A, g} x}\\
                &\geq &d - \sbkt{\abs{\Gamma_1^g}-\rho}d\sbkt{x, \ker{\partial_1^{A,g}}}
        \end{eqnarray}
        Note that to expand the absolute value, we have imposed the initial assumption and that $d - \abs{\Gamma_1^g}d\sbkt{x, \ker{\partial_1^{A,g}}}\geq 0$, leading to $d\sbkt{x, \ker{\partial_1^{A,g}}}\leq\frac{d}{\max\sbkt{\rho, \abs{\Gamma_1^g}}}$. Thus, we arrive at 
        \begin{eqnarray}
                \abs{l^X_{\text{merge}}} \geq \begin{cases}
                    \frac{\rho}{\abs{\Gamma_1^g}} d, &\text{ if } \abs{\Gamma_1^g} \geq \rho\\
                    \frac{2\rho - \abs{\Gamma_1^g}}{\rho}d,& \text{ if } \abs{\Gamma_1^g} < \rho
                \end{cases}
        \end{eqnarray}
        The second case is gauranteed to be larger than $d$.
    \end{enumerate}
    Therefore, above concludes that we can gaurantee 
    \begin{eqnarray}
            d^X_{\text{merge}} \geq \min\sbkt{1, \frac{\rho}{\abs{\Gamma_1^g}}} d.
    \end{eqnarray}
\end{proof}

While the above considers codes with linear soundness $\rho$, it is straightforward to generalize the result to good but nonlinear soundness as in Definition~\ref{def:campbell_soundness}.

\begin{corollary}[X-distance under general soundness] 
    Assume the boundary $\partial_1^{A,g}$ is $(t,f)$-sound for $t\geq d$, the merge code X-distance satsify
    \begin{eqnarray}
            d^X_{\text{merge}} \geq \min\sbkt{d, f^{-1}\sbkt{\frac{d}{\abs{\Gamma_1^g}}}}
    \end{eqnarray}
    where $f$ is assumed to be monotoically increasing.
\end{corollary}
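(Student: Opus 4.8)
The plan is to obtain this corollary as a routine relaxation of Theorem~\ref{thm:general_X_dist}: the only place the linearity of the soundness is used in that proof is the estimate $\abs{\partial_1^{A,g}x}\ge\rho\,d\sbkt{x,\ker{\partial_1^{A,g}}}$, which I would replace with the general $(t,f)$-soundness bound $d\sbkt{x,\ker{\partial_1^{A,g}}}\le f\sbkt{\abs{\partial_1^{A,g}x}}$, valid whenever $\abs{\partial_1^{A,g}x}\le t$. Concretely, I fix an arbitrary dressed $X$-logical $l^X_{\text{merge}}$ of the merged code and, exactly as in the theorem, write it in the decomposed form so that $\abs{l^X_{\text{merge}}}=\abs{l_C^{X}+\Gamma_1^g x}+\abs{\partial_1^{A,g}x}$ for some $l_C^X\in L_X^C$ (hence $\abs{l_C^X}\ge d$) and some $x$; using the freedom to shift $x$ by a vector of $\ker{\partial_1^{A,g}}$, I may assume $\abs{\Gamma_1^g x}=\delta:=d_{\Gamma_1^g}\sbkt{x,\ker{\partial_1^{A,g}}}$, with the residual logical part still lying in $L_X^C$. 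Writing $s:=\abs{\partial_1^{A,g}x}$ and $D:=\min\sbkt{d,f^{-1}\sbkt{d/\abs{\Gamma_1^g}}}$ (with the convention $f^{-1}(y):=\min\set{z:f(z)\ge y}$), everything reduces to showing $\abs{l_C^{X}+\Gamma_1^g x}+s\ge D$.

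The case analysis then runs as follows. If $s\ge d$, then $\abs{l^X_{\text{merge}}}\ge s\ge d\ge D$, and we are done. Otherwise $s<d\le t$, so $(t,f)$-soundness applies and gives $d\sbkt{x,\ker{\partial_1^{A,g}}}\le f(s)$, hence $\delta\le\abs{\Gamma_1^g}f(s)$ by the elementary inequality~\eqref{eq:dist_ineq}. If $\delta\ge d$, this forces $f(s)\ge d/\abs{\Gamma_1^g}$, so monotonicity of $f$ gives $s\ge f^{-1}\sbkt{d/\abs{\Gamma_1^g}}\ge D$, and once more $\abs{l^X_{\text{merge}}}\ge s$ closes the case. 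In the remaining case $\delta<d\le\abs{l_C^X}$, the reverse triangle inequality gives $\abs{l^X_{\text{merge}}}\ge\sbkt{\abs{l_C^X}-\delta}+s\ge d-\abs{\Gamma_1^g}f(s)+s$, and I would check that the right-hand side is at least $D$. Minimizing over all dressed logicals then gives $d^X_{\text{merge}}\ge D$; substituting the linear choice $f(s)=s/\rho$ (so $f^{-1}(y)=\rho y$) recovers $\min\sbkt{1,\rho/\abs{\Gamma_1^g}}\,d$, matching Theorem~\ref{thm:general_X_dist}.

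The hard part is verifying that last inequality, $d-\abs{\Gamma_1^g}f(s)+s\ge D$ for $0\le s<f^{-1}\sbkt{d/\abs{\Gamma_1^g}}$. Its two endpoints are immediate: as $s\to 0$ the left side tends to $d$ (using $f(0)=0$), and as $s\to f^{-1}\sbkt{d/\abs{\Gamma_1^g}}$ it tends to $f^{-1}\sbkt{d/\abs{\Gamma_1^g}}$, so both ends meet $D$. For intermediate $s$ I need the map $s\mapsto d+s-\abs{\Gamma_1^g}f(s)$ to stay above the smaller endpoint value, which holds as soon as $f$ is convex: the map is then concave, hence lies above the chord joining its endpoint values. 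This is the one point where plain monotonicity of $f$ is not quite enough, so I would add convexity of $f$ as an explicit hypothesis; it costs nothing in practice, since the soundness functions arising for expander and HGP ancillas --- linear $f$, and more generally polynomial $f$ of degree at least one --- are convex, and the bound then simply interpolates its two easy endpoint cases. With convexity in hand, the remainder is a verbatim transcription of the proof of Theorem~\ref{thm:general_X_dist}, with the linear-soundness step replaced by the $(t,f)$-soundness inequality as indicated.
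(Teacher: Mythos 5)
Your route is the same as the paper's: the published proof is a two-line remark that one ``follows the same case-by-case discussion'' as Theorem~\ref{thm:general_X_dist} after replacing the linear soundness step by $\abs{\partial_1^{A,g}x}\ge f^{-1}\sbkt{d\sbkt{x,\ker{\partial_1^{A,g}}}}$, which is precisely the substitution you make. The difference is that you actually carry the case analysis through, and in doing so you have put your finger on a genuine gap in the statement as written: on the interior of the relevant interval the inequality $d+s-\abs{\Gamma_1^g}f(s)\ge\min\sbkt{d,\,f^{-1}\sbkt{d/\abs{\Gamma_1^g}}}$ is \emph{not} a consequence of monotonicity of $f$ alone. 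Your diagnosis is correct: taking, say, $f(s)=c\sqrt{s}$ with suitable $c$ makes $d+s-\abs{\Gamma_1^g}f(s)$ dip strictly below both endpoint values, and the same obstruction appears if one instead parametrizes by $d\sbkt{x,\ker{\partial_1^{A,g}}}$ as the paper's displayed substitution suggests. Convexity of $f$ (equivalently concavity of $f^{-1}$), together with $f(0)=0$, is exactly what makes the map concave and hence minimized at the endpoints, and it costs nothing for the paper's applications since Lemma~\ref{lem:soundness_tensor_code} supplies $f(x)=x^2/4$. So your added hypothesis is the right repair, and your check that the linear case recovers Theorem~\ref{thm:general_X_dist} is a good sanity test.

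One small completion: in your final case ($\delta<d$) you verify the hard inequality only for $0\le s<f^{-1}\sbkt{d/\abs{\Gamma_1^g}}$. For the complementary range $f^{-1}\sbkt{d/\abs{\Gamma_1^g}}\le s<d$ you should not pass through $d-\abs{\Gamma_1^g}f(s)+s$ (which can be small or negative there for rapidly growing $f$); instead use $\delta<d$ directly to get $\abs{l^X_{\text{merge}}}\ge d-\delta+s>s\ge f^{-1}\sbkt{d/\abs{\Gamma_1^g}}\ge D$. With that one line added, your argument is correct and, unlike the paper's sketch, complete --- and it shows the corollary's hypothesis should be strengthened from ``monotonically increasing'' to ``monotonically increasing and convex with $f(0)=0$'' (or the bound weakened accordingly).
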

\begin{proof}
    The proof proceeds largely the same as in Theorem~\ref{thm:general_X_dist}. The major difference is that we replace the syndrome weight constraints with 
    \begin{eqnarray}
            \abs{\partial_1^{A,g}}\geq f^{-1}\sbkt{d\sbkt{x, \ker{\partial_1^{A,g}}}} \geq f^{-1}\sbkt{\frac{1}{\abs{\Gamma_1^g}}d\sbkt{x, \ker{\partial_1^{A,g}}}}.
    \end{eqnarray}
    The invertibility of $f$ is gauranteed by it being monotonically increasing. The corollary then follows the same case-by-case discussion.
\end{proof}

The above lemma puts a constraint on the ratio between soundness and weight of $\Gamma_1^g = \begin{pmatrix}
            \Gamma_1 & \Gamma_g
        \end{pmatrix}$. However, this is not very useful in general since while $\rho = \bigO{1}$ by the constraint of LDPC $\partial_1^A$, there is no constraint on $\abs{\Gamma_g}$. The following corollary replaces the constraint on $\abs{\Gamma_g}$ by an assumption on the X-distance of $\mathcal{A}$.

\begin{corollary}[X-distance with sufficient ancilla X-distance] \label{cor:X_dist_suff_anc_dist}
    Assume the boundary $\partial_1^{A}$ has soundness $\rho$ up to $t\geq d$, the merge code X-distance satsify
    \begin{eqnarray}
            d^X_{\text{merge}} \geq \min\sbkt{1, \frac{\rho}{\abs{\Gamma_1}}} d.
    \end{eqnarray}
    if the ancilla code, $\mathcal{A}$, has sufficient X-distance, $d_A^X\geq d$.
\end{corollary}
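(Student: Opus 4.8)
To establish Corollary~\ref{cor:X_dist_suff_anc_dist}, the plan is to replay the case analysis from the proof of Theorem~\ref{thm:general_X_dist}, but to organize the split around whether the ancilla‑logical component of the correction vector vanishes, using the hypothesis $d_A^X\ge d$ to dispatch the nonzero case instead of invoking soundness of the full gauged boundary $\partial_1^{A,g}=\sbkt{\partial_1^A\ \ \sbkt{G_A^X}^T}$. Recall that in that proof every dressed $X$‑logical of the merged code can be written, for some $l_C^X\in L_X^C$ and some $x=\sbkt{x_s,x_g}$ with $x_s\in A_1$, $x_g\in A_1^L$, with weight
\begin{equation}\label{eq:cor_weight}
\abs{l_{\mr{merge}}^X}=\underbrace{\abs{l_C^X+\Gamma_1 x_s+\Gamma_g x_g}}_{\text{support on }C_1}+\underbrace{\abs{\partial_1^A x_s+\sbkt{G_A^X}^T x_g}}_{\text{support on }A_0},
\end{equation}
and that $\ker{\partial_1^{A,g}}=\lbkt{\sbkt{u_s,0}:u_s\in\ker{\partial_1^A}}$, so the value of $x_g$ is an invariant of the operator.

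First I would treat the case $x_g\neq 0$. Here the $A_0$‑support vector $v:=\partial_1^A x_s+\sbkt{G_A^X}^T x_g$ lies in $\ker{\partial_0^A}$—it commutes with all $Z$‑checks of $\mc{A}$ since $\partial_0^A\partial_1^A=0$ and $G_A^X\sbkt{H_Z^A}^T=\mathbf{0}$—but it is not an $X$‑stabilizer of $\mc{A}$: by the subsystem‑code normalization of Appendix~\ref{app:notation} (namely $G_A^X\sbkt{G_A^Z}^T=I_g$ and $H_A^X\sbkt{G_A^Z}^T=\mathbf{0}$), any nonzero combination of rows of $G_A^X$ anticommutes with some $Z$‑gauge‑logical of $\mc{A}$ and is therefore not a stabilizer, so $\sbkt{G_A^X}^T x_g\notin\im{\partial_1^A}$ whenever $x_g\neq 0$. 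Hence $v$ is a valid representative of a nontrivial $X$‑logical of the ancilla code, so $\abs{v}\ge d_A^X\ge d$, and \eqref{eq:cor_weight} gives $\abs{l_{\mr{merge}}^X}\ge\abs{v}\ge d\ge\min\sbkt{1,\rho/\abs{\Gamma_1}}\,d$.

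Next, for the case $x_g=0$ the weight in \eqref{eq:cor_weight} collapses to $\abs{l_C^X+\Gamma_1 x_s}+\abs{\partial_1^A x_s}$, and the kernel shifts $x_s\mapsto x_s+u_s$, $u_s\in\ker{\partial_1^A}$, used to optimize the bound keep $x_g=0$; moreover $\Gamma_1\sbkt{\ker{\partial_1^A}}\subseteq S_X^C\oplus L_X^C$ by Lemma~\ref{lem:surgery_measurements_framework}, so the shifted data‑code operator remains logical. This is precisely the situation analyzed in the proof of Theorem~\ref{thm:general_X_dist} with the ungauged pair $\sbkt{\partial_1^A,\Gamma_1}$ in place of $\sbkt{\partial_1^{A,g},\Gamma_1^g}$: choosing $x_s$ with $\abs{\Gamma_1 x_s}=d_{\Gamma_1}\sbkt{x_s,\ker{\partial_1^A}}$ and running the identical two‑subcase argument—invoking soundness $\rho$ of $\partial_1^A$ up to $t\ge d$ whenever $\abs{\partial_1^A x_s}<d$, together with $d_{\Gamma_1}\sbkt{x_s,\ker{\partial_1^A}}\le\abs{\Gamma_1}\,d\sbkt{x_s,\ker{\partial_1^A}}$ from Eq.~\eqref{eq:dist_ineq}—yields $\abs{l_{\mr{merge}}^X}\ge\min\sbkt{1,\rho/\abs{\Gamma_1}}\,d$. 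Combining the two cases proves the bound, and $d^X_{\mr{merge}}\ge\min\sbkt{1,\rho/\abs{\Gamma_1}}\,d$ follows.

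The step I expect to be the main obstacle is the $x_g\neq 0$ case: one must be sure that $\partial_1^A x_s+\sbkt{G_A^X}^T x_g$ never collapses to a stabilizer or to a trivial logical of $\mc{A}$, which is exactly where the normalization of $G_A^X$ as an independent gauge‑logical basis is essential—the argument would fail if the gauge logicals were allowed to overlap the stabilizer group. A minor point to verify is that the kernel‑shift freedom used to minimize $\abs{\Gamma_1 x_s}$ in the $x_g=0$ case does not move us into the other case, which holds because $\ker{\partial_1^{A,g}}$ has trivial $A_1^L$‑component.
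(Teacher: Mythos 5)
Your proposal is correct and follows essentially the same route as the paper: split on whether the gauge component $x_g$ vanishes, use $d_A^X \ge d$ to bound the $A_0$-support term when $x_g \neq 0$, and rerun the soundness argument of Theorem~\ref{thm:general_X_dist} with the ungauged pair $\sbkt{\partial_1^A, \Gamma_1}$ when $x_g = 0$. Your explicit verification that $\partial_1^A x_s + \sbkt{G_A^X}^T x_g$ cannot collapse into the stabilizer group (via $G_A^X\sbkt{G_A^Z}^T = I_g$) is a welcome elaboration of a step the paper only asserts.
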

\begin{proof}
    The proof extends from Theorem~\ref{thm:general_X_dist}. The key difference is that $\forall x\in A_1^g$, we have
    \begin{eqnarray}
            \partial_1^{A,g}x = \begin{pmatrix}
            \partial_1^A & \sbkt{G_A^X}^T
        \end{pmatrix}\begin{pmatrix}
            x_{A_1} \\ x_{g}
        \end{pmatrix} = \partial_1^A x_{A_1} + \sbkt{G_A^X}^T x_g
    \end{eqnarray}
    where we have assumed $x = \begin{pmatrix}
        x_{A_1}\\ x_g
    \end{pmatrix}$ with $x_{A_1}\in A_1, x_g\in A_1^L$. Consider the case where $\partial_1^{A,g}x$ is equivalent to a gauge logical up to stabilizer group, i.e. $x_g\neq 0$. The gauge X-logicals are the X-logicals of the ancilla code $\mathcal{A}$ viewed independently. Therefore, under the assumption $d_A^X\geq d$, we have by definition
    \begin{eqnarray}
            \abs{\partial_1^{A, g} x}\geq d.
    \end{eqnarray}
    Therefore, we no longer need to consider dressed logicals, i.e. one only needs to prove the minimum weight restricted to $x = \begin{pmatrix}
        x_{A_1} \\ 0
    \end{pmatrix}$. This effectively leads us back to the scenario where the code $\mathcal{A}$ has no gauge logicals, so the distance can be gauranteed by the expansion of the added X-checks. The corollary then follows analagously.
\end{proof}

Notably, under the assumption that $\abs{\Gamma_1}$ is LDPC, Corollary~\ref{cor:X_dist_suff_anc_dist} is built upon conditions solely on the independent ancilla code. From a high-level, our condition presents a general desiderata sufficient for distance-preservation: the ancilla code should have sufficient X-distance, and the transpose of its X-check should have constant classical soundness. The condition on ancilla distance is not very restrictive since it is only single-sided. This enable the opportunity to consider codes with gauranteed distance in one side only, which is natural in code constructions such as lifted product codes~\cite{Panteleev_2022, ZhengHan_in_prep}, fiber bundle codes~\cite{Hastings_2021}, and others~\cite{Evra_2020, Kaufman_2021} before distance balancing. Moreover, the search for expanding classical and quantum codes is an active direction of research~\cite{dinur2021locallytestablecodesconstant, Panteleev_2022_asymptotically, Lin_2022, Wills_2025_IEEE, Cross2024quantumlocally}. We highlight that the sufficient condition derived is distinct from the desiderata for conventional expanding codes and could motivate future research directions.

One should note that Corollary~\ref{cor:X_dist_suff_anc_dist} is a sufficient condition with relaxations. An example is that we have applied the inequality in Eq.~\eqref{eq:dist_ineq} multiple times. In fact, in prior works and some constructions we present in Appendix~\ref{app:structered_ancilla}, $\Gamma_1$ only has support on a limited subset of $A_1$. In these cases, it is unnecessary to require global expansion and local expansion is sufficient. When one narrow down to considering low-rate surgery and transversal-type chain maps $\Gamma_1 = \begin{pmatrix}
        I & \mathbf{0}\\
        \mathbf{0} & \mathbf{0}
\end{pmatrix}$, this condition is equivalent to the relative Cheeger constant condition in Ref.~\cite{swaroop2025universaladaptersquantumldpc}.

Another piece of information that could greatly relax our conditions is the structure of the logicals. Here, structure refers to both structure in the logical basis and in the chosen set of logical measurements, $M$. In many cases, especially when there is translation symmetry or product structures, these information reduces the overhead greatly. Some well-known examples include logical measurements of surface code and toric codes through lattice surgery~\cite{Horsman_2012}, which require almost minimal ancilla overhead. The ancilla system does not have sufficient global or local expansion, but it is still distance-preserving. In Appendix~\ref{app:structered_ancilla}, we utilize such structures in the data code to demonstrate examples when the requirement on the expansion property in Corollary~\ref{cor:X_dist_suff_anc_dist} can be completely removed.

\section{High-rate surgery through structured ancilla \label{app:structered_ancilla}}

In this section, we utilize the structure in the ancilla and/or the data code to (1) reduce overhead and (2) provide more concrete and constructive schemes for high-rate surgery. In Appendix~\ref{app:HGP_ancilla_general_data}, we specify the ancilla construction as a HGP code, which is an example where the ancilla code satisfy the desiderata listed in Corollary~\ref{cor:X_dist_suff_anc_dist} while being high-rate. In Appendix~\ref{app:HGP_ancilla_HGP_data}, we further specify the memory to be encoded in a HGP code. Such assumption largely alleviates the requirement on expansion, which is often argued in the asymptotic limit.

While our discussions focus on HGP codes, a large portion of them can be applicable to other codes, specifically those that bears a similar product structure, such as LP code~\cite{Panteleev_2022}, fiber bundle code~\cite{Hastings_2021}, and balanced product codes~\cite{Breuckmann_2021}. The key reason we perform the analysis on HGP codes is that its distance is well understood and possess a known complete logical basis, both of which are lacking in the other codes with better code parameters in general. We present some progress in this direction in Ref.~\cite{ZhengHan_in_prep} and leave the adaptation of discussions in this sections to those codes to future work.

\subsection{Surgery on general data code \label{app:HGP_ancilla_general_data}}
We begin with quoting some well-known results from classical coding theory, which constructs expander codes from left-expanding bipartite graphs. We only outline the most relevant result here for simplicity.
\begin{definition}[Expander codes; \cite{SipserSpielman1996ExpanderCodes, Leverrier_2015}]\label{def:expander_code}
    A $\sbkt{\gamma, \delta}$-expanding code with $\delta<1/2$ is a classical LDPC code with parameters $[n, k = \Theta\sbkt{n}, d\geq \gamma n]$. For any subset of $S\subseteq F_2^{n}$ such that $\abs{S}\leq \gamma n$, we have 
    \begin{eqnarray}
            \Gamma_u\sbkt{S} \geq (1-2\delta) \Delta \abs{S}
    \end{eqnarray}
    where $\Delta$ is the column weight of its biregular parity check matrix. $\Gamma_u\sbkt{S}$ denotes the number of checks only connected to one qubit in $S$.
\end{definition}

Next, we formally define the product construction. In particular, we first define the product complex and the hypergraph product code~\cite{Bravyi_2014_homological, PhysRevLett.122.230501, Tillich2014HGP, PRXQuantum.2.040101}.
\begin{definition}[Product complex and hypergraph product code]
    Consider two chain complex, $\mathcal{B}, \mathcal{D}$, with boundaries $\partial^{B, D}$, the product complex is defined as $\mathcal{P} = \text{Prod}\sbkt{\mathcal{B}, \mathcal{D}}$ with 
    \begin{eqnarray}
            P_k := \bigoplus_{i+j=k} B_i\otimes D_j.
    \end{eqnarray}
   $\forall b_i\otimes d_j\in P_k$, the boundary maps act as $\partial^P_k \sbkt{b_i\otimes d_j} = \partial_i^B b_{i}\otimes d_j +  b_{i}\otimes \partial_j^D d_j$. A (2-dimensional) hypergraph product code (HGP) is a length-2 complex given by $\text{HGP}\sbkt{\mathcal{B}, \mathcal{D}} := \text{Prod}\sbkt{\mathcal{B}, \mathcal{D}}$ with $\mathcal{B}, \mathcal{D}$ being length-1 complexes, i.e. classical codes.
\end{definition}
The definition of product complex and HGP can be naturally extended to higher dimensions to homological product codes~\cite{Bravyi_2014_homological, PhysRevLett.122.230501}, but we restrict our discussion to HGP and discuss high-dimensional extensions when possible. For better visualization, $\text{HGP}\sbkt{\mathcal{B}, \mathcal{D}}$ can be written as 
\begin{eqnarray}
        \begin{tikzcd}[column sep=large, row sep=large]
        {} & {B_1\otimes D_0} & {} \\
        {B_1\otimes D_1} &  & {B_0\otimes D_0}  \\
        {} & {B_0\otimes D_1} & 
        \arrow["I_{B_1}\otimes \partial^D", from=2-1, to=1-2]
        \arrow["\partial^B \otimes I_{D_1}", from=2-1, to=3-2]
        \arrow["I_{B_0}\otimes \partial^D", from=3-2, to=2-3]
        \arrow["\partial^B \otimes I_{D_0}", from=1-2, to=2-3]
\end{tikzcd}
\end{eqnarray}
Through Kunneth theorem, it has parameters
\begin{eqnarray}
        [[n = n_{B_{1}}n_{D_0}+ n_{B_{0}}n_{D_1}, k = k_B^T k_D + k_B k_D^T, d_X = \min\sbkt{d_B, d_D}, d_Z = \min\sbkt{d_B^T, d_D^T}]]
\end{eqnarray}
where we have defined $n_{V}:= \dim{V}$ for any vector space $V$. Distance $d$ and kernel dimension $k$ with transpose superscript represent the corresponding property of the transpose code, i.e. considering $\partial_{B, D}^T$ as the parity check matrix. Note that our definition of HGP code is different from some literature where their definitions correspond to our $\text{HGP}\sbkt{\mathcal{B}, \mathcal{D}^T}$. However, adopting our notation will make notations simpler in this work.

Consider a HGP code, $\mathcal{P}=\text{HGP}\sbkt{\mathcal{B}, \mathcal{D}}$, and restrict to the subcomplex, $P_2\longrightarrow P_1$. This exactly corresponds to a tensor product code in classical codes~\cite{bensasson2004robustlocallytestablecodes, kalachev2023twosidedrobustlytestablecodes}. We denote a complex with this structure as $\text{Tensor}\sbkt{\mathcal{B}, \mathcal{D}}$. While tensor product codes are well-known examples of robustly testable codes under certain assumptions~\cite{10.1007/11830924_29}, we derive the following expansion property for our purpose.

\begin{lemma}[Soundness properties of tensor product codes]\label{lem:soundness_tensor_code}
    Consider a tensor product code, $\text{Tensor}\sbkt{\mathcal{B}, \mathcal{D}}$, its parity check is $\sbkt{t, f}$-sound with $f(x) = x^2/4$ and $t = \min\sbkt{d_B,d_D}$~\cite{Campbell_2019}. Under the assumption that at least one of $\mathcal{B}, \mathcal{D}$ is a $(\gamma,\delta)$-expanding code, the parity check of $\text{Tensor}\sbkt{\mathcal{B}, \mathcal{D}}$ has linear soundness of 
    \begin{eqnarray}
            \rho = \sbkt{1-2\delta}\Delta
    \end{eqnarray}
    up to $t = \gamma n_{D_1}$, where $n_{D_1}$ denotes the number of bits in $\mathcal{D}$ and $\Delta$ is the weight of $\partial^D$.
\end{lemma}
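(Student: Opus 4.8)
The plan is to identify a vector $x\in B_1\otimes D_1$ with a matrix $X\in\mathbb F_2^{n_{B_1}\times n_{D_1}}$, so that the tensor‑product check $H$ — which sends $x$ to $\big((I_{B_1}\otimes\partial^D)x,\,(\partial^B\otimes I_{D_1})x\big)\in(B_1\otimes D_0)\oplus(B_0\otimes D_1)$ — splits the syndrome as $Hx=(S_{\mathrm r},S_{\mathrm c})$, where $S_{\mathrm r}$ records the $\mathcal D$‑syndrome of every row of $X$ and $S_{\mathrm c}$ the $\mathcal B$‑syndrome of every column; thus $\abs{Hx}=\abs{S_{\mathrm r}}+\abs{S_{\mathrm c}}$, and $\ker H$ is exactly the set of matrices whose rows all lie in $\ker\partial^D$ and whose columns all lie in $\ker\partial^B$ (the ``tensor codewords''). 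The $(t,f)$‑soundness with $f(x)=x^2/4$ and $t=\min(d_B,d_D)$ is the bare tensor‑code statement and may be quoted from Ref.~\cite{Campbell_2019}; the remaining work is to upgrade this to \emph{linear} soundness in the stated window using the expansion of $\mathcal D$. I would run a ``correct the rows, then clean up the columns'' argument.

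Step one (correct the rows using $\mathcal D$): unique‑neighbour expansion of the $(\gamma,\delta)$‑expander code $\mathcal D$ (Definition~\ref{def:expander_code}) gives $\abs{\partial^D u}\ge(1-2\delta)\Delta\abs u$ for all $u$ with $\abs u\le\gamma n_{D_1}$, since every check touching $\mathrm{supp}(u)$ in a unique bit is violated. This yields the classical soundness bound $d(v,\ker\partial^D)\le\abs{\partial^D v}/((1-2\delta)\Delta)$ as long as $\abs{\partial^D v}$ is small enough that the minimum‑weight coset representative of $v$ has weight at most $\gamma n_{D_1}$. Applying this to each row $r_i$ of $X$ with $\mathcal D$‑syndrome $\sigma_i$, pick a minimum‑weight $z_i$ with $\partial^D z_i=\sigma_i$; then $\abs{z_i}\le\abs{\sigma_i}/((1-2\delta)\Delta)$, and collecting these into a matrix $Z$ gives $\abs Z=\sum_i\abs{z_i}\le\abs{S_{\mathrm r}}/((1-2\delta)\Delta)$ while $C:=X+Z$ has every row in $\ker\partial^D$. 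The budget $\abs{S_{\mathrm r}}\le\abs{Hx}\le t$ both legitimises the row‑by‑row correction and bounds $\abs Z$.

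Step two (the residual column syndrome vanishes): writing $G_D$ for a generator matrix of $\ker\partial^D$, every row of $C$ being a $\mathcal D$‑codeword means $C=EG_D$, so $(\partial^B\otimes I_{D_1})C=(\partial^B E)G_D$ — a matrix whose \emph{rows} are again $\mathcal D$‑codewords. On the other hand $(\partial^B\otimes I_{D_1})C=S_{\mathrm c}+(\partial^B\otimes I_{D_1})Z$ has weight at most $\abs{S_{\mathrm c}}+\abs{\partial^B}\abs Z\le\abs{S_{\mathrm c}}+\tfrac{\abs{\partial^B}}{(1-2\delta)\Delta}\abs{S_{\mathrm r}}$, which for $\abs{Hx}\le t$ with $t=\Theta(\gamma n_{D_1})$ chosen in light of the bounded LDPC degree $\abs{\partial^B}$ is strictly below $d_D$. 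Since $d_D\ge\gamma n_{D_1}$ (expander codes have linear minimum distance), every nonzero $\mathcal D$‑codeword has weight $\ge d_D$, so each row of $(\partial^B\otimes I_{D_1})C$ must vanish; hence every column of $C$ also lies in $\ker\partial^B$, i.e.\ $C\in\ker H$. Therefore $d(x,\ker H)\le\abs{x+C}=\abs Z\le\abs{Hx}/((1-2\delta)\Delta)$, which is the asserted linear soundness $\rho=(1-2\delta)\Delta$.

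I expect the main obstacle to be Step two's bookkeeping: correcting the rows unavoidably perturbs the column syndromes, and the argument only closes if that perturbation, added to the original column syndrome, stays below $d_D$ so that the codeword‑structured residual is forced to be zero. Making this precise — in particular fixing the exact window $t$, verifying that the minimum‑weight row pre‑images $z_i$ genuinely fall inside the expander code's soundness range rather than in some large ``uncorrectable'' coset, and tracking the LDPC degree of $\partial^B$ — is the delicate part; the rest is essentially Kronecker‑product bookkeeping together with the quoted tensor‑code and expander‑code facts.
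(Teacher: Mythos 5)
Your proposal takes a genuinely different route from the paper. The paper never constructs a nearby codeword explicitly: it lower-bounds the syndrome weight directly via a three-case analysis on whether rows/columns of the given error matrix are base-code codewords, and in the generic case applies unique-neighbour expansion of $\mathcal{D}$ to the \emph{actual rows} $r_i$ of the error, whose weights are controlled by $\max_i\abs{r_i}\le \#\{\text{nonzero columns}\}\le\abs{\partial e}\le t$. Your ``decode the rows, then force the column syndrome to vanish'' argument is attractive, and your Step two is a clean piece of reasoning the paper does not use: once every row of $C=X+Z$ lies in $\ker\partial^D$, the matrix $\partial^B C=(\partial^B E)G_D$ has $\mathcal{D}$-codeword rows, so a total weight below $d_D$ forces it to vanish. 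The constant-factor degradation of the window ($t=\Theta(\gamma n_{D_1})$ rather than $\gamma n_{D_1}$ exactly, to absorb $\abs{\partial^B}/\rho$) is harmless for how the lemma is used downstream.

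The genuine gap is in Step one, and it is exactly the point you flag but do not close. You take $z_i$ to be a \emph{minimum-weight} preimage of $\sigma_i$ and assert $\abs{z_i}\le\abs{\sigma_i}/\sbkt{(1-2\delta)\Delta}$. Unique-neighbour expansion only gives $\abs{\partial^D u}\ge(1-2\delta)\Delta\abs{u}$ for $u$ with $\abs{u}\le\gamma n_{D_1}$; it says nothing about cosets of $\ker\partial^D$ whose minimum-weight representative exceeds $\gamma n_{D_1}$, and such cosets can have arbitrarily small syndrome weight (indeed, expansion itself forces any preimage of a weight-one syndrome to have weight $>\gamma n_{D_1}$). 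So ``$\abs{\sigma_i}$ small'' does not imply ``$z_i$ lies in the expansion range,'' and the bound on $\abs{Z}$ can fail. The repair is essentially to reintroduce the paper's case analysis: when no column of $X$ is a nonzero $\mathcal{B}$-codeword, each row satisfies $\abs{r_i}\le\abs{S_{\mathrm c}}\le\abs{Hx}\le\gamma n_{D_1}$, so you may take $z_i=r_i$ (zero the row out) and expansion applied to $r_i$ itself gives $\abs{r_i}\le\abs{\sigma_i}/\rho$; the cases where a column (or a row) is a nonzero base-code codeword must be handled separately, as the paper does by shifting by tensor codewords and invoking $d_B,d_D\ge d$. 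Without this, Step one does not stand on its own.
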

\begin{proof}
    The first part of the lemma was proved in Ref.~\cite{Campbell_2019} to show single-shot properties of two- and high-dimensional hypergraph product codes. We provide a rough sketch of the idea here. The bits of a tensor code can be equivalently considered to be in $F_2^{n_{B_1}\otimes n_{D_1}}$ up to rearrangement. In this representation, the bits form a matrix of size $n_{B_1}\times n_{D_1}$. The checks are a direct sum of row checks and column checks. Consider an error configuration $e\in F_2^{n_{B_1}\otimes n_{D_1}}$, with equivalent row and column vector forms
    \begin{eqnarray}
            e = \begin{pmatrix}
                c_1 & c_2 & \dots & c_{n_{B_1}}
            \end{pmatrix} = \begin{pmatrix}
                r^T_1 \\ r^T_2 \\ \vdots \\ r^T_{n_{A_1}}
            \end{pmatrix}.
    \end{eqnarray}
    Suppose the boundary operator of $\text{Tensor}\sbkt{\mathcal{B}, \mathcal{D}}$ is $\partial$. The syndrome is given by
    \begin{eqnarray}
            \partial e = \begin{pmatrix}
                \partial^B c_1 & \partial^B c_2 & \dots & \partial^B c_{n_{B_1}}
            \end{pmatrix} \oplus \begin{pmatrix}
                \partial^D r_1 \\ \partial^D r_2 \\ \vdots \\ \partial^D r_{n_{A_1}}
            \end{pmatrix}
    \end{eqnarray}
    A case-by-case discussion follows as below.
    \begin{enumerate}
        \item If none of the columns and rows of $e$ are in $\ker{\partial^{B, D}}$ respectively, the syndrome size is lower bounded by the sum of its column and row support size, while the error weight is upper bounded by the product of its column and row support size. This leads to 
        \begin{eqnarray}
            \abs{e} \leq \abs{\text{rowsupp}(e)} \abs{\text{rowsupp}(e)} \leq \frac{1}{2}\sqrt{\abs{\text{rowsupp}(e)} + \abs{\text{colsupp}(e)}} \leq \frac{1}{2}\sqrt{\abs{\partial e}},
        \end{eqnarray}
        \item If certain columns of $e$ are in $\ker{\partial^{B}}$ but none of its rows are in $\ker{\partial^{D}}$, since the columns are of weight at least $d_B$, the row syndrome size is lower bounded by $d_B$. The symmetric case where rows of $e$ are in $\ker{\partial^D}$ is argued similarly.
        \item If there exists a pair of index $(i,j)$ with $i\in [n_{B_1}], j\in [n_{A_1}]$ such that both $r_i\in \ker{\partial^A}$ and $c_j\in \ker{\partial^B}$, we can consider an equivalent configuration $e^\prime = e + r_i\otimes c_j$. Considering these two error configurations are equivalent under soundness analysis because it is only concerned with the reduced distance, $d\sbkt{e, \ker{\partial}}$ with $\ker{\partial} = \ker{\partial^A}\times \ker{\partial^B}$. After applying one such transformation for each paired index $(i,j)$, $e^\prime$ will reduce to either case (1) or (2).
    \end{enumerate}
    The conclusion from the above three cases lead to the result presented in the first half of the lemma.

    If additional expansion assumption is made on either of the codes, the soundness is boosted to linear scaling. Without loss of generality, consider $\mathcal{D}$ to be a $(\gamma,\delta)$-expanding code as defined in Definition~\ref{def:expander_code}. Previous discussion in case 2 and 3 follow identically. Under the assumption of case 1, we note that
    \begin{eqnarray}
            \abs{\partial e}\geq\abs{\text{rowsupp}(e)}\geq \max_i \abs{r_i}.
    \end{eqnarray}
    Thus, if we restrict to considering $\forall r_i, \abs{r_i}\leq t = \gamma n_{B_1}$, the error weight is upper bounded as $\partial^B r_i \geq \sbkt{1-2\delta}\Delta \abs{r_i}$, which leads to
    \begin{eqnarray}
            \abs{\partial e} \geq \abs{\begin{pmatrix}
                \partial^B r_1 \\ \partial^B r_2 \\ \vdots \\ \partial^B r_{n_{A_1}}
            \end{pmatrix}} \geq \sbkt{1-2\delta}\Delta \abs{e}.
    \end{eqnarray}
    Here, we have used $\abs{e} = \sum_i \abs{r_i}$. This inequality immediately implies linear soundness with
    \begin{eqnarray}
            \rho = \sbkt{1-2\delta}\Delta
    \end{eqnarray}
    up to $t$.
\end{proof}
The above construction might seem similar to quantum expander codes~\cite{Leverrier_2015}, which are known to be confined but not sound. However, the difference lies in that their construction, in our notation, is $\text{HGP}\sbkt{\mathcal{B}, \mathcal{D}}$ with $\mathcal{B}, \mathcal{D}^T$ both based on expander graphs. Moreover, the expansion property of interest to them are the $X,Z$ check matrices, while our soundness condition is on the transpose of the X check matrix only.

\begin{theorem}[HGP ancilla with expander code is distance-preserving]
    Consider a surgery procedure equipped with a LDPC commuting diagram and a data code, $\mathcal{C}$, with distance $d$. Let the ancilla code $\mathcal{A} = \text{HGP}\sbkt{\mathcal{B}, \mathcal{D}}$, where the classical codes have parameters $[n_{\alpha}, k_{\alpha}, d_{\alpha}\geq d]$ with $\alpha = B, D$. Suppose $\mathcal{D}$ is a $(\gamma,\delta)$-expanding code with soundness $\rho = \sbkt{1-2\delta}\Delta$ up to $\gamma n_{D_1}$, where $n_{D_1}$ denotes the number of bits in $\mathcal{D}$ and $\Delta$ is the weight of $\partial^D$. The merge code distance is 
    \begin{eqnarray}
            d_{\text{merge}} \geq \min\sbkt{d, \frac{\sbkt{1-2\delta}\Delta}{\abs{\Gamma_1}}d, \gamma n_B} = \Theta\sbkt{d}
    \end{eqnarray}
\end{theorem}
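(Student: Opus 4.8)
The plan is to combine the general X-distance bound of Theorem~\ref{thm:general_X_dist} (and Corollary~\ref{cor:X_dist_suff_anc_dist}) with the soundness estimate for tensor product codes in Lemma~\ref{lem:soundness_tensor_code}, using the HGP structure to control all the ingredients. The Z-distance side is immediate from the Z-distance preservation lemma, so the entire content is the X-distance. First I would record the HGP parameters: since $\mc{A} = \mr{HGP}(\mc{B},\mc{D})$ has $X$-check whose transpose $\partial_1^A$ is exactly the boundary map of the tensor complex $\mr{Tensor}(\mc{B},\mc{D})$ (the subcomplex $A_1 \to A_0$), Lemma~\ref{lem:soundness_tensor_code} applies verbatim: with $\mc{D}$ a $(\gamma,\delta)$-expanding code, $\partial_1^A$ has linear soundness $\rho = (1-2\delta)\Delta$ up to weight $t = \gamma\, n_{D_1}$. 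I would also note $d_A^X = \min(d_B, d_D) \ge d$ from the Künneth formula, which is the ancilla $X$-distance hypothesis of Corollary~\ref{cor:X_dist_suff_anc_dist}.

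Next I would check that the soundness cutoff $t = \gamma\, n_{D_1}$ is large enough to invoke Corollary~\ref{cor:X_dist_suff_anc_dist}, which requires soundness up to $t \ge d$. This needs $\gamma\, n_{D_1} \ge d$; since $\mc{D}$ is a constant-rate expander with $d_D \ge d$ and $d_D \ge \gamma\, n_{D_1}$ is false in general (actually $d_D \le n_D$), the correct reading is that $\gamma\, n_{D_1}$ is the relevant scale and the final distance bound will be capped by it — hence the $\gamma n_B$ (equivalently $\Theta(\gamma n_{D_1})$, up to which base code carries the expansion) term appearing in the statement. So rather than literally quoting Corollary~\ref{cor:X_dist_suff_anc_dist} as a black box, I would re-run its case analysis with the cutoff $t$ in place: the argument in the proof of Theorem~\ref{thm:general_X_dist} only uses the soundness inequality $|\partial_1^{A,g} x| \ge \rho\, d(x, \ker\partial_1^{A,g})$ when $|\partial_1^{A,g} x| < d$, and whenever the syndrome weight exceeds $t$ we trivially have $|l^X_{\mr{merge}}| \ge |\partial_1^{A,g} x| > t = \gamma n_{D_1}$. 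Combining the two regimes gives $d^X_{\mr{merge}} \ge \min\big(d,\ \tfrac{\rho}{|\Gamma_1|} d,\ \gamma n_{D_1}\big)$, which is the claimed bound (with $n_B$ and $n_{D_1}$ interchangeable up to constants since both codes have constant rate and $n_{D_1} = \Theta(n_B)$ is not required — the dominant term is whichever expander scale applies; I would state it as $\gamma n_B$ following the theorem). Since $|\Gamma_1| = \mc{O}(1)$ by the LDPC hypothesis, $\rho = \Theta(1)$, and $n_B = \Theta(d_B) \cdot (\text{const}) \ge \Theta(d)$ with $\gamma$ a constant, every term is $\Theta(d)$, yielding $d_{\mr{merge}} = \Theta(d)$.

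The main obstacle I anticipate is bookkeeping the gauged ancilla: Corollary~\ref{cor:X_dist_suff_anc_dist} handles the gauge logicals of $\mc{A}$ by using $d_A^X \ge d$ to argue that any syndrome equivalent to a nontrivial gauge logical already has weight $\ge d$, thereby reducing to the gauge-free case where only the expansion of $\partial_1^A$ (not $\partial_1^{A,g}$) matters. I would need to verify that this reduction is compatible with the soundness-up-to-$t$ subtlety — i.e., that the gauge-logical branch never requires soundness beyond the cutoff — but this is clean because that branch does not invoke soundness at all; it uses the distance bound directly. A secondary point to be careful about is that Lemma~\ref{lem:soundness_tensor_code} is stated for the tensor code $\mr{Tensor}(\mc{B},\mc{D})$ with the expander being $\mc{D}$ and the cutoff expressed via $n_{D_1}$ and $|\partial^D|$; I must make sure the orientation (which factor is the expander, and whether it is the $X$-check transpose or the code itself that inherits soundness) matches the convention $\partial_2^A = H_X^T$ used throughout, and that the HGP transpose convention noted in Methods (our $\mr{HGP}(\mc{B},\mc{D})$ versus the literature's $\mr{HGP}(\mc{B},\mc{D}^T)$) is applied consistently. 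Once these conventions are pinned down, the proof is a direct assembly of the quoted lemmas.
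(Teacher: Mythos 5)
Your proposal is correct and takes essentially the same route as the paper, whose entire proof reads ``a straightforward combination of Lemma~\ref{lem:soundness_tensor_code} and Corollary~\ref{cor:X_dist_suff_anc_dist}.'' The extra care you take with the soundness cutoff $t = \gamma n_{D_1}$ — re-running the case analysis so that syndromes exceeding the cutoff are handled trivially, which is precisely what produces the third term in the minimum — is bookkeeping the paper leaves implicit, and your treatment of it (and of the gauge-logical branch not needing soundness at all) is correct.
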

\begin{proof}
    The proof is a straightforward combination of Lemma~\ref{lem:soundness_tensor_code} and Corollary~\ref{cor:X_dist_suff_anc_dist}.
\end{proof}

By far, we have derived the conditions for distance-preservation with a HGP ancilla. However, it is based on the assumption that there exists a LDPC commuting diagram composed of $\Gamma_{0,1}$ given a data code $\mathcal{C}$ and logical measurements $M$. It is nontrivial to construct such diagrams. In the following, we provide examples on how to construct these diagrams and what logical measurements can be performed in these cases. With a HGP ancilla code, $\mathcal{A}:=\text{Tensor}\sbkt{\mathcal{B}, \mathcal{D}}$, the X metachecks are given by the codewords of the tensor product code, i.e.
\begin{eqnarray}
        \ker{\partial^A_1} = \ker{\partial^B}\otimes \ker{\partial^D}.
\end{eqnarray}
An example of structured logicals one can measure is when the following condition holds.
\begin{definition}[Logical operators with generalized parallel structure]\label{def:gen_par_structure}
    Suppose the set of logical operators to measure is $M$. A partition of $M$ is collection of subsets $M_i$ with $i\in[l]$ satisfying $M_i\subseteq M$ and $\bigcup_i M_i = M$. We say $M$ possess a generalized parallel structure if $\exists \mathcal{B}, M_i, \gamma^i_1$ such that $\forall i\in[l]$
    \begin{eqnarray}\label{eq:gen_par_structure}
            \gamma^i_1\sbkt{\ker{\partial^B}} = \spanv{M_i} 
    \end{eqnarray}
\end{definition}
Recall that if Eq.~\eqref{eq:gen_par_structure} holds, a commuting diagram follows from Lemma~\ref{lem:existence_com_diag}, i.e. $\exists \gamma_0^i$ with
\begin{eqnarray}
        \partial_1^C\gamma^i_1  = \gamma^i_0 \partial^B.
\end{eqnarray}
Given the maps $\gamma_{0,1}^i$, one can construct the commuting diagram as 
\begin{eqnarray}\label{eq:example_connectivity_HGP}
        \Gamma_{0,1} = \sum_{i\leq k_D} \gamma_{0,1}^i \otimes \sbkt{e^D_i}^T
\end{eqnarray}
where $e^D_i \in D_1$ is a unit vector with its nonzero element on the $i$-th information bit of code $\mathcal{D}$. One can verify with Lemma~\ref{lem:surgery_measurements_framework} that this connection indeed measures the full set $M$ provided that the number of subsets is upper bounded as $l\leq k_D$. The above construction utilize the property that each logical of $\mathcal{D}$ in the canonical form have a unique support on its information bits. Adopting the canonical basis here is only for the purpose of extracting logical information and does not pose restrictions on $\gamma_{0,1}^i$ or the data code.

The above structure is generalized compared to strictly parallel structures. Those strictly parallel structures have been explored previously in the context of homomorphic measurements~\cite{Xu2025Homomorphic} under the name of grid PPMs (GPPMs) for the measurement of a HGP data code with a HGP ancilla code. We give more detailed comparisons in Appendix~\ref{app:HGP_ancilla_HGP_data} when we narrow down to HGP data codes.

To analyze the overhead more systematically, we consider the optimal and worst case scenario. Since $\mathcal{D}$ is asymptotically good and can be chosen with parameters $[n_D = \Theta\sbkt{d}, k_D = \Theta\sbkt{d}, d_D = d]$, the spacetime overhead of measuring $t$ logicals is generally given by $\alpha = \bigO{\frac{n_C + n_B n_D}{k_C + t} \frac{t}{k_B l}}$. Notice that $n_B\in \mbkt{d, n_C}$, depending on the choice of logical basis and the specific logical measurements. In the optimal scenario, the logicals can be partitioned such that each subset $\abs{M_i} = \Theta\sbkt{n_B} = \Theta\sbkt{d_C}$,
\begin{eqnarray}\label{eq:opt_st_overhead}
        \alpha_{\text{opt}} = \bigO{\frac{d^2_C}{n_C}}
\end{eqnarray}
where we have assumed $t = \Theta\sbkt{k_B l} = \bigO{k_C}$ and $k_C = \Theta\sbkt{n_C}$ for simplicity. We will see later that this holds for HGP data codes. In the worst scenario, we have $l=1$ and $k_B = t$. $\mathcal{D}$ can be chosen as a repetition code with sufficient distance, which leads to an overhead similar to the devised sticker approach in Ref.~\cite{Guo2025TimeEfficient}, where we generally expect $n_A\sim n_C$ and an overhead of 
\begin{eqnarray}
        \alpha_{\text{worse}} = \bigO{d_C}.
\end{eqnarray}
The above construction still requires an expanding code to form the HGP ancilla. While this is acceptable as an asymptotic argument, it is not necessarily practical for near-term implementations since the hidden constant factors might be daunting. One example where the expansion requirement can be alleviated is when $M$ only consists of a single element and we construct $\mathcal{B}$ with $k_B=1$. Here, the ancilla X-check $A_1$ is connected to the data code at $C_1$ only through one non-zero entry per row, thus the relative expansion of $\partial_1^A$ is at least 1. Therefore, merge code distance is gauranteed whenever $\mathcal{D}$ has sufficient distance. This is precisely why past schemes that rely on thickening works. However, this argument is restricted to low-rate surgery. In the following, we provide a high-rate surgery example.

\subsection{Surgery on Hypergraph Product Codes\label{app:HGP_ancilla_HGP_data}}

In the last subsection, we refrained from specifying the data code for maximum generality. In this subsection, we consider specifically the HGP code as data code to give explicit constructions. In particular, we consider the case where $\mathcal{C} = \text{HGP}\sbkt{\mathcal{B}, \mathcal{D}}$ and the ancilla code is written as $\mathcal{A} = \text{HGP}\sbkt{\mathcal{B}^\prime, \mathcal{F}}$. Then, we have the commuting diagram 
\begin{eqnarray}
        \begin{tikzcd}[column sep=large, row sep=large]
        & {} & {B^\prime_1\otimes D_0} & {} \\
       \mathcal{A}\text{:}& {B^\prime_1\otimes D_1} &  & {B^\prime_0\otimes D_0}  \\
        & {} & {B^\prime_0\otimes D_1} & \\
        & {} & {B_1\otimes F_0} & {} \\
        \mathcal{C}\text{:} & {B_1\otimes F_1} &  & {B_0\otimes F_0}  \\
        & {} & {B_0\otimes F_1} & 
        \arrow["I_{B^\prime_1}\otimes \partial^D", from=2-2, to=1-3]
        \arrow["\partial^{B^\prime} \otimes I_{D_1}", from=2-2, to=3-3]
        \arrow["I_{B^\prime_0}\otimes \partial^D", from=3-3, to=2-4]
        \arrow["\partial^{B^\prime} \otimes I_{D_0}", from=1-3, to=2-4]
        \arrow["I_{B_1}\otimes \partial^F", from=5-2, to=4-3]
        \arrow["\partial^{B} \otimes I_{F_1}", from=5-2, to=6-3]
        \arrow["I_{B_0}\otimes \partial^F", from=6-3, to=5-4]
        \arrow["\partial^{B} \otimes I_{F_0}", from=4-3, to=5-4]
        \arrow["\Gamma_1", from=2-2, to=4-3]
        \arrow["\Gamma_0", from=3-3, to=5-4]
\end{tikzcd}
\end{eqnarray}
Here, we construct the ancilla code such that $\mathcal{B}^\prime$ and $\mathcal{B}$ are connected through code homomorphism,
\begin{eqnarray}
        \begin{tikzcd}[column sep=large, row sep=large]
        {B^\prime_1} & {B^\prime_0} \\
         {B_1} & {B_0}
        \arrow["{\partial^{B^\prime}}", from=1-1, to=1-2]
        \arrow["{\partial^B}", from=2-1, to=2-2]
        \arrow["{\gamma_1}", from=1-1, to=2-1]
        \arrow["{\gamma_0}", from=1-2, to=2-2]
\end{tikzcd}
\end{eqnarray}
where $\gamma_{0,1}$ are LDPC. This is equivalent to a general requirement of 
\begin{eqnarray}
        \gamma_0 \partial^{B^\prime} = \partial^B \gamma_1
\end{eqnarray}
and the logical action is defined through matrix $w$ with 
\begin{eqnarray}
        G_{B^\prime} \gamma_1^T = w G_{B}
\end{eqnarray}
which depends on the generator basis $G_{B, B^\prime}$ chosen. For simplicity of discussion, we consider a canonical basis of HGP, where $d_{B}$ number of X-logicals are solely supported in each column of the labelled by each information bit of $\mathcal{F}^T$. This is also a minimum weight basis as all logicals' weights are $\Theta\sbkt{d_B}$ if $\mathcal{B}$. The disjoint column support structure allows us to tailor the connectivities column-by-column. More concretely, we implement the full chain map as
\begin{eqnarray} \label{eq:hgp_chain_maps}
        \Gamma_{0,1} = \sum_{i} \gamma_{0,1} \otimes e_i^{F^T} \sbkt{e_i^{D}}^T
\end{eqnarray}
where $e_i^{F^T}$ and $e_i^{F^T}$ denote the unit vector on the $i$-th information bit on $\mathcal{D}$ and $\mathcal{F}^T$, respectively. Suppose the implementable logical actions are $W = \set{w}$, the generalized parallel structure in Definition~\ref{def:gen_par_structure} implies that we can simultaneously measure all logical set in each column if they are connected by logical actions in $W$. In other words, logicals to measure in the i-th column are simultaneously measurable if they are connected to a common logical basis, $G_{B^\prime}$, through tailored $w_i\in W$. The key is to realize the connectivity can vary from column to column. 

While our scheme has great flexibility, it remains unknown if it is possible to construct LDPC $\gamma_{0,1}$, or compile in constant-depth, for arbitrary logical codewords in each column through $w_i$. However, it is sufficient to see that the power is beyond prior works through several examples. In below we give several examples and compare with relevant literatures.
\begin{example} [Code modifications]
    Assume we work in the canonical basis of the classical code $\mathcal{B}$ in each column, the code puncturing and augmentation techniques induce classical code homomorphisms. Through these classical code homomorphisms, we can measure $M$ if its logicals form a hypergraph product pattern. Such a pattern correspond exactly to the GPPMs pattern in Ref.~\cite{Xu2025Homomorphic}.
\end{example}
If one restrict to puncturing and augmentation, our scheme's capability is equivalent to the homomorphic measurement scheme in Ref.~\cite{Xu2025Homomorphic}. There are some constant factor gains because in their scheme they require an additional ancilla patch for masking to complete the full grid PPM measurements while we do not. However, the asymptotic scaling is the same.

\begin{figure*}
    \centering
    \includegraphics[width = 0.6 \textwidth]{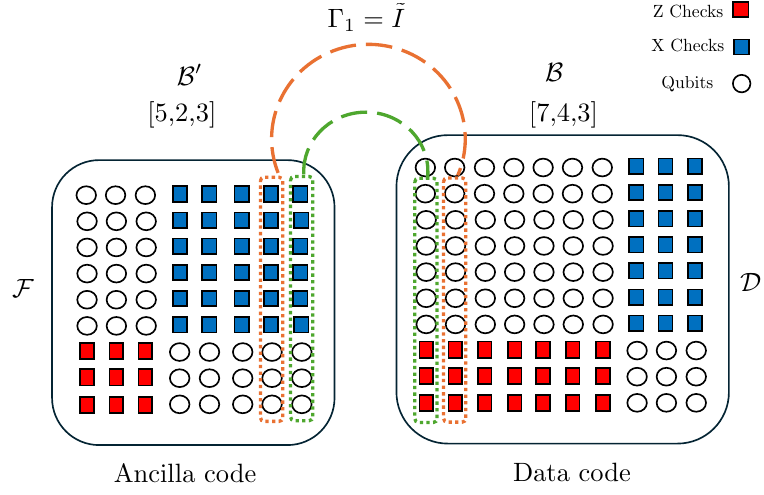}
     \caption{Illustrative figure for when the data code is a HGP code. Here the data code is formed by the hypergraph product of two $[7, 4, 3]$ Hamming codes on the right. The ancilla code, on the left, is formed by transpose checks of $[6, 3, 3]$ and $[5, 2, 3]$ on the vertical and horizontal sides respectively. The $[6, 3, 3]$ code is chosen by puncturing the Hamming code.}
\end{figure*}

\begin{example} [Code automorphisms]
    Consider the automorphism group associated with $\mathcal{B}$, which is equivalent to the following conditions 
    \begin{eqnarray}
            G_B \sigma &=& wG_B\\
            \partial^B \sigma &=& v\partial^B
    \end{eqnarray}
    generated by permutations $\sigma\in S_{n_B}$ and $w\in \text{GL}_{k_B}\sbkt{\mathbb{F}_2}$. Our scheme can simulatenously measure logicals in each column if they are connected by automorphisms.
\end{example}
Quantum and classical code automorphisms have been extensively studied~\cite{malcolm2025computingefficientlyqldpccodes, berthusen2025automorphismgadgetshomologicalproduct, Grassl_2013, Sayginel2025Automorphisms,Breuckmann2024FoldTransversal}. In particular, prior works~\cite{malcolm2025computingefficientlyqldpccodes,berthusen2025automorphismgadgetshomologicalproduct} have considered explicitly how (subsystem) HGP codes inherit the classical code automorphisms, which enables efficient computation when codes with rich symmetry are adopted as base codes. For example, in subsystem HGP (sHGP) the automorphism group is inherited from classical base codes as~\cite{malcolm2025computingefficientlyqldpccodes}
\begin{eqnarray}
        \text{Aut}\sbkt{\mathcal{B}}\otimes \text{Aut}\sbkt{\mathcal{D}^T}\subseteq \text{Aut}\sbkt{\text{HGP}\sbkt{\mathcal{B}, \mathcal{D}}}
\end{eqnarray}
The choice of sHGP, at the cost of a lack of threshold, is to preserve the tensor product structure. In HGP, while the automorphism group size is similarly $\abs{\text{Aut}\sbkt{\mathcal{B}}}\cdot \abs{\text{Aut}\sbkt{\mathcal{D}}}$~\cite{berthusen2025automorphismgadgetshomologicalproduct}, not all automorphisms are gauranteed to preserve the tensor product structure. Combining the automorphism gadgets with transversal or homomorphic measurements, the number of distinct measurements one can perform scales as $\bigO{\abs{\text{Aut}\sbkt{\mathcal{B}}}\cdot \abs{\text{Aut}\sbkt{\mathcal{D}}}}$. In our scheme, since we can freely apply distinct classical automorphisms for each of the $k_D$ columns, the distinct measurements scale as $\bigO{\abs{\text{Aut}\sbkt{\mathcal{B}}}^{k_D}\cdot \abs{\text{Aut}\sbkt{\mathcal{D}}}}$.

While the presented surgery scheme can already utilize the underlying code's automorphism group more flexibly than direct inerhitence, it is unnecessarily restrictive to only consider automorphisms. One limitation of automorphism gadgets~\cite{berthusen2025automorphismgadgetshomologicalproduct, aaronson2015classificationreversiblebitoperations} is that they are not universal since they are restricted to affine logical actions, $w\in \text{GL}_{k_B}\sbkt{\mathbb{F}_2}$. However, our scheme is not limited by such rules: one can simply compose the automorphism logical actions to go beyond affine gates. For example, assume $\mathcal{B}$ has $k_B=3$ and its automorphism group include a logical permutation action $w\in \text{Aut}\sbkt{\mathcal{B}}$ such that
\begin{eqnarray}
        w = \begin{pmatrix}
            0 & 1 & 0\\
            1 & 0 & 0\\
            0 & 0 & 1
        \end{pmatrix}
\end{eqnarray}
induced by physical permutation matrices $\sigma$. We can construct chain maps by linearity as $G_B \sbkt{\sigma + I_{n_B}} = \sbkt{w + I_{k_B}}G_B$, which realizes a logical action of 
\begin{eqnarray}
        w + I_{k_B} = \begin{pmatrix}
            1 & 1 & 0\\
            1 & 1 & 0\\
            0 & 0 & 0
        \end{pmatrix}.
\end{eqnarray}
This specific logical measurement performs effectively a repeated measurement of the joint logicals, $L^X_1L^X_2$ and leaves $L^X_3$ intact. Through simple compositions as such, we arrive at logical actions no longer in the automorphism group. Moreover, this provides an example where we tailor the number of logicals measued for each column through chain maps with nontrivial kernels. Notably, if we only rely on automorphism plus classical code modifications, the number of logicals measured per column must stay the same since automorphisms are invertible.

Last but not least, we remark that HGP as data code do not require the ancilla HGP code to have an expander base code, which present it as a practical and explicit near-term code example. 
\begin{theorem}[HGP ancilla is distance-preserving for HGP data code]
    Consider a surgery procedure equipped with a LDPC commuting diagram exists and the data code is a HGP code $\mathcal{C} = \text{HGP}\sbkt{\mathcal{B}, \mathcal{F}}$ with distance $d$. Let the ancilla code be $\mathcal{A} = \text{HGP}\sbkt{\mathcal{B}^\prime, \mathcal{D}}$. If $d_D\geq d$, we have
    \begin{eqnarray}
            d_{\text{merge}} \geq \frac{1}{\abs{\Gamma_1}}d
    \end{eqnarray}
\end{theorem}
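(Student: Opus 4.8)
For a surgery on a HGP data code $\mathcal{C} = \mathrm{HGP}(\mathcal{B}, \mathcal{F})$ with distance $d$, using a HGP ancilla $\mathcal{A} = \mathrm{HGP}(\mathcal{B}', \mathcal{D})$ with $d_D \ge d$ and a valid LDPC commuting diagram, the merged-code distance satisfies $d_{\mathrm{merge}} \ge d/|\Gamma_1|$.

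The plan is to reduce this to Corollary~\ref{cor:X_dist_suff_anc_dist}, whose hypotheses are (i) the ancilla code $\mathcal{A}$ has $X$-distance $d_A^X \ge d$, and (ii) the transposed $X$-check $\partial_1^A$ has linear soundness $\rho$ up to some $t \ge d$; the conclusion of that corollary is exactly $d^X_{\mathrm{merge}} \ge \min(1, \rho/|\Gamma_1|)\, d$, while the $Z$-distance is handled automatically by the $Z$-distance preservation lemma. So the whole argument is a matter of verifying the two hypotheses for an HGP ancilla and then specializing $\rho$. First I would use the HGP parameter formula (via Künneth) to record that $d_A^X = \min(d_{B'}, d_D)$; since $d_D \ge d$ by assumption, I need to choose $\mathcal{B}'$ so that $d_{B'} \ge d$ as well — in practice $\mathcal{B}'$ is a small classical code used only to carry the column-wise connectivity maps $\{\gamma_1^i\}$, and one simply picks it (e.g. a repetition code or a punctured Hamming code of appropriate length) with distance at least $d$. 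This gives hypothesis (i).

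Second, for the soundness hypothesis I would invoke the first half of Lemma~\ref{lem:soundness_tensor_code}: the transposed $X$-check of $\mathcal{A} = \mathrm{HGP}(\mathcal{B}', \mathcal{D})$ is precisely the parity check of the tensor product code $\mathrm{Tensor}(\mathcal{B}', \mathcal{D})$, which is $(t, f)$-sound with $f(x) = x^2/4$ and $t = \min(d_{B'}, d_D) \ge d$. This is the key point that separates the HGP-data case from the general-data case: here we do \emph{not} need either base code to be an expander, because we are not trying to show $d_{\mathrm{merge}} = \Theta(d)$ with a constant factor independent of $d$ — we only need soundness \emph{up to} $t \ge d$, and the quadratic soundness function already suffices at that scale. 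Concretely, for any $x$ with $|\partial_1^A x| \le t$ we have $d(x, \ker \partial_1^A) \le |\partial_1^A x|^2/4$; but in the distance proof of Theorem~\ref{thm:general_X_dist} (and its general-soundness corollary) only the regime $|\partial_1^A x| < d$ matters, and on that regime $|\partial_1^A x|^2/4 < d \cdot |\partial_1^A x|/4$, so $\partial_1^A$ behaves as if it had \emph{linear} soundness $\rho = 1$ restricted to the relevant window. Plugging $\rho = 1$ into Corollary~\ref{cor:X_dist_suff_anc_dist} (or more carefully, re-running the case analysis of the general-soundness corollary with $f(x) = x^2/4$, $t \ge d$, so that $f^{-1}(d/|\Gamma_1|) = 2\sqrt{d/|\Gamma_1|} \ge d/|\Gamma_1|$ once $d/|\Gamma_1| \ge 1$, which it is since $\Gamma_1$ is LDPC and $d$ is large) yields $d^X_{\mathrm{merge}} \ge d/|\Gamma_1|$, and combined with $d^Z_{\mathrm{merge}} \ge d$ this gives the claimed bound.

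The main obstacle — and the part I would be most careful with — is the bookkeeping at the interface between Lemma~\ref{lem:soundness_tensor_code} (which gives \emph{nonlinear}, quadratic soundness) and Corollary~\ref{cor:X_dist_suff_anc_dist} (stated for \emph{linear} soundness $\rho$). The cleanest route is to not literally cite the linear-soundness corollary but rather the general-soundness corollary immediately preceding it, with $f(x) = x^2/4$; then $d^X_{\mathrm{merge}} \ge \min(d, f^{-1}(d/|\Gamma_1|)) = \min(d, 2\sqrt{d/|\Gamma_1|})$, and one checks $2\sqrt{d/|\Gamma_1|} \ge d/|\Gamma_1|$ iff $4|\Gamma_1| \ge d$ — which may \emph{fail} for large $d$! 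So the honest statement is $d_{\mathrm{merge}} = \Omega(\sqrt{d/|\Gamma_1|})$ unless one also assumes an expanding base code, and I suspect the theorem as displayed is implicitly using that the relevant errors $x$ are further constrained (e.g. by the column structure of the HGP data code, so that each column of $x$ is short and one applies soundness column-wise, exactly as in the proof of Lemma~\ref{lem:soundness_tensor_code} case 1). I would therefore structure the proof to exploit the product structure of the \emph{data} code: decompose a putative low-weight dressed logical of the merged code column-by-column along $\mathcal{F}$, apply the soundness of $\partial^{B'}$ (which is linear with $\rho = 1$ trivially when $|\Gamma_1| \le$ row weight, or the genuine tensor-code bound) to each column, and sum — this recovers the clean linear factor $1/|\Gamma_1|$ without needing an expander, at the cost of a slightly more involved but entirely routine decomposition argument mirroring Lemma~\ref{lem:soundness_tensor_code}.
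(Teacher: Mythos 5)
You correctly diagnose that the obvious route---feeding the quadratic $(t,x^2/4)$-soundness of the tensor code into Corollary~\ref{cor:X_dist_suff_anc_dist}---only yields $d_{\mathrm{merge}}=\Omega\sbkt{\sqrt{d/\abs{\Gamma_1}}}$ absent an expander base code, and you correctly guess that the real proof must instead exploit the product structure of the \emph{data} code. That guess matches the paper: its proof uses no soundness statement at all. It works in the canonical basis, where every $X$-logical $l=l_B^i\otimes e_i^{F^T}$ is confined to a single column of $\mathcal{C}$, and it uses the column-wise chain maps $\Gamma_{0,1}=\sum_i\gamma_{0,1}\otimes e_i^{F^T}\sbkt{e_i^{D}}^T$, under which each data column couples to exactly one ancilla column. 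A cleaning operator $x\in B_1^\prime\otimes D_1$ is then controlled row by row in the $D$-direction: every touched row of $x$ not in $\ker{\partial^D}$ costs at least one ancilla qubit through $I_{B_1^\prime}\otimes\partial^D$, which gives $\abs{l+\Gamma_1 x}+\abs{\sbkt{I_{B_1^\prime}\otimes\partial^D}x}\geq \abs{\,\text{rowsupp}(l)-\abs{\Gamma_1}\,\text{rowsupp}(x)\,}+\text{rowsupp}(x)\geq d/\abs{\Gamma_1}$. Rows lying in $\ker{\partial^D}$ are harmless because $d_D\geq d$ (they either carry weight $\geq d$ themselves or merely replace $l$ by another logical representative), and residual data-code stabilizer freedom is eliminated using the $\sbkt{A\ \ I^\prime}$ form of $\partial^{F^T}$, whose support on the complement of the information set cannot be cancelled since $\im{\Gamma_1}\cap\sbkt{\mathcal{I}_{F^T}}^c=\varnothing$.

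Relative to this, your proposal has a genuine gap: the decisive counting argument is only gestured at (``entirely routine decomposition argument''), and what you do specify points slightly the wrong way. The per-column linear soundness with $\rho=1$ that rescues the bound is the trivial one-qubit-per-touched-row property of $\partial^D$, not a property of $\partial^{B^\prime}$, and the hypothesis that does the work is the stated $d_D\geq d$; the condition $d_{B^\prime}\geq d$ you introduce so as to satisfy $d_A^X\geq d$ in Corollary~\ref{cor:X_dist_suff_anc_dist} is neither assumed in the theorem nor needed once that corollary is abandoned. You would also still have to justify why distinct columns can be treated independently (they are disjoint precisely because of the column-wise form of $\Gamma_1$) and how data-code stabilizers are handled---neither follows from Lemma~\ref{lem:soundness_tensor_code}. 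Right destination, but the proof as written does not reach it.
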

\begin{proof}
    The proof relies on the special structure of the HGP code and the way the connectivities are set. Recall that one can define the canonical basis of the classical code $\mathcal{F}^T$ and its information set, $\mathcal{I}_{F^T}$. The canonical basis for the X-logicals of the HGP $\mathcal{C}$ is spanned by logical operators of the form~\cite{Quintavalle2023partitioningqubits, Quintavalle_2022_Reshape}
    \begin{eqnarray}
            l = l^i_B\otimes e_i^{F^T},
    \end{eqnarray}
    where $l^i_B\in\ker{\partial^B}$, and $e_i^{F^T}$ are unit vectors in $\mathcal{I}_{F^T}$. The considered logical is fully contained in the $i$-th column.

    Consider the logical in canonical basis, we have
    \begin{eqnarray}
            \text{rowsupp}\sbkt{l} \geq d
    \end{eqnarray}
    and by the form of Eq.~\eqref{eq:hgp_chain_maps}, the column $i$ in $B_0\otimes F_1$ is connected to only one column in $B_1^\prime\otimes D_1$. Key is to note that given such connection, suppose we consider a stabilizer combination $x\in B_1^\prime \otimes D_1$. The logical is now
    \begin{eqnarray}
            \abs{l^\prime} &\geq& \abs{l + \Gamma_1 x} + \abs{\sbkt{I_{B_1^\prime}\otimes \partial^D} x}\\
            &\geq& \abs{\text{rowsupp}\sbkt{l} - \abs{\Gamma_1}\text{rowsupp}\sbkt{x}} + \text{rowsupp}\sbkt{x}\\
            &\geq & \frac{1}{\abs{\Gamma_1}} d
    \end{eqnarray}
    where we did not use any expansion property. All that is assumed is that for any row of x not in kernel space of $\partial^D$, the incurred qubit support is non-zero $\abs{\partial^D x\vert_{j}}\geq 1$. 

    To prove distance, one would need to consider combinations of any logical with the stabilizer group. One can first notice that considering joint paulis from different columns would not decrease weight since they are fully disjoint. The X-check matrix, restricted to $B_0\otimes F_1$, is written as 
    \begin{eqnarray}
            H_X\vert_{B_0\otimes F_1} = e^j \otimes \partial^{F^T}
    \end{eqnarray}
    where in the canonical basis 
    \begin{eqnarray}
            \partial^{F^T} = \begin{pmatrix}
                A & I^\prime
            \end{pmatrix}
    \end{eqnarray}
    with $I^\prime$ being an identity matrix of size $n_{F^T}-k_{F^T}$. Therefore, if one add a combination of checks, $s$, to row $j$ of the logical, it will incur support on the $n_{F^T}-k_{F^T}$ qubits that are the complementary of the information set. The weights of these qubits cannot be reduced by stabilizers in $B_0\otimes F_1$ since $\im{\Gamma_1}\cap \sbkt{\mathcal{I}_{F^T}}^c = \varnothing$.
\end{proof}

Consider a HGP code with good base codes such that the data code parameters are $[[n_C, k_C =\Theta\sbkt{n_C}, d_C = \Theta\sbkt{\sqrt{n_C}}]]$. If we adopt the canonical basis, for each simultaneously measurable logical set defined by Definition~\ref{def:gen_par_structure}, the logicals partitions in each column occupy at most $\bigO{d_C}$ space. Therefore, the spacetime overhead is always in the optimal scenario described in Eq.~\eqref{eq:opt_st_overhead}. 
\begin{corollary}[Spacetime overhead of surgery on HGP data code]
    The spacetime overhead of high-rate surgery on HGP data code is 
    \begin{eqnarray}
            \alpha = \bigO{1}
    \end{eqnarray}
    for measurement of $t =\bigO{k_C}$ logicals that form a generalized parallel pattern.
\end{corollary}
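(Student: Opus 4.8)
The plan is to specialise the general spacetime bound of Appendix~\ref{app:HGP_ancilla_general_data}, Eq.~\eqref{eq:opt_st_overhead}, to a HGP data code and to check that a good HGP $\mc{C}$ automatically lands in the ``optimal scenario'' of that analysis. Write $\mc{C} = \mr{HGP}\sbkt{\mc{B},\mc{F}}$ with parameters $[[n_C,\,k_C=\Theta\sbkt{n_C},\,d_C=\Theta\sbkt{\sqrt{n_C}}]]$, so the base codes satisfy $n_B, n_{F^T}=\Theta\sbkt{\sqrt{n_C}}$. First I would fix the ancilla demanded by Definition~\ref{def:gen_par_structure} as $\mc{A}=\mr{HGP}\sbkt{\mc{B}^\prime,\mc{D}}$, where $\mc{B}^\prime$ is the common classical code through which the partition blocks $M_i$ are realised column-by-column via the chain maps of Eq.~\eqref{eq:hgp_chain_maps}. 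Because in the canonical HGP basis every $X$-logical of $\mc{C}$ lives in a single column carrying only $\bigO{d_C}$ qubits, one can take $n_{B^\prime}=\bigO{d_C}$, hence $k_{B^\prime}=\Theta\sbkt{d_C}$ for a constant-rate $\mc{B}^\prime$; and $\mc{D}$ is a good classical LDPC code with $n_D=\Theta\sbkt{d}$, $k_D=\Theta\sbkt{d}$ and $d_D\ge d$, providing $\Theta\sbkt{d}=\Theta\sbkt{d_C}$ addressable columns, enough for the $l\le k_D$ blocks of any optimal-scenario partition of $t=\bigO{k_C}$ logicals.

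Next I would discharge fault tolerance. Since a LDPC commuting diagram is assumed to exist and $d_D\ge d$, the preceding distance-preservation theorem for HGP ancillas on HGP data codes gives $d_{\mr{merge}}\ge d/\abs{\Gamma_1}=\Theta\sbkt{d}$, the equality because the diagram is LDPC and so $\abs{\Gamma_1}=\bigO{1}$. By Lemma~\ref{lem:merge_dist_imply_ph} the phenomenological distance is then $\Omega\sbkt{d}$, so one logical cycle of the merged code costs $t_L=\Theta\sbkt{d}$ rounds, matching the normalisation in Eq.~\eqref{eq:spacetime_overhead}.

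Then I would do the counting. The ancilla has size $\abs{\mc{A}}=\Theta\sbkt{n_{B^\prime}n_D}=\Theta\sbkt{d_C^2}=\Theta\sbkt{n_C}$, using $d_C=\Theta\sbkt{\sqrt{n_C}}$. Its $X$-metachecks form $\ker\partial_1^A=\ker\partial^{B^\prime}\otimes\ker\partial^D$, of dimension $k_{B^\prime}k_D=\Theta\sbkt{n_C}$; by Lemma~\ref{lem:surgery_measurements_framework} the measured space is $M=\Gamma_1\sbkt{\ker\partial_1^A}$, and the generalized parallel structure of Definition~\ref{def:gen_par_structure} is precisely the statement $\spanv{M}=\Gamma_1\sbkt{\ker\partial_1^A}$. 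Assuming, as in Theorem~\ref{theorem:algebraic_surgery}, that $\Gamma_1$ does not collapse the kernel, $\dim M=\Theta\sbkt{n_C}=\Theta\sbkt{k_C}$, so this single ancilla handles $t=\bigO{k_C}$ parallel measurements. Hence the information-extraction rate is $r_M=\dim M/\abs{\mc{A}}=\Theta\sbkt{1}$, and substituting into Eq.~\eqref{eq:surgery_IER} (equivalently the footnote formula, with $\abs{\mc{C}},\abs{\mc{A}},k_C=\Theta\sbkt{n_C}$ and $t=\bigO{k_C}$) gives $\alpha=\Theta\sbkt{1/r_M}=\bigO{1}$; the additive memory baseline $\Theta\sbkt{1/r}$ is also $\Theta\sbkt{1}$ since $r=k_C/n_C$ is constant, so the order is unchanged.

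The step I expect to be the real obstacle is not the asymptotics — which collapse to $d_C^2/n_C=\Theta\sbkt{1}$ — but justifying that a good HGP data code is always in the optimal scenario: namely that the logicals of any generalized parallel pattern can be partitioned into $l\le k_D$ blocks each supported inside one column (so $\dim M_i=\bigO{n_{B^\prime}}=\bigO{d_C}$), and that the column-wise classical homomorphisms $\gamma^i_{0,1}$ implementing these blocks can be chosen LDPC \emph{simultaneously} across all columns. This is exactly where the canonical HGP logical basis is indispensable — each $X$-logical is a codeword of $\mc{B}$ living in a single column, so the construction of $\Gamma_{0,1}$ decouples into independent classical problems $\mc{B}^\prime\to\mc{B}$ — and it is the place where the corollary's standing hypothesis that a LDPC commuting diagram exists does the real work.
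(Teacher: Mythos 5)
Your proposal is correct and follows essentially the same route as the paper: the paper's own justification is the short paragraph preceding the corollary, which uses the canonical column-supported logical basis of the HGP data code to place the scheme in the ``optimal scenario'' of Eq.~\eqref{eq:opt_st_overhead}, giving $\alpha = \bigO{d_C^2/n_C} = \bigO{1}$ for a good HGP code with $k_C=\Theta\sbkt{n_C}$ and $d_C = \Theta\sbkt{\sqrt{n_C}}$. Your accounting via the information-extraction rate $r_M$ and Eq.~\eqref{eq:surgery_IER} is the same computation in different packaging, and the obstacle you flag at the end—that the column-wise partition with simultaneously LDPC homomorphisms is where the work lies—is exactly the point the paper settles by adopting the canonical basis.
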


\section{Algorithm design for randomized ancilla construction and simulation details \label{app:random_algorithm}}

\begin{figure*}
    \centering
    \includegraphics[width = 1.0 \textwidth]{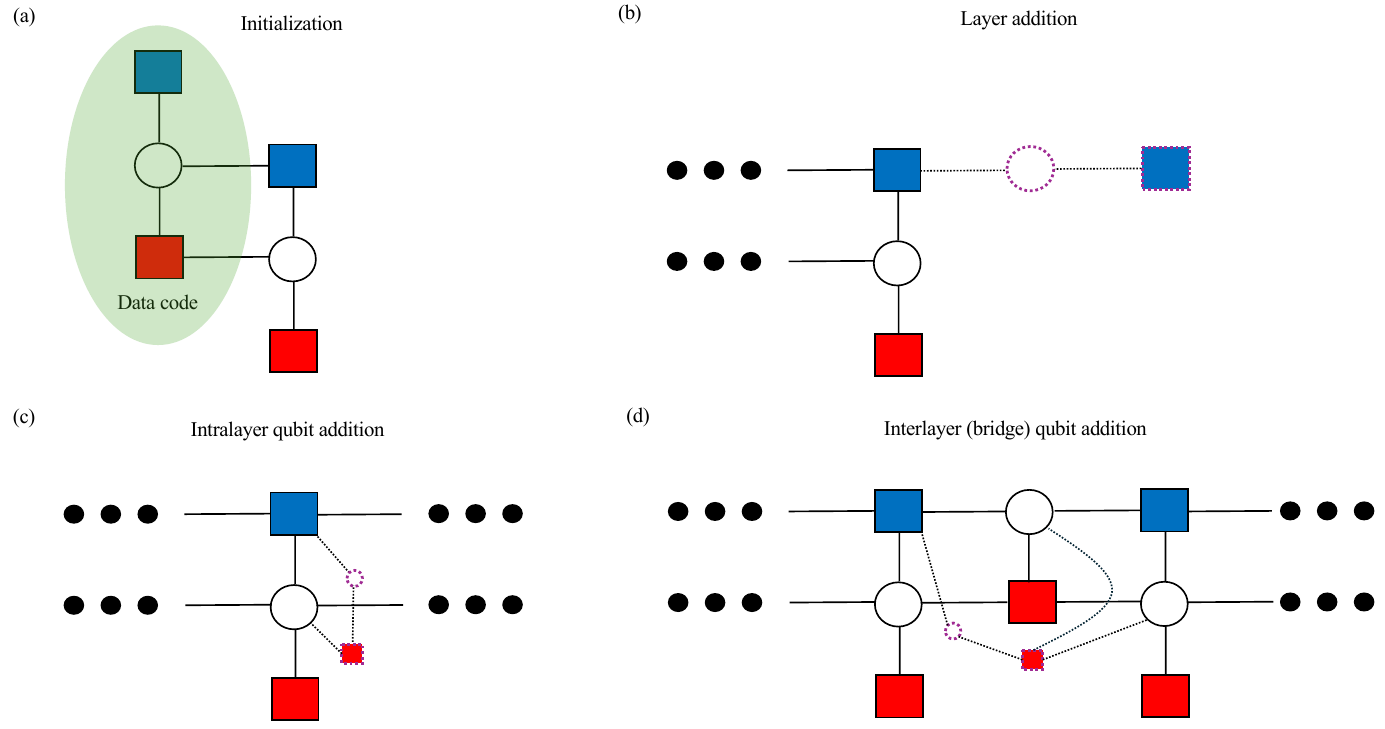}
     \caption{A demonstration of the random ancilla construction algorithm. (a) Initialization of ancilla system. Green shaded region represent the data code. (b) Layer addition. New layers are added at the end of the layered ancilla system. (c) Intralayer qubit addition. Qubits and Z-checks are added to boost expansion. (d) Interlayer qubit addition. Qubits and Z-checks are added, where Z-checks are connected to qubits in consecutive layers.}
    \label{fig:algo_demo}
\end{figure*}

In this section, we provide a detailed description of the randomized algorithm for ancilla construction. The inputs are given by the data code parity check matrices, $H_{X,Z}$, and the logicals to be measured, $L_X$. Here we assume the logicals to be measured are all of X-type. As mentioned in the main text, in simulations, we aim to minimally increase the parity check matrix. In low-rate gauging, it is commonly found a minimum of weight-2 increase in the check degree~\cite{yoder2025tourgrossmodularquantum,Cross2024ImprovedSurgery, Ide_2025}. Thus, we have chosen to adopt the simplest transversal-type connections in $\Gamma_{0,1}$, which has $\abs{\Gamma_{0,1}} = 1$, to gaurantee such properties.

One key motivation for the numerics is to demonstrate that even without the symmetry defined in Definition~\ref{def:gen_par_structure}, the idea of high-rate surgery brings significant savings in overhead. Therefore, we assume no symmetry in the logicals to measure. With such an assumption, our HGP ancilla construction would construct an HGP ancilla as the product of a $k=t$ classical code and a repetition code. Such a construction is not efficient in practice since for finite size the constant factors are highly nontrivial. Therefore, we consider a more flexible randomized construction, which is similar to the randomized edge addition algorithms in Refs.~\cite{Ide_2025,williamson2024lowoverheadfaulttolerantquantumcomputation}. Our major innovation is to adapt it to general check matrices, which can be understood as adding hyperedges. Such a generalization loses the asymptotic gaurantee of expander graphs, but it is observed to perform well in practice. Moreover, while the way we picked which hyperedges to add still has random components, we developed well-defined heuristics to boost its effectiveness.

\begin{algorithm}[t]
\caption{Randomized construction of distance-preserving ancilla system}\label{alg:rand_ancilla_construction}
\KwIn{ Data code parity checks $\partial^C_2, \partial^C_1$, logicals to measure $L_X$, degree limit $c$.}
\KwOut{Ancilla parity checks $\partial^A_1, \partial^A_0$, chain maps $\Gamma_{0,1}$.}
Initialize ancilla parity check as $\partial^A_1\leftarrow \begin{pmatrix}
        \partial^C_1\vert_{L_X} \\ \partial^{l}
\end{pmatrix}$, where $\partial^C_1$ is truncated to the union of $L_X$ and low-weight logical checks, $\partial^{l}$, are found through a greedy search. Construct generating basis for $\ker{\sbkt{\partial_2^A}^T}$ with weight reduction heuristics. Initialize $\partial^A_1$ by the rows of the generating basis that has degree less than $c$.\\
\uIf{$\abs{\partial^A_1}<c$}{Continue}
\Else{Return fail.}
Initialize chain maps as $\Gamma_{0,1}\leftarrow \begin{pmatrix}
        I_{0,1}\\ \mathbf{0}
\end{pmatrix}$ are of size $\abs{C_{0,1}}\times \abs{A_{0,1}}$, respectively.\\
\tcp{$I_{0,1}$ are projections onto the support of $L_X$ and onto the adjacent checks in $\partial_1^C$ respectively.}
Initialize a list of layers, with the initial checks as the first layer with index 1. Each layer with index $i$ is parametrized by the layer's check matrices, $\partial_{0,1}^i$.\\
Initialize empty bridge checks list.\\
\While{$d_{\text{merge}}< d$}{
        Sample logically distinct low-weight X-type logicals, $l_X$ of weight $d_{\text{merge}}$.\\
        \For{$i\leftarrow 0$ \KwTo max layer index}{
                Sample low-weight row vectors through (1) generate low-weight combinations, $v$, to obtain $w = \sbkt{\partial_1^i}^T v$, such that $\abs{v}, \abs{w}\leq c$ or (2) copying checks from layers of index $i-1, i+1$, if exist.\\
                Compute the score of each low-weight check generated, $w$. The score is the number of $l_X$ that anticommutes with $w$.\\
                \uIf{exists $w$ with $\text{score}> 0$}{
                        Choose the $w$ with max score. Modify the layer's check matrix as $\partial^i_1 \leftarrow \begin{pmatrix}
                                \partial_1^i \\ w
                        \end{pmatrix}$. If $w$ is generated through check combination, update $\partial_0^i \to \begin{pmatrix}
                                \partial_0^i & \mathbf{0}\\
                                v^T & 1
                        \end{pmatrix}$. Otherwise, update the bridging checks.\\
                }
                \Else{
                        Continue to next loop\\
                }
        }
        \uIf{no checks have been added}{
                Let current last layer index be $m$. Add a new layer, $m+1$. Initialize the new layer's checks as null matrices. Add bridge checks that connects transversally the X-checks of $m, m+1$ through bridge qubits.\\
                \tcp{See Figure~\ref{fig:algo_demo} and discussion.}
        }
}
\Return $\partial^A_1, \partial^A_0, \Gamma_{0,1}$
\end{algorithm}

See Figure~\ref{fig:algo_demo} for an illustration of the algorithm's key steps and see Algorithm~\ref{alg:rand_ancilla_construction} for a complete description. The inputs to the algorithm is $\partial_{1,2}^C$ and a logical representation of $L_X$. We also set a check degree restriction, $c$, on the merge code degree. From discussion before, $c$ is set degree of data code plus 2 in simulations we presented in this work. On a high-level, the algorithm constructs a layered ancilla system. Broadly speaking, the algorithm can be divided into 4 steps, where one initializes the ancilla system to perform the right measurements (Figure~\ref{fig:algo_demo}(a)), adds qubits and Z-checks to increase expansion (Figure~\ref{fig:algo_demo}(c, d)), and adds qubits and X-checks to lower check degrees (Figure~\ref{fig:algo_demo}(b)). The second step is a generalization of the random add edge construction~\cite{yoder2025tourgrossmodularquantum,Cross2024ImprovedSurgery, Ide_2025}. The last step is crucial, especially for larger codes, and is performed when one cannot increase expansion while maintaining the degree limit of the parity check.

\paragraph*{\textbf{Initialization}} The ancilla system, $\partial_{0,1}^A$, and the chain maps are initialized. We start by initializing $\partial_1^A$ as $\partial_1^C\vert_{L_X}$, where the subscript denotes restricted to the support of $L_X$, and $\Gamma_{0,1}\leftarrow \begin{pmatrix}
        I_{0,1}\\ \mathbf{0}
\end{pmatrix}$. Here, $I_{0,1}$ are projections onto the support of $L_X$ and all adjacent checks respectively. Such initializations gaurantee the measurement of all logicals in $L_X$. However, to avoid measuring additional logicals, we add an additional check $\partial^l$ such that the full initial boundary is 
\begin{eqnarray}
                \partial^A_1= \begin{pmatrix}
\partial^C_1\vert_{L_X} \\ \partial^{l}
        \end{pmatrix}
\end{eqnarray}
In general, there is no theoretical gaurantee that $\partial^{l}$ is LDPC for general basis (see Methods for a detailed discussion). However, we find as a heuristic that as long as the physical circuit has circuit degree less than $c$, we can find a low-weight $\abs{\partial^{l}}$. We search for $\partial^{l}$ with an iterative local greedy algorithm: for each step, we find checks $\partial^l$ through Algorithm~\ref{alg:subsystem_log} by letting the inputs $H =L_X \Gamma_1^{-1}$, $H_c = \mathbf{0}$, and $L^c = \ker{\Gamma_1^{-1}}\setminus H$. Here, the inverse should be understood as the pseudoinverse, and $\partial^l$ is a random lowest-weight choice from the output, $\set{L}$. This step can be understood as finding low-weight checks to add to a classical code $\Gamma_1^{-1}$ such that a specified kernel space can be retained. This process is repeated until $\Gamma_1\sbkt{\ker{\Gamma_1^{-1}}}$ is in the span of $\begin{pmatrix}
        \sbkt{\partial_2^C}^T\\
        L_X
\end{pmatrix}$. Note that this step is the only procedure in the algorithm that can lead to failure in ancilla construction. Nevertheless, the failure is mostly due to a poor choice in logical basis or a high physical circuit degree, so there is little one can do from a algorithmic perspective. Lastly, we search for a low-weight generating basis for $\ker{\sbkt{\Gamma_1^{-1}}^T}$ and include all low-weight checks with degree less than $c$ to be the initial $\partial_0^A$. The ancilla system at this point constitutes layer with index 1.

\begin{algorithm}[t]
\caption{Search for low-weight subsystem logicals}\label{alg:subsystem_log}
\KwIn{Parity check $H$, conjugate parity check $H^c$, bare conjugate logical $L^c$, max logical weight $w_m$, number of trials $n_{\text{trials}}$.}
\KwOut{Set of low-weight dressed logicals $\set{L}$, dressed distance $d$.}
Initialize sets of dressed logicals, $\set{L}$, as empty set.\\
\For{$i\leftarrow 1$ \KwTo $n_{\text{trials}}$}{
        Randomly shuffle rows of $H$ to get $H_i$. \\
        Perform Gaussian elimination on $H_i$ to get a generating basis for its kernel, defined as a matrix $K$.\\
        Compute $M := K \sbkt{L^c}^T\mod 2$, and assume it is a $R\times C$ dimensional matrix.\\
        \For{$r\leftarrow 1$ \KwTo $R$}{
                \uIf{\text{Sum}$(M_{r, :})>0$ and \text{Sum}$(K_{r, :})<=w_m$}{
                        $\set{L}\leftarrow \set{L} + K_{r, :}$\\
                }
        }
}
Dress distance, $d\leftarrow \min\set{\abs{L}}$.\\
\Return $\set{L}, d$
\end{algorithm}

\paragraph*{\textbf{Increase expansion}} The expansion is increased through addition of ancilla qubits and $Z$ checks. This subroutine is a generalization of the previous graph algorithms. However, instead of a fully random choice of hyperedge, we make a local greedy choice: we start with computing the current dressed distance and sample a set of distinct low-weight logicals, $\set{L}$, with Algorithm~\ref{alg:subsystem_log} of the whole merged code. We then search for new qubits to add from layer-to-layer. Consider a layer with index $i$ and the intralayer checks are $\partial^i_{0,1}$, there are two types of qubits one can add.
\begin{enumerate}
        \item The new qubit is connected to ancilla X-checks as a low-weight combination of the X-checks in the current layer. In practice, we low-weight vectors $v$ and defined $w = \sbkt{\partial_1^i}^T v$. If we add such a qubit, the layer checks are modified as
        \begin{eqnarray}
                        \partial_1^i \to \begin{pmatrix}
                                \partial_1^i \\ w
                        \end{pmatrix}, \partial_0^i &\to& \begin{pmatrix}
                                \partial_0^i & \mathbf{0}\\
                                v^T & 1
                        \end{pmatrix}
        \end{eqnarray}
        so we filter by requiring the modified check matrices to have degree under $c$. 
        \item The new qubit is connected to current layer X-checks identically with the connection of an existing qubit in layer $i-1$ or $i+1$ to that layer's X checks. As we will see later and also in Figure.~\ref{fig:algo_demo}, the X-checks in each layer are of the same size, so to have identical connections is well-defined. Without loss of generality, consider duplicating a X-check in the $i-1$th layer with X-check $w$, we modify the check matrices as 
        \begin{eqnarray}
                        \partial_1^i &\to& \begin{pmatrix}
                                \partial_1^i \\ w
                        \end{pmatrix}\\
                        \begin{pmatrix}
                                 & \vdots&\vdots&\vdots &\\
                                \dots & \partial_0^{i-1} & \mathbf{0} & \mathbf{0}& \dots \\
                                \dots &I^{i-1} & \delta^{i-1,i} & I^i & \dots \\
                                \dots &\mathbf{0} & \mathbf{0} & \partial_0^{i}& \dots \\
                                & \underbrace{\vdots}_{\text{Layer }i-1}&\underbrace{\vdots}_{\text{Bridge}}&\underbrace{\vdots}_{\text{Layer }i} &
                        \end{pmatrix}&\to& \begin{pmatrix}
                                 & \vdots&\vdots& \vdots&\vdots &\\
                                \dots & \partial_0^{i-1} & \mathbf{0} & \mathbf{0}& \mathbf{0}& \dots \\
                                \dots &I^{i-1} & \delta^{i-1,i} & \mathbf{0} & I^i & \dots \\
                                \dots & e^T_w & w^T & e^T_w & \mathbf{0} & \dots\\
                                \dots &\mathbf{0} & \mathbf{0} & \mathbf{0}&\partial_0^{i}& \dots \\
                                & \vdots&\vdots&\vdots&\vdots &
                        \end{pmatrix}\label{eq:bridge_mat}
        \end{eqnarray}
        where in the second line we show the modification of relevant intralayer Z-checks and also bridge checks and qubits. $\delta^{i-1,i}$ describes the bridge checks, and its rows are formed by common subsets of columns of $\partial_0^{i-1,i}$. $e_w$ is unit vectors with nonzero entry at the corresponding duplicated qubit's position. One can see clearly from Eq.~\eqref{eq:bridge_mat} that the added bridge Z-checks will have weight $\abs{w} + 2$. Thus, overall we restrict $\abs{w}\leq c-2$ so that all qubits are available for copying.
\end{enumerate}
With either choice of qubits, suppose the Z-checks added is $Z_v$, we compute a score for each qubit addition as $s_v:= \text{Sum}\sbkt{Z_v^T L\mod 2}$. Intuitively, such a score represents how many low-weight logicals anticommutes with the candidate Z-check. By the stabilizer and logical update rule [CITE?], any logical that commutes with the added Z-checks will remain a logical in the enlarged system with the same weight. We iterate over existing layers from 1 to the end, and as long as we find a layer with nonzero $s_v$, we add the qubit with max $s_v$ and continue to the next iteration. Note that there is no gaurantee one can always find low-weight $v,w$ such that the score is nonzero. When one fails to do so, we enter the procedure of adding layers.

\paragraph*{\textbf{Reducing code degrees}} To have a mechanism of adding X-checks meaningfully is crucial for the success of our algorithm since notably the steps before are all adding qubits and Z-checks. If no X-checks can be added, the weight of the X-checks would soon exceed the code degree limit set. We choose to add X-checks in a conservative way to gaurantee that the algorithm succeeds, i.e. in the worst case we recover the HGP ancilla construction and still return an ancilla construction with sufficient distance. We add X-checks by adding a whole layer of X-checks and qubits, connected 1-to-1. Assume the current maximum number of layers is $m$, we add a new layer through
\begin{eqnarray}
                \begin{pmatrix}
                                 & \vdots\\
                                \dots & \partial_1^{m}
                \end{pmatrix} &\to& \begin{pmatrix}
                                 & \vdots & \vdots\\
                                \dots & \partial_1^{m} & I\\
                                \dots& \mathbf{0}& I
                \end{pmatrix}\\
                 \partial^A_0 &\to& \begin{pmatrix}
                        \partial^A_0 & \mathbf{0}
                \end{pmatrix}
\end{eqnarray}
This step cannot increase the distance or eliminate any low-weight logicals, but after adding such a layer, one is gauranteed to find low-weight qubits that can increase expansion in the previous procedure.

One of the subtlety we have skipped is that to gaurantee absolutely one can reach HGP ancilla construction as worst case, it is necessary to only consider case 1 in the increase expansion procedure when the current layer has a full kernl,
\begin{eqnarray}
                \ker{\partial_1^i} = \ker{\partial_1^1}
\end{eqnarray}
where $\ker{\partial_1^1}$ denotes the kernel space of the first layer. Such condition is required since we have allowed the algorithm to randomly add hyperedge combinations, which is not present in HGP constructions. If this condition is not present, it is possible for the algorithm to not find a possible nontrivial qubit choice even with added layers. However, we remark that this failure scenario only appeared when we simulated very large code examples, and it contributed the increased overhead per logicals measured. For smaller codes, one can often find much smaller distance-preserving ancilla constructions by not enforcing this condition. In addition, we only considered bridge qubits between adjacent layers, which is not necessary since locality is not a focus in this work. Therefore, we imagine that there are many ways to improve the algorithm by adding flexibility and clever algorithms. We leave such optimizations to future work.

In conclusion, we found that two points are crucial in the algorithm design. The first is to provide the ancilla system procedures to increase the number of $X$-checks. This is, in our opinion, an overlooked mechanism by prior works\cite{yoder2025tourgrossmodularquantum,Cross2024ImprovedSurgery, Ide_2025}. Their approaches worked both because they did not enforce a strict degree constraint and that they are working with small code examples measuring single logicals. However, to scale up while preserving sparsity, one would require a procedure similar to the layer addition procedure we outlined. Secondly, we found that to rank the possible qubit additions by their scores significantly reduces the system overhead. This holds even if we apply our algorithm to measure a single logical.

\section{Measurement of mixed-type logicals \label{app:mixed_meas}}

In the above, we have restricted to measurement of single Pauli type logicals in CSS codes for simplicity. In this section, we comment on how our approach generalizes to general logical measurements. Suppose we consider CSS codes and measurement of mixed types of logicals. One can generally write the logical measurement set as 
\begin{eqnarray}
                M = \set{X^L_i Z^L_j \vert i\in K_X, j\in K_Z}
\end{eqnarray}
where $K_{X,Z}$ denotes the set of logical X and Z indices that are in $M$. Here, we neglect the $\pm 1$ factors for simplicity. The discussion is relatively simple when $K_X\cap K_Z = \varnothing$: without lost of generality, one can first construct the ancilla system for measuring all $M_X = \set{X^L_i\vert i\in K_X}$ and then for the Z-logicals in $K_Z$. To properly measure the products instead of measuring them individually, one add logical checks following the procedure described in Methods and in Appendix~\ref{app:random_algorithm}. Similar ideas for low-rate surgery has been known as bridge qubits~\cite{Cross2024ImprovedSurgery,yoder2025tourgrossmodularquantum} or adapters~\cite{swaroop2025universaladaptersquantumldpc}. 

It is less straightforward when there are overlap in $K_{X,Z}$. The above construction works is that the construction of each X and Z ancilla system measure only operators that commutes with the current stabilizer group, which aligns with the surgery framework we have developed. Nevertheless, when there is overlap, such an assumption no longer holds. One remedy is to use twist-free surgery~\cite{PRXQuantum.3.010331,Cowtan2025ParallelSurgery,Guo2025TimeEfficient}, where the measurement circuit is decomposed into two parts, each measuring a commuting joint Pauli product. The drawback is that one would need a resource state, $\ket{Y}$. This overhead is acceptable in the asymptotic regime since the Y resource state is unchanged after the measurement and can be used repeatedly. One way of preparing these Y resource states would be to perform low-rate surgery one-by-one. However, it is unknown how to prepare these Y resource states in parallel. Thus, more work is needed to assess the overhead associated with such resource states.

In Ref.~\cite{Cowtan2025ParallelSurgery}, the canonical logical basis~\cite{Cowtan2025ParallelSurgery, gottesman1997stabilizercodesquantumerror} was proposed to circumvent this challenge. In such a basis, for any physical qubit in $C_1$, it is either X-support or Z-support of any logicals, but not both. This enables the scheme in Ref.~\cite{Cowtan2025ParallelSurgery} to prepare the Y states simulatenously. We refer interested readers to Ref.~\cite{Cowtan2025ParallelSurgery} for a more detailed discussion. While our scheme can directly adopt a similar solution, there is also an alternative approach: from our discussion in Appendix~\ref{app:general_framework} and Lemma~\ref{lem:surgery_measurements_framework}, we remarked that the key to the logical measurement is through redundancies in the ancilla X-checks. In fact, this idea is not limited to measuring a single type of logicals: in the symplectic form, consider the chain maps $\Gamma_{0,1}$ as
\begin{eqnarray}
                \Gamma_{\alpha} = \begin{pmatrix}
                        \Gamma^X_{\alpha} & \Gamma^Z_{\alpha}
                \end{pmatrix}
\end{eqnarray}
with $\alpha = 0,1$. $\Gamma_{\alpha}$ is a matrix of size $\abs{A_1}\times 2\abs{C_1}$. $\Gamma_{\alpha}$ allows the ancilla $A_1$ checks to connect to the data qubits, $C_1$, through either $X, Y, Z$ connections. One the chain maps form a commuting diagram, the logical action is given by $\Gamma_1 \sbkt{\ker{\partial_1^A}}$ as in Lemma~\ref{lem:surgery_measurements_framework}. However, while it is not so different from the perspective of design the kernel, $\ker{\partial_1^A}$, there is generally no gaurantee that all checks in $\Gamma_{\alpha}$ will commute with itself, i.e. $\Gamma_{\alpha} \Omega \Gamma_{\alpha}^T = 0$, with $\Omega$ being the symplectic matrix. For example, the logical set, $M$, can contain pairs of commuting logicals that have anticommuting supports on even number of qubits. Nevertheless, this will not occur if the canonical basis is adopted: $\Gamma_1$ will either not connect to a data qubit or connect to it through a single type of Pauli operator. One should note that the canonical basis is not limited to CSS codes and so is our discussion above. The major change would be that when one construct the ancilla system through, for example, taking union of the conjugate checks (see discussion in Appendix~\ref{app:random_algorithm}), one should consider all checks that anticommute with all individual parity checks added through the ancilla system.
 
\section{Constructions of spatially coupled codes \label{sec:app_SC_codes}}
We follow the standard construction of Ref.~\cite{Yang2025SCQLDPC} and identify the instances listed in Table~\ref{tab:SC_code_params} via numerical search. Here, we briefly sketch the construction and refer the reader to Ref.~\cite{Yang2025SCQLDPC} for full details. A $[[n, k, d]]$ spatially coupled hypergraph product (SC-HGP) code is built by starting from an $L \times L$ array of identical base HGP codes — each obtained as the hypergraph product of two identical classical codes with check matrix $H_C \in \mathbb{F}_2^{r_C \times n_C}$ — and introducing couplings between them. Each base HGP code has parameters $[[n_0 = r_C^2 + n_C^2,\ k_0 \ge n_0 - 2r_C n_C,\ d_0]]$, leading to a SC-HGP code with parameters $n = n_0 L^2$ and $k \ge k_0 L^2$. As an example, for the $[[1125, 245, 10]]$, we choose $r_C = 3$, $n_C = 6$, and $L = 5$.


\begin{thebibliography}{87}%
\makeatletter
\providecommand \@ifxundefined [1]{%
 \@ifx{#1\undefined}
}%
\providecommand \@ifnum [1]{%
 \ifnum #1\expandafter \@firstoftwo
 \else \expandafter \@secondoftwo
 \fi
}%
\providecommand \@ifx [1]{%
 \ifx #1\expandafter \@firstoftwo
 \else \expandafter \@secondoftwo
 \fi
}%
\providecommand \natexlab [1]{#1}%
\providecommand \enquote  [1]{``#1''}%
\providecommand \bibnamefont  [1]{#1}%
\providecommand \bibfnamefont [1]{#1}%
\providecommand \citenamefont [1]{#1}%
\providecommand \href@noop [0]{\@secondoftwo}%
\providecommand \href [0]{\begingroup \@sanitize@url \@href}%
\providecommand \@href[1]{\@@startlink{#1}\@@href}%
\providecommand \@@href[1]{\endgroup#1\@@endlink}%
\providecommand \@sanitize@url [0]{\catcode `\\12\catcode `\$12\catcode `\&12\catcode `\#12\catcode `\^12\catcode `\_12\catcode `\%12\relax}%
\providecommand \@@startlink[1]{}%
\providecommand \@@endlink[0]{}%
\providecommand \url  [0]{\begingroup\@sanitize@url \@url }%
\providecommand \@url [1]{\endgroup\@href {#1}{\urlprefix }}%
\providecommand \urlprefix  [0]{URL }%
\providecommand \Eprint [0]{\href }%
\providecommand \doibase [0]{https://doi.org/}%
\providecommand \selectlanguage [0]{\@gobble}%
\providecommand \bibinfo  [0]{\@secondoftwo}%
\providecommand \bibfield  [0]{\@secondoftwo}%
\providecommand \translation [1]{[#1]}%
\providecommand \BibitemOpen [0]{}%
\providecommand \bibitemStop [0]{}%
\providecommand \bibitemNoStop [0]{.\EOS\space}%
\providecommand \EOS [0]{\spacefactor3000\relax}%
\providecommand \BibitemShut  [1]{\csname bibitem#1\endcsname}%
\let\auto@bib@innerbib\@empty
\bibitem [{\citenamefont {Williamson}\ and\ \citenamefont {Yoder}(2024)}]{williamson2024lowoverheadfaulttolerantquantumcomputation}%
  \BibitemOpen
  \bibfield  {author} {\bibinfo {author} {\bibfnamefont {D.~J.}\ \bibnamefont {Williamson}}\ and\ \bibinfo {author} {\bibfnamefont {T.~J.}\ \bibnamefont {Yoder}},\ }\href {https://arxiv.org/abs/2410.02213} {\bibinfo {title} {Low-overhead fault-tolerant quantum computation by gauging logical operators}} (\bibinfo {year} {2024}),\ \Eprint {https://arxiv.org/abs/2410.02213} {arXiv:2410.02213 [quant-ph]} \BibitemShut {NoStop}%
\bibitem [{\citenamefont {Cohen}\ \emph {et~al.}(2022)\citenamefont {Cohen}, \citenamefont {Kim}, \citenamefont {Bartlett},\ and\ \citenamefont {Brown}}]{Cohen2022LDPCSurgery}%
  \BibitemOpen
  \bibfield  {author} {\bibinfo {author} {\bibfnamefont {L.~Z.}\ \bibnamefont {Cohen}}, \bibinfo {author} {\bibfnamefont {I.~H.}\ \bibnamefont {Kim}}, \bibinfo {author} {\bibfnamefont {S.~D.}\ \bibnamefont {Bartlett}},\ and\ \bibinfo {author} {\bibfnamefont {B.~J.}\ \bibnamefont {Brown}},\ }\href {https://doi.org/10.1126/sciadv.abn1717} {\bibfield  {journal} {\bibinfo  {journal} {Science Advances}\ }\textbf {\bibinfo {volume} {8}},\ \bibinfo {pages} {eabn1717} (\bibinfo {year} {2022})}\BibitemShut {NoStop}%
\bibitem [{\citenamefont {Cowtan}\ \emph {et~al.}(2025)\citenamefont {Cowtan}, \citenamefont {He}, \citenamefont {Williamson},\ and\ \citenamefont {Yoder}}]{Cowtan2025ParallelSurgery}%
  \BibitemOpen
  \bibfield  {author} {\bibinfo {author} {\bibfnamefont {A.}~\bibnamefont {Cowtan}}, \bibinfo {author} {\bibfnamefont {Z.}~\bibnamefont {He}}, \bibinfo {author} {\bibfnamefont {D.~J.}\ \bibnamefont {Williamson}},\ and\ \bibinfo {author} {\bibfnamefont {T.~J.}\ \bibnamefont {Yoder}},\ }\href {https://arxiv.org/abs/2503.05003} {\bibinfo {title} {Parallel logical measurements via quantum code surgery}} (\bibinfo {year} {2025}),\ \Eprint {https://arxiv.org/abs/2503.05003} {arXiv:2503.05003 [quant-ph]} \BibitemShut {NoStop}%
\bibitem [{\citenamefont {Zhang}\ and\ \citenamefont {Li}(2025)}]{Guo2025TimeEfficient}%
  \BibitemOpen
  \bibfield  {author} {\bibinfo {author} {\bibfnamefont {G.}~\bibnamefont {Zhang}}\ and\ \bibinfo {author} {\bibfnamefont {Y.}~\bibnamefont {Li}},\ }\href {https://doi.org/10.1103/PhysRevLett.134.070602} {\bibfield  {journal} {\bibinfo  {journal} {Phys. Rev. Lett.}\ }\textbf {\bibinfo {volume} {134}},\ \bibinfo {pages} {070602} (\bibinfo {year} {2025})}\BibitemShut {NoStop}%
\bibitem [{\citenamefont {Bravyi}\ \emph {et~al.}(2024)\citenamefont {Bravyi} \emph {et~al.}}]{Bravyi2024Nature}%
  \BibitemOpen
  \bibfield  {author} {\bibinfo {author} {\bibfnamefont {S.}~\bibnamefont {Bravyi}} \emph {et~al.},\ }\href {https://doi.org/10.1038/s41586-024-07107-7} {\bibfield  {journal} {\bibinfo  {journal} {Nature}\ }\textbf {\bibinfo {volume} {627}},\ \bibinfo {pages} {778} (\bibinfo {year} {2024})}\BibitemShut {NoStop}%
\bibitem [{\citenamefont {Shor}(1995)}]{Shor1995QEC}%
  \BibitemOpen
  \bibfield  {author} {\bibinfo {author} {\bibfnamefont {P.~W.}\ \bibnamefont {Shor}},\ }\href {https://doi.org/10.1103/PhysRevA.52.R2493} {\bibfield  {journal} {\bibinfo  {journal} {Physical Review A}\ }\textbf {\bibinfo {volume} {52}},\ \bibinfo {pages} {R2493} (\bibinfo {year} {1995})}\BibitemShut {NoStop}%
\bibitem [{\citenamefont {Gottesman}(1997{\natexlab{a}})}]{Gottesman1997Stabilizer}%
  \BibitemOpen
  \bibfield  {author} {\bibinfo {author} {\bibfnamefont {D.}~\bibnamefont {Gottesman}},\ }\emph {\bibinfo {title} {Stabilizer Codes and Quantum Error Correction}},\ \href@noop {} {Ph.D. thesis},\ \bibinfo  {school} {California Institute of Technology} (\bibinfo {year} {1997}{\natexlab{a}}),\ \bibinfo {note} {arXiv:quant-ph/9705052}\BibitemShut {NoStop}%
\bibitem [{\citenamefont {Terhal}(2015)}]{Terhal2015QECReview}%
  \BibitemOpen
  \bibfield  {author} {\bibinfo {author} {\bibfnamefont {B.~M.}\ \bibnamefont {Terhal}},\ }\href {https://doi.org/10.1103/RevModPhys.87.307} {\bibfield  {journal} {\bibinfo  {journal} {Rev. Mod. Phys.}\ }\textbf {\bibinfo {volume} {87}},\ \bibinfo {pages} {307} (\bibinfo {year} {2015})}\BibitemShut {NoStop}%
\bibitem [{\citenamefont {Kitaev}(2003)}]{Kitaev2003Topological}%
  \BibitemOpen
  \bibfield  {author} {\bibinfo {author} {\bibfnamefont {A.~Y.}\ \bibnamefont {Kitaev}},\ }\href {https://doi.org/10.1016/S0003-4916(02)00018-0} {\bibfield  {journal} {\bibinfo  {journal} {Annals of Physics}\ }\textbf {\bibinfo {volume} {303}},\ \bibinfo {pages} {2} (\bibinfo {year} {2003})}\BibitemShut {NoStop}%
\bibitem [{\citenamefont {Litinski}(2019)}]{Litinski2019Game}%
  \BibitemOpen
  \bibfield  {author} {\bibinfo {author} {\bibfnamefont {D.}~\bibnamefont {Litinski}},\ }\href {https://doi.org/10.22331/q-2019-03-05-128} {\bibfield  {journal} {\bibinfo  {journal} {Quantum}\ }\textbf {\bibinfo {volume} {3}},\ \bibinfo {pages} {128} (\bibinfo {year} {2019})}\BibitemShut {NoStop}%
\bibitem [{\citenamefont {Breuckmann}\ and\ \citenamefont {Eberhardt}(2021{\natexlab{a}})}]{Breuckmann2021QLDPCPerspective}%
  \BibitemOpen
  \bibfield  {author} {\bibinfo {author} {\bibfnamefont {N.~P.}\ \bibnamefont {Breuckmann}}\ and\ \bibinfo {author} {\bibfnamefont {J.~N.}\ \bibnamefont {Eberhardt}},\ }\href {https://doi.org/10.1103/PRXQuantum.2.040101} {\bibfield  {journal} {\bibinfo  {journal} {PRX Quantum}\ }\textbf {\bibinfo {volume} {2}},\ \bibinfo {pages} {040101} (\bibinfo {year} {2021}{\natexlab{a}})}\BibitemShut {NoStop}%
\bibitem [{\citenamefont {Wang}\ \emph {et~al.}(2025)\citenamefont {Wang}, \citenamefont {Lu}, \citenamefont {Zhang}, \citenamefont {Liu}, \citenamefont {Chen}, \citenamefont {Wang}, \citenamefont {Wu}, \citenamefont {Xu}, \citenamefont {Zhu}, \citenamefont {Jin}, \citenamefont {Gao}, \citenamefont {Tan}, \citenamefont {Cui}, \citenamefont {Wang}, \citenamefont {Zou}, \citenamefont {Zhang}, \citenamefont {Li}, \citenamefont {Shen}, \citenamefont {Zhong}, \citenamefont {Bao}, \citenamefont {Zhu}, \citenamefont {Han}, \citenamefont {He}, \citenamefont {Shen}, \citenamefont {Wang}, \citenamefont {Yang}, \citenamefont {Song}, \citenamefont {Deng}, \citenamefont {Dong}, \citenamefont {Sun}, \citenamefont {Li}, \citenamefont {Ye}, \citenamefont {Jiang}, \citenamefont {Ma}, \citenamefont {Shen}, \citenamefont {Zhang}, \citenamefont {Li}, \citenamefont {Guo}, \citenamefont {Wang}, \citenamefont {Song}, \citenamefont {Wang},\ and\ \citenamefont {Deng}}]{wang2025demonstrationlowoverheadquantumerror}%
  \BibitemOpen
  \bibfield  {author} {\bibinfo {author} {\bibfnamefont {K.}~\bibnamefont {Wang}}, \bibinfo {author} {\bibfnamefont {Z.}~\bibnamefont {Lu}}, \bibinfo {author} {\bibfnamefont {C.}~\bibnamefont {Zhang}}, \bibinfo {author} {\bibfnamefont {G.}~\bibnamefont {Liu}}, \bibinfo {author} {\bibfnamefont {J.}~\bibnamefont {Chen}}, \bibinfo {author} {\bibfnamefont {Y.}~\bibnamefont {Wang}}, \bibinfo {author} {\bibfnamefont {Y.}~\bibnamefont {Wu}}, \bibinfo {author} {\bibfnamefont {S.}~\bibnamefont {Xu}}, \bibinfo {author} {\bibfnamefont {X.}~\bibnamefont {Zhu}}, \bibinfo {author} {\bibfnamefont {F.}~\bibnamefont {Jin}}, \bibinfo {author} {\bibfnamefont {Y.}~\bibnamefont {Gao}}, \bibinfo {author} {\bibfnamefont {Z.}~\bibnamefont {Tan}}, \bibinfo {author} {\bibfnamefont {Z.}~\bibnamefont {Cui}}, \bibinfo {author} {\bibfnamefont {N.}~\bibnamefont {Wang}}, \bibinfo {author} {\bibfnamefont {Y.}~\bibnamefont {Zou}}, \bibinfo {author} {\bibfnamefont {A.}~\bibnamefont {Zhang}}, \bibinfo {author} {\bibfnamefont {T.}~\bibnamefont {Li}}, \bibinfo {author} {\bibfnamefont {F.}~\bibnamefont {Shen}}, \bibinfo {author} {\bibfnamefont {J.}~\bibnamefont {Zhong}}, \bibinfo {author} {\bibfnamefont {Z.}~\bibnamefont {Bao}}, \bibinfo {author} {\bibfnamefont {Z.}~\bibnamefont {Zhu}}, \bibinfo {author} {\bibfnamefont {Y.}~\bibnamefont {Han}}, \bibinfo {author} {\bibfnamefont {Y.}~\bibnamefont {He}}, \bibinfo {author} {\bibfnamefont {J.}~\bibnamefont {Shen}}, \bibinfo {author} {\bibfnamefont {H.}~\bibnamefont {Wang}}, \bibinfo {author} {\bibfnamefont {J.-N.}\ \bibnamefont {Yang}}, \bibinfo {author} {\bibfnamefont {Z.}~\bibnamefont {Song}}, \bibinfo {author} {\bibfnamefont {J.}~\bibnamefont {Deng}}, \bibinfo {author} {\bibfnamefont {H.}~\bibnamefont {Dong}}, \bibinfo {author} {\bibfnamefont {Z.-Z.}\ \bibnamefont {Sun}}, \bibinfo {author} {\bibfnamefont {W.}~\bibnamefont {Li}}, \bibinfo {author} {\bibfnamefont {Q.}~\bibnamefont {Ye}}, \bibinfo {author} {\bibfnamefont {S.}~\bibnamefont {Jiang}}, \bibinfo {author} {\bibfnamefont {Y.}~\bibnamefont {Ma}}, \bibinfo {author} {\bibfnamefont {P.-X.}\ \bibnamefont {Shen}}, \bibinfo {author} {\bibfnamefont {P.}~\bibnamefont {Zhang}}, \bibinfo {author} {\bibfnamefont {H.}~\bibnamefont {Li}}, \bibinfo {author} {\bibfnamefont {Q.}~\bibnamefont {Guo}}, \bibinfo {author} {\bibfnamefont {Z.}~\bibnamefont {Wang}}, \bibinfo {author} {\bibfnamefont {C.}~\bibnamefont {Song}}, \bibinfo {author} {\bibfnamefont {H.}~\bibnamefont {Wang}},\ and\ \bibinfo {author} {\bibfnamefont {D.-L.}\ \bibnamefont {Deng}},\ }\href {https://arxiv.org/abs/2505.09684} {\bibinfo {title} {Demonstration of low-overhead quantum error correction codes}} (\bibinfo {year} {2025}),\ \Eprint {https://arxiv.org/abs/2505.09684} {arXiv:2505.09684 [quant-ph]} \BibitemShut {NoStop}%
\bibitem [{\citenamefont {Yoder}\ \emph {et~al.}(2025)\citenamefont {Yoder}, \citenamefont {Schoute}, \citenamefont {Rall}, \citenamefont {Pritchett}, \citenamefont {Gambetta}, \citenamefont {Cross}, \citenamefont {Carroll},\ and\ \citenamefont {Beverland}}]{yoder2025tourgrossmodularquantum}%
  \BibitemOpen
  \bibfield  {author} {\bibinfo {author} {\bibfnamefont {T.~J.}\ \bibnamefont {Yoder}}, \bibinfo {author} {\bibfnamefont {E.}~\bibnamefont {Schoute}}, \bibinfo {author} {\bibfnamefont {P.}~\bibnamefont {Rall}}, \bibinfo {author} {\bibfnamefont {E.}~\bibnamefont {Pritchett}}, \bibinfo {author} {\bibfnamefont {J.~M.}\ \bibnamefont {Gambetta}}, \bibinfo {author} {\bibfnamefont {A.~W.}\ \bibnamefont {Cross}}, \bibinfo {author} {\bibfnamefont {M.}~\bibnamefont {Carroll}},\ and\ \bibinfo {author} {\bibfnamefont {M.~E.}\ \bibnamefont {Beverland}},\ }\href {https://arxiv.org/abs/2506.03094} {\bibinfo {title} {Tour de gross: A modular quantum computer based on bivariate bicycle codes}} (\bibinfo {year} {2025}),\ \Eprint {https://arxiv.org/abs/2506.03094} {arXiv:2506.03094 [quant-ph]} \BibitemShut {NoStop}%
\bibitem [{\citenamefont {Xu}\ \emph {et~al.}(2024)\citenamefont {Xu}, \citenamefont {Bonilla~Ataides}, \citenamefont {Pattison}, \citenamefont {Raveendran}, \citenamefont {Bluvstein}, \citenamefont {Wurtz}, \citenamefont {Vasi{\'c}}, \citenamefont {Lukin}, \citenamefont {Jiang},\ and\ \citenamefont {Zhou}}]{Xu_2024_NaturePhysics}%
  \BibitemOpen
  \bibfield  {author} {\bibinfo {author} {\bibfnamefont {Q.}~\bibnamefont {Xu}}, \bibinfo {author} {\bibfnamefont {J.~P.}\ \bibnamefont {Bonilla~Ataides}}, \bibinfo {author} {\bibfnamefont {C.~A.}\ \bibnamefont {Pattison}}, \bibinfo {author} {\bibfnamefont {N.}~\bibnamefont {Raveendran}}, \bibinfo {author} {\bibfnamefont {D.}~\bibnamefont {Bluvstein}}, \bibinfo {author} {\bibfnamefont {J.}~\bibnamefont {Wurtz}}, \bibinfo {author} {\bibfnamefont {B.}~\bibnamefont {Vasi{\'c}}}, \bibinfo {author} {\bibfnamefont {M.~D.}\ \bibnamefont {Lukin}}, \bibinfo {author} {\bibfnamefont {L.}~\bibnamefont {Jiang}},\ and\ \bibinfo {author} {\bibfnamefont {H.}~\bibnamefont {Zhou}},\ }\href {https://doi.org/10.1038/s41567-024-02479-z} {\bibfield  {journal} {\bibinfo  {journal} {Nature Physics}\ }\textbf {\bibinfo {volume} {20}},\ \bibinfo {pages} {1084} (\bibinfo {year} {2024})}\BibitemShut {NoStop}%
\bibitem [{\citenamefont {Andersen}\ and\ \citenamefont {Greplová}(2025)}]{andersen2025smallquantumlowparity}%
  \BibitemOpen
  \bibfield  {author} {\bibinfo {author} {\bibfnamefont {C.~K.}\ \bibnamefont {Andersen}}\ and\ \bibinfo {author} {\bibfnamefont {E.}~\bibnamefont {Greplová}},\ }\href {https://arxiv.org/abs/2507.09690} {\bibinfo {title} {Small quantum low parity density check codes for near-term experiments}} (\bibinfo {year} {2025}),\ \Eprint {https://arxiv.org/abs/2507.09690} {arXiv:2507.09690 [quant-ph]} \BibitemShut {NoStop}%
\bibitem [{\citenamefont {Pecorari}\ \emph {et~al.}(2025)\citenamefont {Pecorari}, \citenamefont {Jandura}, \citenamefont {Brennen},\ and\ \citenamefont {Pupillo}}]{Pecorari_2025}%
  \BibitemOpen
  \bibfield  {author} {\bibinfo {author} {\bibfnamefont {L.}~\bibnamefont {Pecorari}}, \bibinfo {author} {\bibfnamefont {S.}~\bibnamefont {Jandura}}, \bibinfo {author} {\bibfnamefont {G.~K.}\ \bibnamefont {Brennen}},\ and\ \bibinfo {author} {\bibfnamefont {G.}~\bibnamefont {Pupillo}},\ }\href {https://doi.org/10.1038/s41467-025-56255-5} {\bibfield  {journal} {\bibinfo  {journal} {Nature Communications}\ }\textbf {\bibinfo {volume} {16}},\ \bibinfo {pages} {1111} (\bibinfo {year} {2025})}\BibitemShut {NoStop}%
\bibitem [{\citenamefont {Breuckmann}\ and\ \citenamefont {Burton}(2024)}]{Breuckmann2024FoldTransversal}%
  \BibitemOpen
  \bibfield  {author} {\bibinfo {author} {\bibfnamefont {N.~P.}\ \bibnamefont {Breuckmann}}\ and\ \bibinfo {author} {\bibfnamefont {S.}~\bibnamefont {Burton}},\ }\href {https://doi.org/10.22331/q-2024-06-13-1372} {\bibfield  {journal} {\bibinfo  {journal} {Quantum}\ }\textbf {\bibinfo {volume} {8}},\ \bibinfo {pages} {1372} (\bibinfo {year} {2024})},\ \bibinfo {note} {arXiv:2202.06647}\BibitemShut {NoStop}%
\bibitem [{\citenamefont {Sayginel}\ \emph {et~al.}(2025)\citenamefont {Sayginel}, \citenamefont {Koutsioumpas}, \citenamefont {Webster}, \citenamefont {Rajput},\ and\ \citenamefont {Browne}}]{Sayginel2025Automorphisms}%
  \BibitemOpen
  \bibfield  {author} {\bibinfo {author} {\bibfnamefont {H.}~\bibnamefont {Sayginel}}, \bibinfo {author} {\bibfnamefont {S.}~\bibnamefont {Koutsioumpas}}, \bibinfo {author} {\bibfnamefont {M.}~\bibnamefont {Webster}}, \bibinfo {author} {\bibfnamefont {A.}~\bibnamefont {Rajput}},\ and\ \bibinfo {author} {\bibfnamefont {D.~E.}\ \bibnamefont {Browne}},\ }\href {https://arxiv.org/abs/2409.18175} {\bibinfo {title} {Fault-tolerant logical clifford gates from code automorphisms}} (\bibinfo {year} {2025}),\ \Eprint {https://arxiv.org/abs/2409.18175} {arXiv:2409.18175 [quant-ph]} \BibitemShut {NoStop}%
\bibitem [{\citenamefont {Eberhardt}\ and\ \citenamefont {Steffan}(2024)}]{eberhardt2024logicaloperatorsfoldtransversalgates}%
  \BibitemOpen
  \bibfield  {author} {\bibinfo {author} {\bibfnamefont {J.~N.}\ \bibnamefont {Eberhardt}}\ and\ \bibinfo {author} {\bibfnamefont {V.}~\bibnamefont {Steffan}},\ }\href {https://arxiv.org/abs/2407.03973} {\bibinfo {title} {Logical operators and fold-transversal gates of bivariate bicycle codes}} (\bibinfo {year} {2024}),\ \Eprint {https://arxiv.org/abs/2407.03973} {arXiv:2407.03973 [quant-ph]} \BibitemShut {NoStop}%
\bibitem [{\citenamefont {Quintavalle}\ \emph {et~al.}(2023{\natexlab{a}})\citenamefont {Quintavalle}, \citenamefont {Webster},\ and\ \citenamefont {Vasmer}}]{Quintavalle_2023}%
  \BibitemOpen
  \bibfield  {author} {\bibinfo {author} {\bibfnamefont {A.~O.}\ \bibnamefont {Quintavalle}}, \bibinfo {author} {\bibfnamefont {P.}~\bibnamefont {Webster}},\ and\ \bibinfo {author} {\bibfnamefont {M.}~\bibnamefont {Vasmer}},\ }\href {https://doi.org/10.22331/q-2023-10-24-1153} {\bibfield  {journal} {\bibinfo  {journal} {Quantum}\ }\textbf {\bibinfo {volume} {7}},\ \bibinfo {pages} {1153} (\bibinfo {year} {2023}{\natexlab{a}})}\BibitemShut {NoStop}%
\bibitem [{\citenamefont {Horsman}\ \emph {et~al.}(2012)\citenamefont {Horsman}, \citenamefont {Fowler}, \citenamefont {Devitt},\ and\ \citenamefont {Meter}}]{Horsman_2012}%
  \BibitemOpen
  \bibfield  {author} {\bibinfo {author} {\bibfnamefont {D.}~\bibnamefont {Horsman}}, \bibinfo {author} {\bibfnamefont {A.~G.}\ \bibnamefont {Fowler}}, \bibinfo {author} {\bibfnamefont {S.}~\bibnamefont {Devitt}},\ and\ \bibinfo {author} {\bibfnamefont {R.~V.}\ \bibnamefont {Meter}},\ }\href {https://doi.org/10.1088/1367-2630/14/12/123011} {\bibfield  {journal} {\bibinfo  {journal} {New Journal of Physics}\ }\textbf {\bibinfo {volume} {14}},\ \bibinfo {pages} {123011} (\bibinfo {year} {2012})}\BibitemShut {NoStop}%
\bibitem [{\citenamefont {Fowler}\ \emph {et~al.}(2012)\citenamefont {Fowler}, \citenamefont {Mariantoni}, \citenamefont {Martinis},\ and\ \citenamefont {Cleland}}]{Fowler_2012}%
  \BibitemOpen
  \bibfield  {author} {\bibinfo {author} {\bibfnamefont {A.~G.}\ \bibnamefont {Fowler}}, \bibinfo {author} {\bibfnamefont {M.}~\bibnamefont {Mariantoni}}, \bibinfo {author} {\bibfnamefont {J.~M.}\ \bibnamefont {Martinis}},\ and\ \bibinfo {author} {\bibfnamefont {A.~N.}\ \bibnamefont {Cleland}},\ }\bibfield  {journal} {\bibinfo  {journal} {Physical Review A}\ }\textbf {\bibinfo {volume} {86}},\ \href {https://doi.org/10.1103/physreva.86.032324} {10.1103/physreva.86.032324} (\bibinfo {year} {2012})\BibitemShut {NoStop}%
\bibitem [{\citenamefont {Landahl}\ and\ \citenamefont {Ryan-Anderson}(2014)}]{landahl2014quantumcomputingcolorcodelattice}%
  \BibitemOpen
  \bibfield  {author} {\bibinfo {author} {\bibfnamefont {A.~J.}\ \bibnamefont {Landahl}}\ and\ \bibinfo {author} {\bibfnamefont {C.}~\bibnamefont {Ryan-Anderson}},\ }\href {https://arxiv.org/abs/1407.5103} {\bibinfo {title} {Quantum computing by color-code lattice surgery}} (\bibinfo {year} {2014}),\ \Eprint {https://arxiv.org/abs/1407.5103} {arXiv:1407.5103 [quant-ph]} \BibitemShut {NoStop}%
\bibitem [{\citenamefont {Cross}\ \emph {et~al.}(2024{\natexlab{a}})\citenamefont {Cross}, \citenamefont {Lao}, \citenamefont {Beverland} \emph {et~al.}}]{Cross2024ImprovedSurgery}%
  \BibitemOpen
  \bibfield  {author} {\bibinfo {author} {\bibfnamefont {A.~W.}\ \bibnamefont {Cross}}, \bibinfo {author} {\bibfnamefont {L.}~\bibnamefont {Lao}}, \bibinfo {author} {\bibfnamefont {M.~E.}\ \bibnamefont {Beverland}}, \emph {et~al.},\ }\href {https://arxiv.org/abs/2407.18393} {\bibinfo {title} {Improved qldpc surgery: Logical measurements and clifford operations with constant-footprint ancillas}} (\bibinfo {year} {2024}{\natexlab{a}}),\ \Eprint {https://arxiv.org/abs/2407.18393} {arXiv:2407.18393 [quant-ph]} \BibitemShut {NoStop}%
\bibitem [{\citenamefont {Cowtan}(2025)}]{cowtan2025homologyhopfalgebrasquantum}%
  \BibitemOpen
  \bibfield  {author} {\bibinfo {author} {\bibfnamefont {A.}~\bibnamefont {Cowtan}},\ }\href {https://arxiv.org/abs/2508.01496} {\bibinfo {title} {Homology, hopf algebras and quantum code surgery}} (\bibinfo {year} {2025}),\ \Eprint {https://arxiv.org/abs/2508.01496} {arXiv:2508.01496 [quant-ph]} \BibitemShut {NoStop}%
\bibitem [{\citenamefont {He}\ \emph {et~al.}(2025)\citenamefont {He}, \citenamefont {Cowtan}, \citenamefont {Williamson},\ and\ \citenamefont {Yoder}}]{he2025extractorsqldpcarchitecturesefficient}%
  \BibitemOpen
  \bibfield  {author} {\bibinfo {author} {\bibfnamefont {Z.}~\bibnamefont {He}}, \bibinfo {author} {\bibfnamefont {A.}~\bibnamefont {Cowtan}}, \bibinfo {author} {\bibfnamefont {D.~J.}\ \bibnamefont {Williamson}},\ and\ \bibinfo {author} {\bibfnamefont {T.~J.}\ \bibnamefont {Yoder}},\ }\href {https://arxiv.org/abs/2503.10390} {\bibinfo {title} {Extractors: Qldpc architectures for efficient pauli-based computation}} (\bibinfo {year} {2025}),\ \Eprint {https://arxiv.org/abs/2503.10390} {arXiv:2503.10390 [quant-ph]} \BibitemShut {NoStop}%
\bibitem [{\citenamefont {Cowtan}(2024)}]{cowtan2024ssipautomatedsurgeryquantum}%
  \BibitemOpen
  \bibfield  {author} {\bibinfo {author} {\bibfnamefont {A.}~\bibnamefont {Cowtan}},\ }\href {https://arxiv.org/abs/2407.09423} {\bibinfo {title} {Ssip: automated surgery with quantum ldpc codes}} (\bibinfo {year} {2024}),\ \Eprint {https://arxiv.org/abs/2407.09423} {arXiv:2407.09423 [quant-ph]} \BibitemShut {NoStop}%
\bibitem [{\citenamefont {Cowtan}\ and\ \citenamefont {Burton}(2024)}]{Cowtan_2024}%
  \BibitemOpen
  \bibfield  {author} {\bibinfo {author} {\bibfnamefont {A.}~\bibnamefont {Cowtan}}\ and\ \bibinfo {author} {\bibfnamefont {S.}~\bibnamefont {Burton}},\ }\href {https://doi.org/10.22331/q-2024-05-14-1344} {\bibfield  {journal} {\bibinfo  {journal} {Quantum}\ }\textbf {\bibinfo {volume} {8}},\ \bibinfo {pages} {1344} (\bibinfo {year} {2024})}\BibitemShut {NoStop}%
\bibitem [{\citenamefont {Cowtan}(2022)}]{cowtan2022quditlatticesurgery}%
  \BibitemOpen
  \bibfield  {author} {\bibinfo {author} {\bibfnamefont {A.}~\bibnamefont {Cowtan}},\ }\href {https://arxiv.org/abs/2204.13228} {\bibinfo {title} {Qudit lattice surgery}} (\bibinfo {year} {2022}),\ \Eprint {https://arxiv.org/abs/2204.13228} {arXiv:2204.13228 [quant-ph]} \BibitemShut {NoStop}%
\bibitem [{\citenamefont {Ide}\ \emph {et~al.}(2025)\citenamefont {Ide}, \citenamefont {Gowda}, \citenamefont {Nadkarni},\ and\ \citenamefont {Dauphinais}}]{Ide_2025}%
  \BibitemOpen
  \bibfield  {author} {\bibinfo {author} {\bibfnamefont {B.}~\bibnamefont {Ide}}, \bibinfo {author} {\bibfnamefont {M.~G.}\ \bibnamefont {Gowda}}, \bibinfo {author} {\bibfnamefont {P.~J.}\ \bibnamefont {Nadkarni}},\ and\ \bibinfo {author} {\bibfnamefont {G.}~\bibnamefont {Dauphinais}},\ }\bibfield  {journal} {\bibinfo  {journal} {Physical Review X}\ }\textbf {\bibinfo {volume} {15}},\ \href {https://doi.org/10.1103/physrevx.15.021088} {10.1103/physrevx.15.021088} (\bibinfo {year} {2025})\BibitemShut {NoStop}%
\bibitem [{\citenamefont {Swaroop}\ \emph {et~al.}(2025)\citenamefont {Swaroop}, \citenamefont {Jochym-O'Connor},\ and\ \citenamefont {Yoder}}]{swaroop2025universaladaptersquantumldpc}%
  \BibitemOpen
  \bibfield  {author} {\bibinfo {author} {\bibfnamefont {E.}~\bibnamefont {Swaroop}}, \bibinfo {author} {\bibfnamefont {T.}~\bibnamefont {Jochym-O'Connor}},\ and\ \bibinfo {author} {\bibfnamefont {T.~J.}\ \bibnamefont {Yoder}},\ }\href {https://arxiv.org/abs/2410.03628} {\bibinfo {title} {Universal adapters between quantum ldpc codes}} (\bibinfo {year} {2025}),\ \Eprint {https://arxiv.org/abs/2410.03628} {arXiv:2410.03628 [quant-ph]} \BibitemShut {NoStop}%
\bibitem [{\citenamefont {Bravyi}\ \emph {et~al.}(2016)\citenamefont {Bravyi}, \citenamefont {Smith},\ and\ \citenamefont {Smolin}}]{PhysRevX.6.021043}%
  \BibitemOpen
  \bibfield  {author} {\bibinfo {author} {\bibfnamefont {S.}~\bibnamefont {Bravyi}}, \bibinfo {author} {\bibfnamefont {G.}~\bibnamefont {Smith}},\ and\ \bibinfo {author} {\bibfnamefont {J.~A.}\ \bibnamefont {Smolin}},\ }\href {https://doi.org/10.1103/PhysRevX.6.021043} {\bibfield  {journal} {\bibinfo  {journal} {Phys. Rev. X}\ }\textbf {\bibinfo {volume} {6}},\ \bibinfo {pages} {021043} (\bibinfo {year} {2016})}\BibitemShut {NoStop}%
\bibitem [{\citenamefont {Bravyi}\ and\ \citenamefont {Kitaev}(2005)}]{PhysRevA.71.022316}%
  \BibitemOpen
  \bibfield  {author} {\bibinfo {author} {\bibfnamefont {S.}~\bibnamefont {Bravyi}}\ and\ \bibinfo {author} {\bibfnamefont {A.}~\bibnamefont {Kitaev}},\ }\href {https://doi.org/10.1103/PhysRevA.71.022316} {\bibfield  {journal} {\bibinfo  {journal} {Phys. Rev. A}\ }\textbf {\bibinfo {volume} {71}},\ \bibinfo {pages} {022316} (\bibinfo {year} {2005})}\BibitemShut {NoStop}%
\bibitem [{\citenamefont {Hastings}(2016)}]{hastings2016weightreductionquantumcodes}%
  \BibitemOpen
  \bibfield  {author} {\bibinfo {author} {\bibfnamefont {M.~B.}\ \bibnamefont {Hastings}},\ }\href {https://arxiv.org/abs/1611.03790} {\bibinfo {title} {Weight reduction for quantum codes}} (\bibinfo {year} {2016}),\ \Eprint {https://arxiv.org/abs/1611.03790} {arXiv:1611.03790 [quant-ph]} \BibitemShut {NoStop}%
\bibitem [{\citenamefont {Hastings}(2023)}]{hastings2023quantumweightreduction}%
  \BibitemOpen
  \bibfield  {author} {\bibinfo {author} {\bibfnamefont {M.~B.}\ \bibnamefont {Hastings}},\ }\href {https://arxiv.org/abs/2102.10030} {\bibinfo {title} {On quantum weight reduction}} (\bibinfo {year} {2023}),\ \Eprint {https://arxiv.org/abs/2102.10030} {arXiv:2102.10030 [quant-ph]} \BibitemShut {NoStop}%
\bibitem [{\citenamefont {Sabo}\ \emph {et~al.}(2024)\citenamefont {Sabo}, \citenamefont {Gunderman}, \citenamefont {Ide}, \citenamefont {Vasmer},\ and\ \citenamefont {Dauphinais}}]{Sabo_2024}%
  \BibitemOpen
  \bibfield  {author} {\bibinfo {author} {\bibfnamefont {E.}~\bibnamefont {Sabo}}, \bibinfo {author} {\bibfnamefont {L.~G.}\ \bibnamefont {Gunderman}}, \bibinfo {author} {\bibfnamefont {B.}~\bibnamefont {Ide}}, \bibinfo {author} {\bibfnamefont {M.}~\bibnamefont {Vasmer}},\ and\ \bibinfo {author} {\bibfnamefont {G.}~\bibnamefont {Dauphinais}},\ }\bibfield  {journal} {\bibinfo  {journal} {PRX Quantum}\ }\textbf {\bibinfo {volume} {5}},\ \href {https://doi.org/10.1103/prxquantum.5.040302} {10.1103/prxquantum.5.040302} (\bibinfo {year} {2024})\BibitemShut {NoStop}%
\bibitem [{\citenamefont {Xu}\ \emph {et~al.}(2025{\natexlab{a}})\citenamefont {Xu}, \citenamefont {Zhou}, \citenamefont {Zheng}, \citenamefont {Bluvstein}, \citenamefont {Bonilla~Ataides}, \citenamefont {Lukin},\ and\ \citenamefont {Jiang}}]{Xu2025Homomorphic}%
  \BibitemOpen
  \bibfield  {author} {\bibinfo {author} {\bibfnamefont {Q.}~\bibnamefont {Xu}}, \bibinfo {author} {\bibfnamefont {H.}~\bibnamefont {Zhou}}, \bibinfo {author} {\bibfnamefont {G.}~\bibnamefont {Zheng}}, \bibinfo {author} {\bibfnamefont {D.}~\bibnamefont {Bluvstein}}, \bibinfo {author} {\bibfnamefont {J.~P.}\ \bibnamefont {Bonilla~Ataides}}, \bibinfo {author} {\bibfnamefont {M.~D.}\ \bibnamefont {Lukin}},\ and\ \bibinfo {author} {\bibfnamefont {L.}~\bibnamefont {Jiang}},\ }\href {https://doi.org/10.1103/PhysRevX.15.021065} {\bibfield  {journal} {\bibinfo  {journal} {Physical Review X}\ }\textbf {\bibinfo {volume} {15}},\ \bibinfo {pages} {021065} (\bibinfo {year} {2025}{\natexlab{a}})}\BibitemShut {NoStop}%
\bibitem [{\citenamefont {Huang}\ \emph {et~al.}(2023)\citenamefont {Huang}, \citenamefont {Jochym-O'Connor},\ and\ \citenamefont {Yoder}}]{Huang2023Homomorphic}%
  \BibitemOpen
  \bibfield  {author} {\bibinfo {author} {\bibfnamefont {S.}~\bibnamefont {Huang}}, \bibinfo {author} {\bibfnamefont {T.}~\bibnamefont {Jochym-O'Connor}},\ and\ \bibinfo {author} {\bibfnamefont {T.~J.}\ \bibnamefont {Yoder}},\ }\href {https://doi.org/10.1103/PRXQuantum.4.030301} {\bibfield  {journal} {\bibinfo  {journal} {PRX Quantum}\ }\textbf {\bibinfo {volume} {4}},\ \bibinfo {pages} {030301} (\bibinfo {year} {2023})}\BibitemShut {NoStop}%
\bibitem [{\citenamefont {Tillich}\ and\ \citenamefont {Z\'emor}(2014)}]{Tillich2014HGP}%
  \BibitemOpen
  \bibfield  {author} {\bibinfo {author} {\bibfnamefont {J.-P.}\ \bibnamefont {Tillich}}\ and\ \bibinfo {author} {\bibfnamefont {G.}~\bibnamefont {Z\'emor}},\ }\href {https://doi.org/10.1109/TIT.2013.2292061} {\bibfield  {journal} {\bibinfo  {journal} {IEEE Transactions on Information Theory}\ }\textbf {\bibinfo {volume} {60}},\ \bibinfo {pages} {1193} (\bibinfo {year} {2014})}\BibitemShut {NoStop}%
\bibitem [{\citenamefont {Yang}\ and\ \citenamefont {Calderbank}(2025)}]{Yang2025SCQLDPC}%
  \BibitemOpen
  \bibfield  {author} {\bibinfo {author} {\bibfnamefont {S.}~\bibnamefont {Yang}}\ and\ \bibinfo {author} {\bibfnamefont {R.}~\bibnamefont {Calderbank}},\ }\href {https://doi.org/10.22331/q-2025-04-07-1693} {\bibfield  {journal} {\bibinfo  {journal} {Quantum}\ }\textbf {\bibinfo {volume} {9}},\ \bibinfo {pages} {1693} (\bibinfo {year} {2025})},\ \bibinfo {note} {arXiv:2305.00137}\BibitemShut {NoStop}%
\bibitem [{\citenamefont {Campbell}(2019)}]{Campbell_2019}%
  \BibitemOpen
  \bibfield  {author} {\bibinfo {author} {\bibfnamefont {E.~T.}\ \bibnamefont {Campbell}},\ }\href {https://doi.org/10.1088/2058-9565/aafc8f} {\bibfield  {journal} {\bibinfo  {journal} {Quantum Science and Technology}\ }\textbf {\bibinfo {volume} {4}},\ \bibinfo {pages} {025006} (\bibinfo {year} {2019})}\BibitemShut {NoStop}%
\bibitem [{\citenamefont {Zhou}\ \emph {et~al.}(2025)\citenamefont {Zhou}, \citenamefont {Zhao}, \citenamefont {Cain}, \citenamefont {Bluvstein}, \citenamefont {Maskara}, \citenamefont {Duckering}, \citenamefont {Hu}, \citenamefont {Wang}, \citenamefont {Kubica},\ and\ \citenamefont {Lukin}}]{zhou2025lowoverheadtransversalfaulttolerance}%
  \BibitemOpen
  \bibfield  {author} {\bibinfo {author} {\bibfnamefont {H.}~\bibnamefont {Zhou}}, \bibinfo {author} {\bibfnamefont {C.}~\bibnamefont {Zhao}}, \bibinfo {author} {\bibfnamefont {M.}~\bibnamefont {Cain}}, \bibinfo {author} {\bibfnamefont {D.}~\bibnamefont {Bluvstein}}, \bibinfo {author} {\bibfnamefont {N.}~\bibnamefont {Maskara}}, \bibinfo {author} {\bibfnamefont {C.}~\bibnamefont {Duckering}}, \bibinfo {author} {\bibfnamefont {H.-Y.}\ \bibnamefont {Hu}}, \bibinfo {author} {\bibfnamefont {S.-T.}\ \bibnamefont {Wang}}, \bibinfo {author} {\bibfnamefont {A.}~\bibnamefont {Kubica}},\ and\ \bibinfo {author} {\bibfnamefont {M.~D.}\ \bibnamefont {Lukin}},\ }\href {https://arxiv.org/abs/2406.17653} {\bibinfo {title} {Low-overhead transversal fault tolerance for universal quantum computation}} (\bibinfo {year} {2025}),\ \Eprint {https://arxiv.org/abs/2406.17653} {arXiv:2406.17653 [quant-ph]} \BibitemShut {NoStop}%
\bibitem [{\citenamefont {Cain}\ \emph {et~al.}(2024)\citenamefont {Cain}, \citenamefont {Zhao}, \citenamefont {Zhou}, \citenamefont {Meister}, \citenamefont {Ataides}, \citenamefont {Jaffe}, \citenamefont {Bluvstein},\ and\ \citenamefont {Lukin}}]{PhysRevLett.133.240602}%
  \BibitemOpen
  \bibfield  {author} {\bibinfo {author} {\bibfnamefont {M.}~\bibnamefont {Cain}}, \bibinfo {author} {\bibfnamefont {C.}~\bibnamefont {Zhao}}, \bibinfo {author} {\bibfnamefont {H.}~\bibnamefont {Zhou}}, \bibinfo {author} {\bibfnamefont {N.}~\bibnamefont {Meister}}, \bibinfo {author} {\bibfnamefont {J.~P.~B.}\ \bibnamefont {Ataides}}, \bibinfo {author} {\bibfnamefont {A.}~\bibnamefont {Jaffe}}, \bibinfo {author} {\bibfnamefont {D.}~\bibnamefont {Bluvstein}},\ and\ \bibinfo {author} {\bibfnamefont {M.~D.}\ \bibnamefont {Lukin}},\ }\href {https://doi.org/10.1103/PhysRevLett.133.240602} {\bibfield  {journal} {\bibinfo  {journal} {Phys. Rev. Lett.}\ }\textbf {\bibinfo {volume} {133}},\ \bibinfo {pages} {240602} (\bibinfo {year} {2024})}\BibitemShut {NoStop}%
\bibitem [{\citenamefont {Baspin}\ \emph {et~al.}(2025)\citenamefont {Baspin}, \citenamefont {Berent},\ and\ \citenamefont {Cohen}}]{baspin2025fastsurgeryquantumldpc}%
  \BibitemOpen
  \bibfield  {author} {\bibinfo {author} {\bibfnamefont {N.}~\bibnamefont {Baspin}}, \bibinfo {author} {\bibfnamefont {L.}~\bibnamefont {Berent}},\ and\ \bibinfo {author} {\bibfnamefont {L.~Z.}\ \bibnamefont {Cohen}},\ }\href {https://arxiv.org/abs/2510.04521} {\bibinfo {title} {Fast surgery for quantum ldpc codes}} (\bibinfo {year} {2025}),\ \Eprint {https://arxiv.org/abs/2510.04521} {arXiv:2510.04521 [quant-ph]} \BibitemShut {NoStop}%
\bibitem [{\citenamefont {Xu}\ \emph {et~al.}(2025{\natexlab{b}})\citenamefont {Xu}, \citenamefont {Zhou}, \citenamefont {Bluvstein}, \citenamefont {Cain}, \citenamefont {Kalinowski}, \citenamefont {Preskill}, \citenamefont {Lukin},\ and\ \citenamefont {Maskara}}]{xu2025batchedhighratelogicaloperations}%
  \BibitemOpen
  \bibfield  {author} {\bibinfo {author} {\bibfnamefont {Q.}~\bibnamefont {Xu}}, \bibinfo {author} {\bibfnamefont {H.}~\bibnamefont {Zhou}}, \bibinfo {author} {\bibfnamefont {D.}~\bibnamefont {Bluvstein}}, \bibinfo {author} {\bibfnamefont {M.}~\bibnamefont {Cain}}, \bibinfo {author} {\bibfnamefont {M.}~\bibnamefont {Kalinowski}}, \bibinfo {author} {\bibfnamefont {J.}~\bibnamefont {Preskill}}, \bibinfo {author} {\bibfnamefont {M.~D.}\ \bibnamefont {Lukin}},\ and\ \bibinfo {author} {\bibfnamefont {N.}~\bibnamefont {Maskara}},\ }\href {https://arxiv.org/abs/2510.06159} {\bibinfo {title} {Batched high-rate logical operations for quantum ldpc codes}} (\bibinfo {year} {2025}{\natexlab{b}}),\ \Eprint {https://arxiv.org/abs/2510.06159} {arXiv:2510.06159 [quant-ph]} \BibitemShut {NoStop}%
\bibitem [{\citenamefont {Calderbank}\ and\ \citenamefont {Shor}(1996)}]{PhysRevA.54.1098}%
  \BibitemOpen
  \bibfield  {author} {\bibinfo {author} {\bibfnamefont {A.~R.}\ \bibnamefont {Calderbank}}\ and\ \bibinfo {author} {\bibfnamefont {P.~W.}\ \bibnamefont {Shor}},\ }\href {https://doi.org/10.1103/PhysRevA.54.1098} {\bibfield  {journal} {\bibinfo  {journal} {Phys. Rev. A}\ }\textbf {\bibinfo {volume} {54}},\ \bibinfo {pages} {1098} (\bibinfo {year} {1996})}\BibitemShut {NoStop}%
\bibitem [{\citenamefont {Steane}(1996)}]{PhysRevLett.77.793}%
  \BibitemOpen
  \bibfield  {author} {\bibinfo {author} {\bibfnamefont {A.~M.}\ \bibnamefont {Steane}},\ }\href {https://doi.org/10.1103/PhysRevLett.77.793} {\bibfield  {journal} {\bibinfo  {journal} {Phys. Rev. Lett.}\ }\textbf {\bibinfo {volume} {77}},\ \bibinfo {pages} {793} (\bibinfo {year} {1996})}\BibitemShut {NoStop}%
\bibitem [{\citenamefont {Vuillot}\ \emph {et~al.}(2019)\citenamefont {Vuillot}, \citenamefont {Lao}, \citenamefont {Criger}, \citenamefont {García~Almudéver}, \citenamefont {Bertels},\ and\ \citenamefont {Terhal}}]{Vuillot_2019}%
  \BibitemOpen
  \bibfield  {author} {\bibinfo {author} {\bibfnamefont {C.}~\bibnamefont {Vuillot}}, \bibinfo {author} {\bibfnamefont {L.}~\bibnamefont {Lao}}, \bibinfo {author} {\bibfnamefont {B.}~\bibnamefont {Criger}}, \bibinfo {author} {\bibfnamefont {C.}~\bibnamefont {García~Almudéver}}, \bibinfo {author} {\bibfnamefont {K.}~\bibnamefont {Bertels}},\ and\ \bibinfo {author} {\bibfnamefont {B.~M.}\ \bibnamefont {Terhal}},\ }\href {https://doi.org/10.1088/1367-2630/ab0199} {\bibfield  {journal} {\bibinfo  {journal} {New Journal of Physics}\ }\textbf {\bibinfo {volume} {21}},\ \bibinfo {pages} {033028} (\bibinfo {year} {2019})}\BibitemShut {NoStop}%
\bibitem [{\citenamefont {Williamson}(2016)}]{PhysRevB.94.155128}%
  \BibitemOpen
  \bibfield  {author} {\bibinfo {author} {\bibfnamefont {D.~J.}\ \bibnamefont {Williamson}},\ }\href {https://doi.org/10.1103/PhysRevB.94.155128} {\bibfield  {journal} {\bibinfo  {journal} {Phys. Rev. B}\ }\textbf {\bibinfo {volume} {94}},\ \bibinfo {pages} {155128} (\bibinfo {year} {2016})}\BibitemShut {NoStop}%
\bibitem [{\citenamefont {Kubica}\ and\ \citenamefont {Yoshida}(2018)}]{kubica2018ungaugingquantumerrorcorrectingcodes}%
  \BibitemOpen
  \bibfield  {author} {\bibinfo {author} {\bibfnamefont {A.}~\bibnamefont {Kubica}}\ and\ \bibinfo {author} {\bibfnamefont {B.}~\bibnamefont {Yoshida}},\ }\href {https://arxiv.org/abs/1805.01836} {\bibinfo {title} {Ungauging quantum error-correcting codes}} (\bibinfo {year} {2018}),\ \Eprint {https://arxiv.org/abs/1805.01836} {arXiv:1805.01836 [quant-ph]} \BibitemShut {NoStop}%
\bibitem [{Note1()}]{Note1}%
  \BibitemOpen
  \bibinfo {note} {The spacetime overhead of the full scheme is $ \alpha = \protect \frac {|\protect \mathcal {C}| + |\protect \mathcal {A}|}{k + t}\times \protect \frac {t}{r_M |\protect \mathcal {A}|}= \protect \mathcal {O}\left (1/r_M\right )$.}\BibitemShut {Stop}%
\bibitem [{\citenamefont {Gottesman}(2013)}]{gottesman2013fault}%
  \BibitemOpen
  \bibfield  {author} {\bibinfo {author} {\bibfnamefont {D.}~\bibnamefont {Gottesman}},\ }\href {https://doi.org/10.48550/arxiv.1310.2984} {\bibfield  {journal} {\bibinfo  {journal} {Quantum Information and Computation}\ }\textbf {\bibinfo {volume} {14}},\ \bibinfo {pages} {1338} (\bibinfo {year} {2013})}\BibitemShut {NoStop}%
\bibitem [{\citenamefont {Freedman}\ and\ \citenamefont {Hastings}(2021)}]{Freedman_2021}%
  \BibitemOpen
  \bibfield  {author} {\bibinfo {author} {\bibfnamefont {M.}~\bibnamefont {Freedman}}\ and\ \bibinfo {author} {\bibfnamefont {M.}~\bibnamefont {Hastings}},\ }\href {https://doi.org/10.1007/s00039-021-00567-3} {\bibfield  {journal} {\bibinfo  {journal} {Geometric and Functional Analysis}\ }\textbf {\bibinfo {volume} {31}} (\bibinfo {year} {2021})}\BibitemShut {NoStop}%
\bibitem [{\citenamefont {Malcolm}\ \emph {et~al.}(2025)\citenamefont {Malcolm}, \citenamefont {Glaudell}, \citenamefont {Fuentes}, \citenamefont {Chandra}, \citenamefont {Schotte}, \citenamefont {DeLisle}, \citenamefont {Haenel}, \citenamefont {Ebrahimi}, \citenamefont {Roffe}, \citenamefont {Quintavalle}, \citenamefont {Beale}, \citenamefont {Lee-Hone},\ and\ \citenamefont {Simmons}}]{malcolm2025computingefficientlyqldpccodes}%
  \BibitemOpen
  \bibfield  {author} {\bibinfo {author} {\bibfnamefont {A.~J.}\ \bibnamefont {Malcolm}}, \bibinfo {author} {\bibfnamefont {A.~N.}\ \bibnamefont {Glaudell}}, \bibinfo {author} {\bibfnamefont {P.}~\bibnamefont {Fuentes}}, \bibinfo {author} {\bibfnamefont {D.}~\bibnamefont {Chandra}}, \bibinfo {author} {\bibfnamefont {A.}~\bibnamefont {Schotte}}, \bibinfo {author} {\bibfnamefont {C.}~\bibnamefont {DeLisle}}, \bibinfo {author} {\bibfnamefont {R.}~\bibnamefont {Haenel}}, \bibinfo {author} {\bibfnamefont {A.}~\bibnamefont {Ebrahimi}}, \bibinfo {author} {\bibfnamefont {J.}~\bibnamefont {Roffe}}, \bibinfo {author} {\bibfnamefont {A.~O.}\ \bibnamefont {Quintavalle}}, \bibinfo {author} {\bibfnamefont {S.~J.}\ \bibnamefont {Beale}}, \bibinfo {author} {\bibfnamefont {N.~R.}\ \bibnamefont {Lee-Hone}},\ and\ \bibinfo {author} {\bibfnamefont {S.}~\bibnamefont {Simmons}},\ }\href {https://arxiv.org/abs/2502.07150} {\bibinfo {title} {Computing efficiently in qldpc codes}} (\bibinfo {year} {2025}),\ \Eprint {https://arxiv.org/abs/2502.07150} {arXiv:2502.07150 [quant-ph]} \BibitemShut {NoStop}%
\bibitem [{\citenamefont {Berthusen}\ \emph {et~al.}(2025{\natexlab{a}})\citenamefont {Berthusen}, \citenamefont {Gullans}, \citenamefont {Hong}, \citenamefont {Mudassar},\ and\ \citenamefont {Tan}}]{berthusen2025automorphismgadgetshomologicalproduct}%
  \BibitemOpen
  \bibfield  {author} {\bibinfo {author} {\bibfnamefont {N.}~\bibnamefont {Berthusen}}, \bibinfo {author} {\bibfnamefont {M.~J.}\ \bibnamefont {Gullans}}, \bibinfo {author} {\bibfnamefont {Y.}~\bibnamefont {Hong}}, \bibinfo {author} {\bibfnamefont {M.}~\bibnamefont {Mudassar}},\ and\ \bibinfo {author} {\bibfnamefont {S.~J.~S.}\ \bibnamefont {Tan}},\ }\href {https://arxiv.org/abs/2508.04794} {\bibinfo {title} {Automorphism gadgets in homological product codes}} (\bibinfo {year} {2025}{\natexlab{a}}),\ \Eprint {https://arxiv.org/abs/2508.04794} {arXiv:2508.04794 [quant-ph]} \BibitemShut {NoStop}%
\bibitem [{\citenamefont {Berthusen}\ \emph {et~al.}(2025{\natexlab{b}})\citenamefont {Berthusen}, \citenamefont {Gullans}, \citenamefont {Hong}, \citenamefont {Mudassar},\ and\ \citenamefont {Tan}}]{berthusen2025automorphism}%
  \BibitemOpen
  \bibfield  {author} {\bibinfo {author} {\bibfnamefont {N.}~\bibnamefont {Berthusen}}, \bibinfo {author} {\bibfnamefont {M.~J.}\ \bibnamefont {Gullans}}, \bibinfo {author} {\bibfnamefont {Y.}~\bibnamefont {Hong}}, \bibinfo {author} {\bibfnamefont {M.}~\bibnamefont {Mudassar}},\ and\ \bibinfo {author} {\bibfnamefont {S.~J.~S.}\ \bibnamefont {Tan}},\ }\href@noop {} {\bibfield  {journal} {\bibinfo  {journal} {arXiv preprint arXiv:2508.04794}\ } (\bibinfo {year} {2025}{\natexlab{b}})}\BibitemShut {NoStop}%
\bibitem [{\citenamefont {Panteleev}\ and\ \citenamefont {Kalachev}(2022{\natexlab{a}})}]{Panteleev_2022}%
  \BibitemOpen
  \bibfield  {author} {\bibinfo {author} {\bibfnamefont {P.}~\bibnamefont {Panteleev}}\ and\ \bibinfo {author} {\bibfnamefont {G.}~\bibnamefont {Kalachev}},\ }\href {https://doi.org/10.1109/tit.2021.3119384} {\bibfield  {journal} {\bibinfo  {journal} {IEEE Transactions on Information Theory}\ }\textbf {\bibinfo {volume} {68}},\ \bibinfo {pages} {213–229} (\bibinfo {year} {2022}{\natexlab{a}})}\BibitemShut {NoStop}%
\bibitem [{\citenamefont {Hastings}\ \emph {et~al.}(2021)\citenamefont {Hastings}, \citenamefont {Haah},\ and\ \citenamefont {O’Donnell}}]{Hastings_2021}%
  \BibitemOpen
  \bibfield  {author} {\bibinfo {author} {\bibfnamefont {M.~B.}\ \bibnamefont {Hastings}}, \bibinfo {author} {\bibfnamefont {J.}~\bibnamefont {Haah}},\ and\ \bibinfo {author} {\bibfnamefont {R.}~\bibnamefont {O’Donnell}},\ }in\ \href {https://doi.org/10.1145/3406325.3451005} {\emph {\bibinfo {booktitle} {Proceedings of the 53rd Annual ACM SIGACT Symposium on Theory of Computing}}},\ \bibinfo {series and number} {STOC ’21}\ (\bibinfo  {publisher} {ACM},\ \bibinfo {year} {2021})\ p.\ \bibinfo {pages} {1276–1288}\BibitemShut {NoStop}%
\bibitem [{\citenamefont {Breuckmann}\ and\ \citenamefont {Eberhardt}(2021{\natexlab{b}})}]{Breuckmann_2021}%
  \BibitemOpen
  \bibfield  {author} {\bibinfo {author} {\bibfnamefont {N.~P.}\ \bibnamefont {Breuckmann}}\ and\ \bibinfo {author} {\bibfnamefont {J.~N.}\ \bibnamefont {Eberhardt}},\ }\href {https://doi.org/10.1109/tit.2021.3097347} {\bibfield  {journal} {\bibinfo  {journal} {IEEE Transactions on Information Theory}\ }\textbf {\bibinfo {volume} {67}},\ \bibinfo {pages} {6653–6674} (\bibinfo {year} {2021}{\natexlab{b}})}\BibitemShut {NoStop}%
\bibitem [{\citenamefont {Pryadko}\ \emph {et~al.}(2022)\citenamefont {Pryadko}, \citenamefont {Shabashov},\ and\ \citenamefont {Kozin}}]{Pryadko2022}%
  \BibitemOpen
  \bibfield  {author} {\bibinfo {author} {\bibfnamefont {L.~P.}\ \bibnamefont {Pryadko}}, \bibinfo {author} {\bibfnamefont {V.~A.}\ \bibnamefont {Shabashov}},\ and\ \bibinfo {author} {\bibfnamefont {V.~K.}\ \bibnamefont {Kozin}},\ }\href {https://doi.org/10.21105/joss.04120} {\bibfield  {journal} {\bibinfo  {journal} {Journal of Open Source Software}\ }\textbf {\bibinfo {volume} {7}},\ \bibinfo {pages} {4120} (\bibinfo {year} {2022})}\BibitemShut {NoStop}%
\bibitem [{\citenamefont {Hillmann}\ \emph {et~al.}(2024)\citenamefont {Hillmann}, \citenamefont {Berent}, \citenamefont {Quintavalle}, \citenamefont {Eisert}, \citenamefont {Wille},\ and\ \citenamefont {Roffe}}]{hillmann2024localized}%
  \BibitemOpen
  \bibfield  {author} {\bibinfo {author} {\bibfnamefont {T.}~\bibnamefont {Hillmann}}, \bibinfo {author} {\bibfnamefont {L.}~\bibnamefont {Berent}}, \bibinfo {author} {\bibfnamefont {A.~O.}\ \bibnamefont {Quintavalle}}, \bibinfo {author} {\bibfnamefont {J.}~\bibnamefont {Eisert}}, \bibinfo {author} {\bibfnamefont {R.}~\bibnamefont {Wille}},\ and\ \bibinfo {author} {\bibfnamefont {J.}~\bibnamefont {Roffe}},\ }\href@noop {} {\bibfield  {journal} {\bibinfo  {journal} {arXiv preprint arXiv:2406.18655}\ } (\bibinfo {year} {2024})}\BibitemShut {NoStop}%
\bibitem [{\citenamefont {Gottesman}(1997{\natexlab{b}})}]{gottesman1997stabilizercodesquantumerror}%
  \BibitemOpen
  \bibfield  {author} {\bibinfo {author} {\bibfnamefont {D.}~\bibnamefont {Gottesman}},\ }\href {https://arxiv.org/abs/quant-ph/9705052} {\bibinfo {title} {Stabilizer codes and quantum error correction}} (\bibinfo {year} {1997}{\natexlab{b}}),\ \Eprint {https://arxiv.org/abs/quant-ph/9705052} {arXiv:quant-ph/9705052 [quant-ph]} \BibitemShut {NoStop}%
\bibitem [{\citenamefont {Poulin}(2005)}]{PhysRevLett.95.230504}%
  \BibitemOpen
  \bibfield  {author} {\bibinfo {author} {\bibfnamefont {D.}~\bibnamefont {Poulin}},\ }\href {https://doi.org/10.1103/PhysRevLett.95.230504} {\bibfield  {journal} {\bibinfo  {journal} {Phys. Rev. Lett.}\ }\textbf {\bibinfo {volume} {95}},\ \bibinfo {pages} {230504} (\bibinfo {year} {2005})}\BibitemShut {NoStop}%
\bibitem [{\citenamefont {Quintavalle}\ \emph {et~al.}(2021)\citenamefont {Quintavalle}, \citenamefont {Vasmer}, \citenamefont {Roffe},\ and\ \citenamefont {Campbell}}]{PRXQuantum.2.020340}%
  \BibitemOpen
  \bibfield  {author} {\bibinfo {author} {\bibfnamefont {A.~O.}\ \bibnamefont {Quintavalle}}, \bibinfo {author} {\bibfnamefont {M.}~\bibnamefont {Vasmer}}, \bibinfo {author} {\bibfnamefont {J.}~\bibnamefont {Roffe}},\ and\ \bibinfo {author} {\bibfnamefont {E.~T.}\ \bibnamefont {Campbell}},\ }\href {https://doi.org/10.1103/PRXQuantum.2.020340} {\bibfield  {journal} {\bibinfo  {journal} {PRX Quantum}\ }\textbf {\bibinfo {volume} {2}},\ \bibinfo {pages} {020340} (\bibinfo {year} {2021})}\BibitemShut {NoStop}%
\bibitem [{\citenamefont {Leverrier}\ \emph {et~al.}(2015)\citenamefont {Leverrier}, \citenamefont {Tillich},\ and\ \citenamefont {Zemor}}]{Leverrier_2015}%
  \BibitemOpen
  \bibfield  {author} {\bibinfo {author} {\bibfnamefont {A.}~\bibnamefont {Leverrier}}, \bibinfo {author} {\bibfnamefont {J.-P.}\ \bibnamefont {Tillich}},\ and\ \bibinfo {author} {\bibfnamefont {G.}~\bibnamefont {Zemor}},\ }in\ \href {https://doi.org/10.1109/focs.2015.55} {\emph {\bibinfo {booktitle} {2015 IEEE 56th Annual Symposium on Foundations of Computer Science}}}\ (\bibinfo  {publisher} {IEEE},\ \bibinfo {year} {2015})\ p.\ \bibinfo {pages} {810–824}\BibitemShut {NoStop}%
\bibitem [{\citenamefont {Bombín}(2015)}]{Bomb_n_2015}%
  \BibitemOpen
  \bibfield  {author} {\bibinfo {author} {\bibfnamefont {H.}~\bibnamefont {Bombín}},\ }\bibfield  {journal} {\bibinfo  {journal} {Physical Review X}\ }\textbf {\bibinfo {volume} {5}},\ \href {https://doi.org/10.1103/physrevx.5.031043} {10.1103/physrevx.5.031043} (\bibinfo {year} {2015})\BibitemShut {NoStop}%
\bibitem [{\citenamefont {Brown}\ \emph {et~al.}(2016)\citenamefont {Brown}, \citenamefont {Nickerson},\ and\ \citenamefont {Browne}}]{Brown_2016}%
  \BibitemOpen
  \bibfield  {author} {\bibinfo {author} {\bibfnamefont {B.~J.}\ \bibnamefont {Brown}}, \bibinfo {author} {\bibfnamefont {N.~H.}\ \bibnamefont {Nickerson}},\ and\ \bibinfo {author} {\bibfnamefont {D.~E.}\ \bibnamefont {Browne}},\ }\bibfield  {journal} {\bibinfo  {journal} {Nature Communications}\ }\textbf {\bibinfo {volume} {7}},\ \href {https://doi.org/10.1038/ncomms12302} {10.1038/ncomms12302} (\bibinfo {year} {2016})\BibitemShut {NoStop}%
\bibitem [{\citenamefont {Zheng}\ and\ \citenamefont {Jiang}()}]{ZhengHan_in_prep}%
  \BibitemOpen
  \bibfield  {author} {\bibinfo {author} {\bibfnamefont {H.}~\bibnamefont {Zheng}}\ and\ \bibinfo {author} {\bibfnamefont {L.}~\bibnamefont {Jiang}},\ }\href@noop {} {}\BibitemShut {NoStop}%
\bibitem [{\citenamefont {Evra}\ \emph {et~al.}(2020)\citenamefont {Evra}, \citenamefont {Kaufman},\ and\ \citenamefont {Zémor}}]{Evra_2020}%
  \BibitemOpen
  \bibfield  {author} {\bibinfo {author} {\bibfnamefont {S.}~\bibnamefont {Evra}}, \bibinfo {author} {\bibfnamefont {T.}~\bibnamefont {Kaufman}},\ and\ \bibinfo {author} {\bibfnamefont {G.}~\bibnamefont {Zémor}},\ }in\ \href {https://doi.org/10.1109/FOCS46700.2020.00029} {\emph {\bibinfo {booktitle} {2020 IEEE 61st Annual Symposium on Foundations of Computer Science (FOCS)}}}\ (\bibinfo {year} {2020})\ pp.\ \bibinfo {pages} {218--227}\BibitemShut {NoStop}%
\bibitem [{\citenamefont {Kaufman}\ and\ \citenamefont {Tessler}(2021)}]{Kaufman_2021}%
  \BibitemOpen
  \bibfield  {author} {\bibinfo {author} {\bibfnamefont {T.}~\bibnamefont {Kaufman}}\ and\ \bibinfo {author} {\bibfnamefont {R.~J.}\ \bibnamefont {Tessler}},\ }in\ \href {https://doi.org/10.1145/3406325.3451029} {\emph {\bibinfo {booktitle} {Proceedings of the 53rd Annual ACM SIGACT Symposium on Theory of Computing}}},\ \bibinfo {series and number} {STOC 2021}\ (\bibinfo  {publisher} {Association for Computing Machinery},\ \bibinfo {address} {New York, NY, USA},\ \bibinfo {year} {2021})\ p.\ \bibinfo {pages} {1317–1329}\BibitemShut {NoStop}%
\bibitem [{\citenamefont {Dinur}\ \emph {et~al.}(2021)\citenamefont {Dinur}, \citenamefont {Evra}, \citenamefont {Livne}, \citenamefont {Lubotzky},\ and\ \citenamefont {Mozes}}]{dinur2021locallytestablecodesconstant}%
  \BibitemOpen
  \bibfield  {author} {\bibinfo {author} {\bibfnamefont {I.}~\bibnamefont {Dinur}}, \bibinfo {author} {\bibfnamefont {S.}~\bibnamefont {Evra}}, \bibinfo {author} {\bibfnamefont {R.}~\bibnamefont {Livne}}, \bibinfo {author} {\bibfnamefont {A.}~\bibnamefont {Lubotzky}},\ and\ \bibinfo {author} {\bibfnamefont {S.}~\bibnamefont {Mozes}},\ }\href {https://arxiv.org/abs/2111.04808} {\bibinfo {title} {Locally testable codes with constant rate, distance, and locality}} (\bibinfo {year} {2021}),\ \Eprint {https://arxiv.org/abs/2111.04808} {arXiv:2111.04808 [cs.IT]} \BibitemShut {NoStop}%
\bibitem [{\citenamefont {Panteleev}\ and\ \citenamefont {Kalachev}(2022{\natexlab{b}})}]{Panteleev_2022_asymptotically}%
  \BibitemOpen
  \bibfield  {author} {\bibinfo {author} {\bibfnamefont {P.}~\bibnamefont {Panteleev}}\ and\ \bibinfo {author} {\bibfnamefont {G.}~\bibnamefont {Kalachev}},\ }in\ \href {https://doi.org/10.1145/3519935.3520017} {\emph {\bibinfo {booktitle} {Proceedings of the 54th Annual ACM SIGACT Symposium on Theory of Computing}}},\ \bibinfo {series and number} {STOC 2022}\ (\bibinfo  {publisher} {Association for Computing Machinery},\ \bibinfo {address} {New York, NY, USA},\ \bibinfo {year} {2022})\ p.\ \bibinfo {pages} {375–388}\BibitemShut {NoStop}%
\bibitem [{\citenamefont {Lin}\ and\ \citenamefont {Hsieh}(2022)}]{Lin_2022}%
  \BibitemOpen
  \bibfield  {author} {\bibinfo {author} {\bibfnamefont {T.-C.}\ \bibnamefont {Lin}}\ and\ \bibinfo {author} {\bibfnamefont {M.-H.}\ \bibnamefont {Hsieh}},\ }in\ \href {https://doi.org/10.1109/ISIT50566.2022.9834679} {\emph {\bibinfo {booktitle} {2022 IEEE International Symposium on Information Theory (ISIT)}}}\ (\bibinfo {year} {2022})\ pp.\ \bibinfo {pages} {1175--1180}\BibitemShut {NoStop}%
\bibitem [{\citenamefont {Wills}\ \emph {et~al.}(2025)\citenamefont {Wills}, \citenamefont {Lin},\ and\ \citenamefont {Hsieh}}]{Wills_2025_IEEE}%
  \BibitemOpen
  \bibfield  {author} {\bibinfo {author} {\bibfnamefont {A.}~\bibnamefont {Wills}}, \bibinfo {author} {\bibfnamefont {T.-C.}\ \bibnamefont {Lin}},\ and\ \bibinfo {author} {\bibfnamefont {M.-H.}\ \bibnamefont {Hsieh}},\ }\href {https://doi.org/10.1109/TIT.2024.3503500} {\bibfield  {journal} {\bibinfo  {journal} {IEEE Transactions on Information Theory}\ }\textbf {\bibinfo {volume} {71}},\ \bibinfo {pages} {426} (\bibinfo {year} {2025})}\BibitemShut {NoStop}%
\bibitem [{\citenamefont {Cross}\ \emph {et~al.}(2024{\natexlab{b}})\citenamefont {Cross}, \citenamefont {He}, \citenamefont {Natarajan}, \citenamefont {Szegedy},\ and\ \citenamefont {Zhu}}]{Cross2024quantumlocally}%
  \BibitemOpen
  \bibfield  {author} {\bibinfo {author} {\bibfnamefont {A.}~\bibnamefont {Cross}}, \bibinfo {author} {\bibfnamefont {Z.}~\bibnamefont {He}}, \bibinfo {author} {\bibfnamefont {A.}~\bibnamefont {Natarajan}}, \bibinfo {author} {\bibfnamefont {M.}~\bibnamefont {Szegedy}},\ and\ \bibinfo {author} {\bibfnamefont {G.}~\bibnamefont {Zhu}},\ }\href {https://doi.org/10.22331/q-2024-10-18-1501} {\bibfield  {journal} {\bibinfo  {journal} {{Quantum}}\ }\textbf {\bibinfo {volume} {8}},\ \bibinfo {pages} {1501} (\bibinfo {year} {2024}{\natexlab{b}})}\BibitemShut {NoStop}%
\bibitem [{\citenamefont {Sipser}\ and\ \citenamefont {Spielman}(1996)}]{SipserSpielman1996ExpanderCodes}%
  \BibitemOpen
  \bibfield  {author} {\bibinfo {author} {\bibfnamefont {M.}~\bibnamefont {Sipser}}\ and\ \bibinfo {author} {\bibfnamefont {D.~A.}\ \bibnamefont {Spielman}},\ }\href {https://doi.org/10.1109/18.556667} {\bibfield  {journal} {\bibinfo  {journal} {IEEE Transactions on Information Theory}\ }\textbf {\bibinfo {volume} {42}},\ \bibinfo {pages} {1710} (\bibinfo {year} {1996})}\BibitemShut {NoStop}%
\bibitem [{\citenamefont {Bravyi}\ and\ \citenamefont {Hastings}(2014)}]{Bravyi_2014_homological}%
  \BibitemOpen
  \bibfield  {author} {\bibinfo {author} {\bibfnamefont {S.}~\bibnamefont {Bravyi}}\ and\ \bibinfo {author} {\bibfnamefont {M.~B.}\ \bibnamefont {Hastings}},\ }in\ \href {https://doi.org/10.1145/2591796.2591870} {\emph {\bibinfo {booktitle} {Proceedings of the Forty-Sixth Annual ACM Symposium on Theory of Computing}}},\ \bibinfo {series and number} {STOC '14}\ (\bibinfo  {publisher} {Association for Computing Machinery},\ \bibinfo {address} {New York, NY, USA},\ \bibinfo {year} {2014})\ p.\ \bibinfo {pages} {273–282}\BibitemShut {NoStop}%
\bibitem [{\citenamefont {Zeng}\ and\ \citenamefont {Pryadko}(2019)}]{PhysRevLett.122.230501}%
  \BibitemOpen
  \bibfield  {author} {\bibinfo {author} {\bibfnamefont {W.}~\bibnamefont {Zeng}}\ and\ \bibinfo {author} {\bibfnamefont {L.~P.}\ \bibnamefont {Pryadko}},\ }\href {https://doi.org/10.1103/PhysRevLett.122.230501} {\bibfield  {journal} {\bibinfo  {journal} {Phys. Rev. Lett.}\ }\textbf {\bibinfo {volume} {122}},\ \bibinfo {pages} {230501} (\bibinfo {year} {2019})}\BibitemShut {NoStop}%
\bibitem [{\citenamefont {Breuckmann}\ and\ \citenamefont {Eberhardt}(2021{\natexlab{c}})}]{PRXQuantum.2.040101}%
  \BibitemOpen
  \bibfield  {author} {\bibinfo {author} {\bibfnamefont {N.~P.}\ \bibnamefont {Breuckmann}}\ and\ \bibinfo {author} {\bibfnamefont {J.~N.}\ \bibnamefont {Eberhardt}},\ }\href {https://doi.org/10.1103/PRXQuantum.2.040101} {\bibfield  {journal} {\bibinfo  {journal} {PRX Quantum}\ }\textbf {\bibinfo {volume} {2}},\ \bibinfo {pages} {040101} (\bibinfo {year} {2021}{\natexlab{c}})}\BibitemShut {NoStop}%
\bibitem [{\citenamefont {Ben-Sasson}\ and\ \citenamefont {Sudan}(2004)}]{bensasson2004robustlocallytestablecodes}%
  \BibitemOpen
  \bibfield  {author} {\bibinfo {author} {\bibfnamefont {E.}~\bibnamefont {Ben-Sasson}}\ and\ \bibinfo {author} {\bibfnamefont {M.}~\bibnamefont {Sudan}},\ }\href {https://arxiv.org/abs/cs/0408066} {\bibinfo {title} {Robust locally testable codes and products of codes}} (\bibinfo {year} {2004}),\ \Eprint {https://arxiv.org/abs/cs/0408066} {arXiv:cs/0408066 [cs.IT]} \BibitemShut {NoStop}%
\bibitem [{\citenamefont {Kalachev}\ and\ \citenamefont {Panteleev}(2023)}]{kalachev2023twosidedrobustlytestablecodes}%
  \BibitemOpen
  \bibfield  {author} {\bibinfo {author} {\bibfnamefont {G.}~\bibnamefont {Kalachev}}\ and\ \bibinfo {author} {\bibfnamefont {P.}~\bibnamefont {Panteleev}},\ }\href {https://arxiv.org/abs/2206.09973} {\bibinfo {title} {Two-sided robustly testable codes}} (\bibinfo {year} {2023}),\ \Eprint {https://arxiv.org/abs/2206.09973} {arXiv:2206.09973 [cs.IT]} \BibitemShut {NoStop}%
\bibitem [{\citenamefont {Dinur}\ \emph {et~al.}(2006)\citenamefont {Dinur}, \citenamefont {Sudan},\ and\ \citenamefont {Wigderson}}]{10.1007/11830924_29}%
  \BibitemOpen
  \bibfield  {author} {\bibinfo {author} {\bibfnamefont {I.}~\bibnamefont {Dinur}}, \bibinfo {author} {\bibfnamefont {M.}~\bibnamefont {Sudan}},\ and\ \bibinfo {author} {\bibfnamefont {A.}~\bibnamefont {Wigderson}},\ }in\ \href {https://doi.org/10.1007/11830924_29} {\emph {\bibinfo {booktitle} {Proceedings of the 9th International Conference on Approximation Algorithms for Combinatorial Optimization Problems, and 10th International Conference on Randomization and Computation}}},\ \bibinfo {series and number} {APPROX'06/RANDOM'06}\ (\bibinfo  {publisher} {Springer-Verlag},\ \bibinfo {address} {Berlin, Heidelberg},\ \bibinfo {year} {2006})\ p.\ \bibinfo {pages} {304–315}\BibitemShut {NoStop}%
\bibitem [{\citenamefont {Grassl}\ and\ \citenamefont {Roetteler}(2013)}]{Grassl_2013}%
  \BibitemOpen
  \bibfield  {author} {\bibinfo {author} {\bibfnamefont {M.}~\bibnamefont {Grassl}}\ and\ \bibinfo {author} {\bibfnamefont {M.}~\bibnamefont {Roetteler}},\ }in\ \href {https://doi.org/10.1109/ISIT.2013.6620283} {\emph {\bibinfo {booktitle} {2013 IEEE International Symposium on Information Theory}}}\ (\bibinfo {year} {2013})\ pp.\ \bibinfo {pages} {534--538}\BibitemShut {NoStop}%
\bibitem [{\citenamefont {Aaronson}\ \emph {et~al.}(2015)\citenamefont {Aaronson}, \citenamefont {Grier},\ and\ \citenamefont {Schaeffer}}]{aaronson2015classificationreversiblebitoperations}%
  \BibitemOpen
  \bibfield  {author} {\bibinfo {author} {\bibfnamefont {S.}~\bibnamefont {Aaronson}}, \bibinfo {author} {\bibfnamefont {D.}~\bibnamefont {Grier}},\ and\ \bibinfo {author} {\bibfnamefont {L.}~\bibnamefont {Schaeffer}},\ }\href {https://arxiv.org/abs/1504.05155} {\bibinfo {title} {The classification of reversible bit operations}} (\bibinfo {year} {2015}),\ \Eprint {https://arxiv.org/abs/1504.05155} {arXiv:1504.05155 [quant-ph]} \BibitemShut {NoStop}%
\bibitem [{\citenamefont {Quintavalle}\ \emph {et~al.}(2023{\natexlab{b}})\citenamefont {Quintavalle}, \citenamefont {Webster},\ and\ \citenamefont {Vasmer}}]{Quintavalle2023partitioningqubits}%
  \BibitemOpen
  \bibfield  {author} {\bibinfo {author} {\bibfnamefont {A.~O.}\ \bibnamefont {Quintavalle}}, \bibinfo {author} {\bibfnamefont {P.}~\bibnamefont {Webster}},\ and\ \bibinfo {author} {\bibfnamefont {M.}~\bibnamefont {Vasmer}},\ }\href {https://doi.org/10.22331/q-2023-10-24-1153} {\bibfield  {journal} {\bibinfo  {journal} {{Quantum}}\ }\textbf {\bibinfo {volume} {7}},\ \bibinfo {pages} {1153} (\bibinfo {year} {2023}{\natexlab{b}})}\BibitemShut {NoStop}%
\bibitem [{\citenamefont {Quintavalle}\ and\ \citenamefont {Campbell}(2022)}]{Quintavalle_2022_Reshape}%
  \BibitemOpen
  \bibfield  {author} {\bibinfo {author} {\bibfnamefont {A.~O.}\ \bibnamefont {Quintavalle}}\ and\ \bibinfo {author} {\bibfnamefont {E.~T.}\ \bibnamefont {Campbell}},\ }\href {https://doi.org/10.1109/TIT.2022.3184108} {\bibfield  {journal} {\bibinfo  {journal} {IEEE Transactions on Information Theory}\ }\textbf {\bibinfo {volume} {68}},\ \bibinfo {pages} {6569} (\bibinfo {year} {2022})}\BibitemShut {NoStop}%
\bibitem [{\citenamefont {Chamberland}\ and\ \citenamefont {Campbell}(2022)}]{PRXQuantum.3.010331}%
  \BibitemOpen
  \bibfield  {author} {\bibinfo {author} {\bibfnamefont {C.}~\bibnamefont {Chamberland}}\ and\ \bibinfo {author} {\bibfnamefont {E.~T.}\ \bibnamefont {Campbell}},\ }\href {https://doi.org/10.1103/PRXQuantum.3.010331} {\bibfield  {journal} {\bibinfo  {journal} {PRX Quantum}\ }\textbf {\bibinfo {volume} {3}},\ \bibinfo {pages} {010331} (\bibinfo {year} {2022})}\BibitemShut {NoStop}%
\end{thebibliography}
\end{document}